\newcommand{\mt}{\@ifnextchar^{}{^{}}}
\newcommand{\mc}{\mathcal}
\newcommand{\E}{\mathop{\mathbf{E}\mbox{}}\limits}
\renewcommand{\S}{\mathbf{S}}
\newcommand{\I}[3][]{\mathbf{I}_{#1}\left(#2;#3\right)}
\newcommand{\ct}{^\dagger}
\newcommand{\id}{\mathbb{I}\mt}
\newcommand{\midd}{\mathrel{\|}}
\newcommand{\ket}[1]{|#1\rangle}
\newcommand{\bra}[1]{\langle#1|}
\newcommand{\braket}[2]{\langle #1|#2\rangle}
\newcommand{\Tr}{\mathrm{Tr}}
\DeclareMathOperator{\diag}{diag}
\DeclareMathOperator{\Diag}{Diag}
\newcommand{\Var}{\mathrm{Var}}
\newcommand{\lN}[2][]{\big\|#2\big\|^{#1}}
\newcommand{\rhot}{\rho^{(t)}}
\newcommand{\taut}{\tau^{(t)}}
\newtheorem{theorem}{Theorem}
\newtheorem{lemma}{Lemma}[section]
\newtheorem{proposition}[lemma]{Proposition}
\newtheorem{corollary}[lemma]{Corollary}
\newtheorem{claim}[lemma]{Claim}
\theoremstyle{remark}
\theoremstyle{definition}
\newtheorem*{definition}{Definition}
\newtheorem*{example}{Example}
\title{Memory-Sample Lower Bounds for Learning with Classical-Quantum Hybrid Memory}
\author{Qipeng Liu \thanks{Simons Institute for Theory of Computing. E-mail: qipengliu0@gmail.com. Research supported in part by the Simons Institute for the Theory of Computing, through a Quantum Postdoctoral Fellowship, by the DARPA SIEVE-VESPA grant No.HR00112020023 and by the NSF QLCI program through grant number OMA-2016245.}
\and Ran Raz \thanks{Princeton University. E-mail: ranr@cs.princeton.edu. Research supported by a Simons Investigator Award and by the National Science Foundation grants No. CCF-1714779, CCF-2007462.}
\and Wei Zhan \thanks{Princeton University. E-mail: weizhan@cs.princeton.edu. Research supported by a Simons Investigator Award and by the National Science Foundation grants No. CCF-1714779, CCF-2007462.}}
\date{}
\begin{document}
\maketitle 

\begin{abstract}
In a work by Raz (J. ACM and FOCS 16), it was proved that any algorithm for parity learning on $n$ bits requires either $\Omega(n^2)$ bits of classical memory or an exponential number (in~$n$) of random samples. A line of recent works continued that research direction and showed that for a large collection of classical learning tasks, either super-linear classical memory size or super-polynomially many samples are needed. All these works consider learning algorithms as classical branching programs, which perform classical computation within bounded memory.

However, these results do not capture all physical computational models, remarkably, quantum computers and the use of quantum memory. It leaves the possibility that a small piece of quantum memory could significantly reduce the need for classical memory or samples and thus completely change the nature of the classical learning task. Despite the recent research on the necessity of quantum memory for intrinsic quantum learning problems like shadow tomography and purity testing, the role of quantum memory in classical learning tasks remains obscure. 

In this work, we study classical learning tasks in the presence of quantum memory. We prove that any quantum algorithm with both, classical memory and quantum memory, for parity learning on $n$ bits, requires either $\Omega(n^2)$ bits of classical memory or $\Omega(n)$ bits of quantum  memory or an exponential number of samples. In other words, the memory-sample lower bound for parity learning remains qualitatively the same, even if the learning algorithm can use, in addition to the classical memory, a quantum memory of size $c n$ (for some constant $c>0$).

Our result is more general and applies to many other classical learning tasks.
Following previous works, we represent by the matrix $M: A \times X \to \{-1,1\}$ the following learning task. An unknown $x$ is sampled uniformly at random from a concept class $X$, and a learning algorithm tries to uncover $x$ by seeing streaming of random samples $(a_i, b_i = M(a_i, x))$ where for every $i$, $a_i\in A$ is chosen uniformly at random.
Assume that $k,\ell,r$ are integers such that any submatrix of $M$ of at least $2^{-k}\cdot|A|$ rows and at least $2^{-\ell}\cdot|X|$ columns, has a bias of at most $2^{-r}$. We prove that any algorithm with classical and quantum hybrid memory for the learning problem corresponding to $M$ needs either (1) $\Omega(k \cdot \ell)$ bits of classical memory, or (2) $\Omega(r)$ qubits of quantum memory, or (3) $2^{\Omega(r)}$ random samples, to achieve a success probability at least $2^{-O(r)}$. 

Our results refute the possibility that a small amount of quantum memory significantly reduces the size of classical memory needed for efficient learning on these problems. Our results also imply improved security of several existing cryptographical protocols in the bounded-storage model (protocols that are based on parity learning on $n$ bits), proving that security holds even in the presence of a quantum adversary with at most $c n^2$ bits of
classical memory and $c n$ bits of quantum memory (for some constant $c>0$). 

\end{abstract}

\section{Introduction}

Memory plays an important role in learning. Starting from the seminal works by Shamir~\cite{shamir2014fundamental} and  Steinhardt, Valiant and Wager~\cite{steinhardt2016memory}, a sequence of works initiates and deepens the study of lower bounds for learning under memory constraints.
Steinhardt, Valiant, and Wager~\cite{steinhardt2016memory} conjectured that 
in order to learn an unknown $n$-bit string from samples of random-subset parity, an algorithm needs either memory-size quadratic in $n$ or exponentially many random samples (also in $n$). This conjecture was later on proved by Raz~\cite{R16},
showing for the first time that for some learning problems, super-linear memory size is required for efficient learning. 
This result was then generalized to a broad class of learning problems~\cite{kol2017time,R17,MoshkovitzM18,BeameGY18,GRT18,sharan2019memory,GargKR20,GargKLR21}. 

Although we have a comprehensive understanding of the (in)feasibility of learning under limitations on particular computation resource (memory), the previous works mentioned above do not capture all physical computational models; most notably, quantum computation and the power of quantum memory. 
Many researchers believe that large-scale quantum computers will eventually become viable. 
Recent experiments demonstrated quantum advantages, for example~\cite{arute2019quantum}, and suggested that there are possibly no fundamental barriers to achieving quantum memory and quantum computers. 
Questions on the role of quantum memory in learning were proposed by Wright in the context of general state tomography~\cite{wright2016learn} and by Aaronson for shadow tomography~\cite{aaronson2018shadow}. A line of works~\cite{haah2016sample,bubeck2020entanglement,huang2020predicting,chen2022exponential,aharonov2022quantum,chen2022toward} pioneer the idea and show either polynomial or exponential separations for learning with/without quantum memory, but all for intrinsic quantum learning tasks like state tomography, shadow tomography and purity testing. 

In light of the above, it is appealing 
to consider classical learning tasks in the presence of quantum memory, as well as hybrid classical-quantum memory. 
A direct implication of all aforementioned classical results only gives trivial results. As $k$ qubits of memory can always be efficiently simulated by $\sim 2^k$ classical bits, we can only conclude (say, for parity learning) that either $\sim 2 \log n$-qubit quantum memory or exponentially many samples are needed.
Prior to our work, it could have been the case that even if only a very small size quantum memory was available, it might have significantly reduced the need for classical memory and led to an efficient learning algorithm.


%

In this work, we prove memory-sample lower bounds in the presence of hybrid memory for a wide collection of classical learning problems. As in~\cite{R17,GRT18}, we will represent a learning problem by a matrix $M: A \times X \to \{-1,1\}$ whose columns correspond to concepts in the concept class $X$ and rows correspond to random samples. In the learning task, an unknown concept $x \in X$ is sampled uniformly at random and each random sample is given as $(a_i, b_i) = (a_i, M(a_i, x))$ for a uniformly picked $a_i \in A$. The learner's goal is to uncover $x$. In~\cite{GRT18}, it is proved  that when the underlying matrix $M$ is a $(k, \ell)$-$L_2$ two source extractor\footnote{Roughly speaking, this means that every submatrix $M'$ of $M$ with number of rows at least $2^{-k} |A|$ and number of columns at least $2^{-\ell} |X|$  has a relative bias at most $2^{-r}$.} with error $2^{-r}$, a learning algorithm requires either $\Omega(k \cdot \ell)$ bits of memory or $2^{\Omega(r)}$ samples to achieve a success probability at least $2^{-O(r)}$ for the learning task.

\subsection{Our Results}

In this work, we model a quantum learning algorithm as a program with hybrid memory consisting of $q$ qubits of quantum memory and $m$ bits of classical memory. At each stage, a random sample $(a_i, b_i = M(a_i, x))$ is given to the algorithm. The quantum learning algorithm applies an arbitrary quantum channel to the hybrid memory, controlled by the random sample. Although the channel can be arbitrary, we impose the outcome to be a hybrid classical-quantum state of at most $q$ qubits and $m$ bits. 
We stress that there is no limitation on the complexity of the quantum channel (and this only makes our results stronger as we are proving here lower bounds for such algorithms).

With the above model, we give the following main theorem. 
\begin{theorem}[Main Theorem, Informal]\label{thm:main_informal}
Let  $M: A \times X \to \{-1,1\}$ be a matrix. If $M$ is a $(k, \ell)$-$L_2$ two source extractor with error $2^{-r}$, a quantum learning algorithm requires either 
\begin{enumerate}
    \item $\Omega(k \cdot \ell)$ bits of classical memory; or,
    \item $\Omega(r)$ qubits of quantum memory; or,
    \item $2^{\Omega(r)}$ samples,
\end{enumerate}
to succeed with a probability of at least $2^{-O(r)}$ in the corresponding learning task. 
\end{theorem}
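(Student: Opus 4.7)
The plan is to extend the potential-function framework of \cite{GRT18} to the hybrid classical-quantum setting. First, I would model the learning algorithm as a \emph{quantum branching program}: a layered structure whose nodes at layer $t$ are labeled by classical-memory strings $v\in\{0,1\}^m$, and attached to each node is a family of sub-normalized density matrices $\{\rho_{v,x}^{(t)}\}_{x\in X}$ on $q$ qubits. Here $\rho_{v,x}^{(t)}$ represents the quantum memory conditional on reaching classical state $v$ at time $t$ when the true concept is $x$. A single step samples $(a,b)$, applies an arbitrary quantum channel (depending on $(a,b)$) to the hybrid memory, and reads the new classical register as a measurement outcome. Expanding each channel in Kraus form lets me view a step as a linear map on the ensemble $\{\rho_{v,x}^{(t)}\}$, which is the object the analysis will track.

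Next, I would define a potential $\Phi_t$ that measures progress toward identifying $x$ across both memory types. In the spirit of \cite{GRT18}, a natural candidate is a suitably truncated version of $\sum_v\sum_{x,x'}\langle \rho_{v,x}^{(t)},\rho_{v,x'}^{(t)}\rangle_{\mathrm{HS}}\cdot K_M(x,x')$, where $K_M(x,x')$ is a GRT-style correlation kernel derived from $M$, and the truncation removes $(v,x)$ pairs whose conditional probability has already grown past some threshold. Two features are essential: the Hilbert--Schmidt inner product is the natural $L_2$ surrogate for products of classical probabilities, and the definition reduces to the classical GRT potential when $q=0$. At $t=0$ this potential is small under the uniform prior, while any algorithm that identifies $x$ with probability $2^{-O(r)}$ should force $\Phi_T$ to be correspondingly large.

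The main technical step, and where I expect the principal difficulty, is bounding the per-step growth of $\Phi_t$. Expanding the step channel in its Kraus representation reduces the task to controlling expressions of the form $\mathbb{E}_a\sum_\alpha\mathrm{tr}\bigl(K_\alpha(a)\,\rho_{v,x}\,K_\alpha(a)^\dagger\,K_\alpha(a)\,\rho_{v,x'}\,K_\alpha(a)^\dagger\bigr)\cdot M(a,x)M(a,x')$, and then invoking the $(k,\ell)$-$L_2$ extractor hypothesis on the row-sample and column-concept distributions that each pair $(v,x)$ induces, exactly as in the classical analysis, with the Hilbert--Schmidt structure absorbing the non-commutativity. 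The obstacle is that quantum memory can correlate distinct concepts $x,x'$ in ways classical memory cannot --- for instance, two concepts can produce nearly orthogonal high-rank mixed states that a classical projection would merge. Controlling this will require showing that the Hilbert--Schmidt overlaps are inflated over their classical counterparts by at most a factor depending on $2^q$, which is precisely the mechanism through which the $\Omega(r)$ qubit lower bound should emerge.

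A standard bad-set argument as in \cite{GRT18} would then isolate the classical branches on which the potential grows by too much in a single step and bound their aggregate measure via the extractor property, the quantum dimension $2^q$ entering only through the per-step amplification factor. Combining the end-of-execution lower bound on $\Phi_T$ with the amplified per-step bound should yield, after rearrangement, the desired trichotomy: either $m=\Omega(k\ell)$, or $q=\Omega(r)$, or $T=2^{\Omega(r)}$ random samples are required. Getting this clean will likely hinge on choosing the truncation threshold and the kernel $K_M$ so that the two memory capacities appear additively inside the exponent, ensuring that neither resource can fully compensate for the other.
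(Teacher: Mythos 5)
Your proposal departs substantially from the paper's route, and at two points it would run into obstacles the paper explicitly identifies and works around.

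\paragraph{Different route.} The paper deliberately abandons the potential-function bookkeeping of \cite{GRT18} in favor of a ``badness levels'' counter attached to the \emph{classical} memory register $W$. The reason is stated explicitly: a bad event (\cref{eq:badevent}) is a threshold condition that is quadratic, not linear, in the quantum state $\ket{v}$, so any attempt to aggregate it into an operational potential $\Phi_t$ over the quantum register picks up errors that escalate with the number of stages. Your $\Phi_t=\sum_v\sum_{x,x'}\langle\rho_{v,x}^{(t)},\rho_{v,x'}^{(t)}\rangle_{\mathrm{HS}}K_M(x,x')$ is precisely such an operational aggregate; the paper's replacement is to define ``$(w,a)$ is bad'' with an existential quantifier over all $\ket{v}\in\mc V$ and store a classical count of bad events in a new register $B$.

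\paragraph{Gap 1: the truncation you propose fails.} You suggest removing $(v,x)$ pairs whose conditional probability has grown past a threshold. This is the $\ell_\infty$ truncation, performed pointwise in $x$, and the paper devotes a whole passage (the Example in the ``Obstacles'' section) to showing that it breaks in the quantum setting: projecting to $v^\perp$ \emph{conditioned on} a single $x$ can drastically alter the conditional distributions $P_{X|u}$ on every state $u$ non-orthogonal to $v$, and in a typical random-ensemble configuration it simply erases the whole system. The paper's fix is two-layered (classical $g(x,w)$ truncation on $W$ alone, followed by quantum projection against $G_\infty\wedge G_t$), and the auxiliary condition $G_t$---that the trace after the classical projection hasn't dropped by more than a $(1-2^{-r})$ factor---is essential for the subsequent $\ell_2$-norm control (\cref{lemma:2norm}). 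Your plan has no counterpart to this, and without it the step growth estimate would not hold.

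\paragraph{Gap 2: the wrong mechanism for the $\Omega(r)$ qubit bound.} You predict the quantum dimension enters as a ``$2^q$ inflation factor'' on the Hilbert--Schmidt overlaps. That is not what happens. The dimension $2^q$ enters through the anti-concentration bound for Gaussian quadratic forms (\cref{lemma:anticc}, from Carbery--Wright), which is used in \cref{lemma:badness}: if for many rows $a$ there is \emph{some} $\ket{v_a}$ satisfying the bad-event inequality, then a uniformly random $\ket{v}$ in a suitable subspace satisfies it simultaneously for at least half of those $a$, contradicting the extractor property. The anti-concentration probability carries a $2^{(q+r-r')/2}$ factor, and this is exactly what forces $q=O(r)$. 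A straightforward ``factor of $2^q$'' on overlaps would give a much weaker dependence and would not localize the quantum contribution to a threshold condition on a quadratic form. Also, showing that the existential witness $\ket{v}$ still satisfies $\langle P_{X|v,w},P\rangle\geq 2^{-n}$ after restriction to a good subspace $\mc V'$ requires a second projection argument inside \cref{lemma:badness}, something your plan does not provide for.

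In short, your plan correctly identifies that classical branches $v$ (the paper's $w$) should carry the accounting and that an $L_2$-surrogate inner product must govern progress, but the quantum-specific obstacles---failure of per-$x$ truncation and nonlinearity of the bad event in the quantum state---are exactly the ones your plan does not address, and the mechanism by which $q=\Omega(r)$ is supposed to emerge is wrong.
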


Our main theorem implies that for many learning problems, the availability of a quantum memory of size up to $\Omega(r)$, does not reduce the size of classical memory or the number of samples that are needed. As coherent quantum memory is challenging for near-term intermediate-scale quantum computers and is probably expensive even if and when quantum computers are widely viable, the impact of quantum memory is further limited for these learning problems. 

To make the theorem more precise, let us take parity learning as an example. The above theorem says that a quantum learning algorithm needs either $\Omega(n^2)$ bits of memory, or $\Omega(n)$ qubits of quantum memory, to conduct efficient learning; otherwise, it requires $2^{\Omega(n)}$ random samples. At first glance, it seems that the constraint on quantum memory is trivial: if the target is to learn an $n$-bit unknown secret, a linear amount of memory always seems necessary to store the secret. However, noticing that our main theorem applies to quantum learning algorithms with hybrid memory and rules out algorithms with $n^2/1000$ bits and $n/1000$ qubits of hybrid memory for parity learning, the main theorem yields non-trivial and compelling memory-sample lower bounds. 
Note also that our results (and previous results) are valid even if the goal is to output only one bit of the secret.
Currently, we do not know whether our main theorem is tight. For parity learning, we are not aware of any quantum learning algorithm that uses only $O(n)$ qubits of quantum memory. We leave closing the gap as a fascinating open question. 

The main theorem naturally applies to other learning problems considered in~\cite{GRT18}, including learning sparse parities, learning from sparse linear equations, and many others. We do not present an exhaustive list here but refer the readers to~\cite{GRT18} for more details.

Along the way, we propose a new approach for proving the classical memory-sample lower bounds. We call this approach, the ``badness levels'' method. The approach is technically equivalent to the previous approach in~\cite{R17,GRT18} but is conceptually simpler to work with and we are able to lift it to the quantum case.

We note that proving a linear lower bound on the size of the quantum memory, without classical memory, is significantly simpler (but to the best of our knowledge such a proof has not appeared prior to our work).
We present such a proof in Appendix~\ref{appendix:C}.
In Appendix~\ref{appendix:C}, we state and prove Theorem~\ref{thm:C} that shows 
a simpler proof for a linear lower bound on the quantum-memory size (without classical memory).
While Theorem~\ref{thm:C} is qualitatively weaker than our main result in most cases, as it only gives a lower bound for programs with only quantum memory but without a (possibly quadratic) classical memory, Theorem~\ref{thm:C} is technically incomparable and is stated in terms of quantum extractors, rather than classical extractors.
Additionally, the proof of Theorem~\ref{thm:C} is significantly simpler than the proof of our main theorem.

\paragraph{Implications to Cryptography in the Bounded-Storage Model.}
Since learning theory and cryptography can be viewed as two sides of the same coin, our theorem also lifts the security of many existing cryptographical protocols in the bounded-storage model (protocols that are based on parity learning) to the quantum setting. To our best knowledge, these are the first proofs of classical cryptographical protocols being secure against space-bounded quantum attackers.\footnote{On the other hand, there are known examples of classically-secure bounded-storage protocols that are breakable with an exponentially smaller amount of quantum memory.
 \cite{Gavinsky08}.} We elaborate more below.

Cryptography in the (classical) bounded storage model was first proposed by Maurer~\cite{maurer1992conditionally}. In such a model, no computational assumption is needed. Honest execution is performed through a long streaming sequence of bits. Eavesdroppers have bounded storage and limited capability of storing conversations, thus cannot break the protocol. A line of works~\cite[\dots]{cachin1997unconditional,aumann1999information,aumann2002everlasting,ding2002hyper,dziembowski2002tight,lu2002hyper,dziembowski2004generating,moran2004non,harnik2006everlasting,dodis2021speak,dodis2022authentication} builds  efficient and secure protocols for key agreement, oblivious transfer, bit commitment and time stamping in that model. 

Based on the memory-sample lower bounds for parity learning of $n$ bits, \cite{R16} suggested an encryption scheme in the bounded-storage model.
Guan and Zhandry~\cite{guan2019simple} proposed key agreement, oblivious transfer and bit commitment with improved rounds and better correctness, against attackers with up to $O(n^2)$ bits of memory. Following a similar idea, Liu and Vusirikala~\cite{liu2021secure} showed that semi-honest multiparty computation could be achieved against attackers with up to $O(n^2)$ bits of memory. More recently, Dodis, Quach, and Wichs 
\cite{dodis2022authentication} considered message authentication in the bounded storage model based on parity learning. Our result on parity learning gives a direct lift on all the results above. When the cryptographic protocols are based on parity learning of $n$ bits (often treated as a security parameter), our result shows that security holds even in the presence of a quantum adversary with at most $O(n^2)$ bits of classical memory and $O(n)$ qubits of quantum memory. 

Despite many previous works on cryptography in the quantum bounded storage model~\cite{damgaard2007tight,damgaard2007secure,schaffner2007cryptography,damgaard2008cryptography,wehner2008composable,pironio2013security,broadbent2021device}, they all rely on streaming quantum states. Our memory-sample lower bounds give for the first time a rich class of classical cryptographical schemes (key agreement, oblivious transfer, and bit commitment) secure against space-bounded quantum attackers.


	\section{Proof Overview}
	
	\subsection{Recap of Proofs for Classical Lower Bounds}
	Since our proof builds on the previous line of works on classical memory-sample lower bounds for learning, specifically, on the proof technique of \cite{R17,GRT18}, we provide a brief review of these proofs, using parity learning \cite{R16} as an example. In below, $M(a,x)$ denotes the inner product of $a$ and $x$ in $\mathbb{F}_2$.
	
	Consider a classical branching program that tries to learn an unknown and uniformly random $x\in\{0,1\}^n$ from samples $(a,b)$, where $a\in\{0,1\}^n$ is uniformly random and $b=M(a,x)$. We can associate every state $v$ of the branching program with a distribution $P_{X|v}$ over $\{0,1\}^n$, indicating the distribution of $x$ conditioned on reaching that state. At the initial state, without any information about $x$, the distribution is uniform (which has the smallest possible $\ell_2$-norm). Along a computational path on the branching program, the distribution $P_{X|v}$ evolves and eventually gets concentrated (with large $\ell_2$-norms) in order to output $x$ correctly. Therefore, during the evolution, $P_{X|v}$ should at some stage have mildly large $\ell_2$-norms ($2^{\varepsilon n}$ times larger than uniform for some small constant $\varepsilon>0$). If we set such a distribution as a target, the distribution is hard to achieve with random samples. Only with $2^{-\Omega(n)}$ probability, the branching program can make significant progress towards the target; while most of the time a sample just splits the distributions (both the current and the target distribution) into two even parts, and that does not help much in getting closer to the target distribution (with large $\ell_2$ norm).
	
	To put it more rigorously, we examine the evolution of the inner product
	\[\langle P_{X|v}, P\rangle=\sum_{x\in\{0,1\}^n} P_{X|v}(x)\cdot P(x)\]
	between the distribution $P_{X|v}$ on the current state $v$, and a target distribution $P$. Receiving a sample $(a,b)$ implies that $M(a,x)=b$, hence only the part of $P_{X|v}$ supported on such $x$ proceeds. If this part is close to $\frac{1}{2}$ probability, we say that $a$ divides $P_{X|v}$ evenly. Denoting the new distribution as $P_{X|v}^{(a,b)}$, after proper normalization the new inner product is
	\begin{equation}\label{eq:intuition}
	    \langle P_{X|v}^{(a,b)}, P\rangle
	    =\sum_{\substack{x\in\{0,1\}^n\\M(a,x)=b}} P_{X|v}(x)\cdot P(x)\bigg/
	    \sum_{\substack{x\in\{0,1\}^n\\M(a,x)=b}} P_{X|v}(x).
	\end{equation}
	Ideally, both $P_{X|v}$ and the point-wise product vector $P_{X|v}\cdot P$ should have reasonably small $\ell_2$-norms. Due to the extractor property of $M$, most of $a\in\{0,1\}^n$ should divide both vectors evenly, and thus the denominator is close to $\frac{1}{2}$ while the enumerator is close to $\frac{1}{2}\langle P_{X|v}, P\rangle$. That means, given a uniformly random $a$, we get limited progress on the inner product. On the other hand, from $\langle U,P\rangle=2^{-n}$ with uniform distribution $U$ to $\langle P,P\rangle=2^{2\varepsilon n}\cdot 2^{-n}$, the branching program needs to make multiple steps of progression. Therefore it happens with an extremely small probability.
	
	To ensure that the above statement goes smoothly, we require the following properties for every state $v$ in the branching program:
	\begin{itemize}
	    \item The $\ell_2$-norm $\lN{P_{X|v}}_2$ is small.
	    \item The $\ell_2$-norm $\lN{P_{X|v}\cdot P}_2$ is small, which is implied when the $\ell_\infty$-norm $\lN{P_{X|v}}_\infty$ is small.
	    \item The denominator in \cref{eq:intuition} is bounded away from $0$ for every sample $(a,b)$.
	\end{itemize}
	These properties do not hold by themselves. Instead, we execute a \emph{truncation} procedure on the branching program \emph{before} choosing a target distribution. More specifically, the branching program is modified so that it stops whenever it:
	\begin{itemize}
	    \item ($\ell_2$ truncation): Reaches a state $v$ with large $\lN{P_{X|v}}_2$;
	    \item ($\ell_\infty$ truncation): Reaches a state $v$ with large $P_{X|v}(x)$ when the unknown concept is $x$;
	    \item (Sample truncation): Or, for the next sample $(a,b)$, $a$ does not divide $P_{X|v}$ evenly.
	\end{itemize}
	It turns out that after $\ell_2$ truncation, the other two truncation steps add $2^{-\Omega(n)}$ error in each stage of the branching program. Therefore the proof boils down to proving a $2^{-\Omega(n^2)}$ bound on the probability of reaching a state with large $\lN{P_{X|v}}_2$,
	from which by a standard union bound, we can prove the memory-sample lower bounds for parity learning: either $2^{\Omega(n)}$ samples or $\Omega(n^2)$ bits of memory are necessary. 
	
	\subsection{Badness Levels}
	As mentioned above, to bound the probability of reaching a state with a large $\ell_2$-norm, the basic idea is to fix its distribution as the target distribution $P$, and bound the increment of the inner product $\langle P_{X|v}, P\rangle$. This was done in \cite{R17,GRT18} by designing a potential function that tracks the average of $\langle P_{X|v}, P\rangle^k$ for some $k=\Theta(n)$, where the average is over states $v$ in the same stage of the branching program. Here we propose another approach using the concept of \emph{badness levels}. Although it is technically equivalent to the potential function approach in the classical case, it is more pliable and easier to be adapted to the quantum case. We view this approach as a separate contribution of our work.
	
	We first define a \emph{bad event} to be a pair $(v,a)$ of the state $v$ and the upcoming part of the sample~$a$, such that $\langle P_{X|v}, P\rangle\geq 2^{-n}$, and for one of the two possible outcomes $b$,
	\begin{equation} \label{eq:badevent}
	    \sum_{\substack{x\in\{0,1\}^n\\M(a,x)=b}} P_{X|v}(x)\cdot P(x)\geq 
	\left(\frac{1}{2}+2^{-\delta n}\right)\cdot \langle P_{X|v}, P\rangle
	\end{equation}
	with some small constant $\delta$. In other words, the inner product $\langle P_{X|v}, P\rangle$ is large enough, while not being divided evenly by $a$. From \cref{eq:intuition} we know that the inner product gets at most roughly doubled through a bad event. In contrast, in the good case, the inner product either gets a mere $(1+2^{-\delta n})$ multiplicative factor or is already smaller than the baseline $2^{-n}$. Also, the extractor property of $M$ ensures that for every state $v$, over uniformly random $a$, the bad event happens with at most $2^{-\Omega(n)}$ probability. 
	
	Now, the badness level $\beta(v)$ of a state $v$ keeps track of how many times the computational path went through bad events before reaching $v$.\footnote{For now we think of $\beta(v)$ as a natural number. In the actual proof, $\beta(v)$ is a distribution on natural numbers, as for different computational paths reaching the same state, the count of bad events can be different.} The above observations on the bad events imply that (omitting the smaller factors):
	\begin{itemize}
	    \item For every state $v$, $\langle P_{X|v}, P\rangle$ is bounded by $2^{\beta(v)}\cdot 2^{-n}$;
	    \item Heading to the next stage, $\beta(v)$ increases by $1$ with probability $2^{-\Omega(n)}$.
	\end{itemize}
	Therefore at each stage, the total weight of states with badness level $\beta$ is at most $2^{-\Omega(\beta n)}$. Thus any state with $\langle P_{X|v}, P\rangle\geq 2^{2\varepsilon n}\cdot 2^{-n}$ must have $2^{-\Omega(n^2)}$ probability.
	
	\subsection{Obstacles for Proving Quantum Lower Bounds}
    In this section, we present an attempt to prove the same $2^{\Omega(n)}$-sample or $\Omega(n^2)$-quantum-memory lower bound for the pure quantum case. Along the way we identify some obstacles to proving memory-sample lower bounds for \emph{quantum} learning algorithms, and in the next section we show how to overcome these obstacles  while proving lower bounds for \emph{hybrid} learning algorithms, with quadratic-size classical-memory and linear-size  quantum-memory.
    
    Following the same framework as the above described proof for the classical case, we first need to transfer all the notions to a quantum algorithms:

	\begin{itemize}
	    \item The state $v$ is a quantum state in the Hilbert space of quantum memory;
	    \item The distribution $P_{X|v}$ is still well-defined: It is the distribution of $x$ when the quantum memory is measured to $v$ (see \cref{sec:cq-system} and \cref{eq:induced_prob});
	    \item We are  still able to implement $\ell_2$ truncation: If $P_{X|v}$ has large $\ell_2$-norm, project the entire system to the orthogonal subspace $v^\perp$ of $v$ and repeat, until there is no such state $v$ (see \cref{sec:trunc} for details).
	    \item We are also able to implement sample truncation, in a similar manner to $\ell_2$ truncation. As the criteria here depends on $a$, we separately create a copy of the current system for each~$a$, truncate the states $v$ using projection when $P_{X|v}$ is not evenly divided by $a$ in each copy, and then merge them back together. We prove that the error introduced by this truncation is small.
	\end{itemize}
	
	
	
	Here comes the first major obstacle: $\ell_\infty$ truncation. In the classical case,  $\ell_\infty$ truncation is implemented for each individual $x$, in contrast to $\ell_2$ truncation where  the states are removed altogether.  Relying on the fact that it is already known that the $\ell_2$ norm of the distribution is small, using Markov inequality, one can prove that the error introduced by the $\ell_\infty$ truncation is small. 
	
	However, when we try to emulate the classical implementation of $\ell_\infty$ truncation with quantum truncation, that is, to only project to $v^\perp$ the system \emph{conditioned on} the specific $x$ where $P_{X|v}(x)$ is large, instead of for every $x$, it may lead to huge changes to the distributions $P_{X|u}$ on states $u$ non-orthogonal to $v$. The following example illustrates such a scenario:
	
	\begin{example}
	    Consider a quantum learning algorithm, and assume that at some stage of the computation, for each $x\in\{0,1\}^n$, the quantum memory is in some pure state $v(x)$. We pick each $v(x)$ uniformly at random in a Hilbert space of dimension $d\approx 2^{n/2}$ and consider a typical configuration of $v(x)$. Now the $\ell_2$-norms are bounded for every quantum state $v$: the worst ones happen when $v=v(x)$ for some $x$, where $\|P_{X|v(x)}\|_2$ is typically around $d\cdot 2^{-n}$, close to the $\ell_2$-norm of uniform distribution. However, those worst distributions also have $\ell_\infty$-norms close to $d\cdot 2^{-n}$, which is much larger than the $\ell_\infty$-norm of the uniform distribution, and needs to be truncated. But truncating $v(x)$ off for $x$ means that $x$ is completely erased, and we end up removing everything.
	\end{example}
	
	Moving on, we fix a target state $v$ with a target distribution $P_{X|v}$ which exceeds the $\ell_2$-norm threshold, and the goal is again to prove a $2^{-\Omega(n^2)}$ amplitude bound on $v$. The bad event should still be defined as a pair $(v,a)$ satisfying \cref{eq:badevent}, with $v$ now being a quantum state. We then run into the second major obstacle: it is not clear how to define badness levels.
	
	If we define the badness level $\beta(v)$ for each state $v$ individually by examining the bad events over the historical states, then it is not clear how to measure the total weight of a badness level~$\beta$. In the classical case, we simply define the total weight as the total probability of states with badness level $\beta$. But here in the quantum case, it turns out that such a definition either depends on the choice of basis, which might have large increment in each stage, or completely fails to imply the desired amplitude bound on the target state.
	


    The other choice is to have a more \emph{operational} definition of badness levels, and it is indeed tempting to define $\beta$ as another register whose updates are controlled by the quantum memory. The problem with such definitions is that the bad event (\cref{eq:badevent}) is not linear in $v$. Therefore an operational definition of badness level, which is a linear operator, inevitably introduces error that escalates fast with the number of stages.
    
    \subsection{Hybrid Memory Lower Bounds with Small Quantum Memory}
    The obstacles in the previous section are for proving quadratic quantum memory lower bound. We note that proving linear quantum memory lower bound (without classical memory) is not hard: the proof can be entirely information theoretical, as with very limited memory, say, $\frac{1}{2}n$ qubits, the information gained from each sample is exponentially small, despite the memory being quantum. We present such a proof in \cref{appendix:C}.
    
    The lower bounds that we prove here are with hybrid memory: To learn parity with both classical and quantum memory, an algorithm needs either $2^{\Omega(n)}$ samples, or $\Omega(n^2)$ classical memory, or $\Omega(n)$ quantum memory (\cref{thm:main_informal}). We now describe how we overcome the previously mentioned obstacles.
    
    \paragraph{$\ell_\infty$ Truncation.} When there is only small quantum memory and no classical memory, the treatment for $\ell_\infty$ truncation is straightforward. We remove all quantum states $v$ with distributions of large $\ell_\infty$-norm, by projecting the system to the orthogonal subspace $v^\perp$, just like the process of $\ell_2$ truncation. As the overall distribution on $x$ is uniform, any state $v$ with $\|P_{X|v}\|_\infty\geq 2^{\delta n}\cdot 2^{-n}$ must have weight at most $2^{-\delta n}$. Therefore, as long as the dimension of the Hilbert space is much smaller than $\delta n$, the error introduced in this truncation is small. \footnote{The example in the previous section that shows the infeasibility of treating $\ell_\infty$ truncation the same way as $\ell_2$ truncation does not work here, as it requires $n/2$ qubits of memory while here we have a smaller memory size.}
    
    With classical memory in presence, the actual $\ell_\infty$ truncation step (see \cref{sec:truncd}, Step 2) is more complicated. We first apply the original classical $\ell_\infty$ truncation on the classical memory $W$. Now that $\|P_{X|w}\|_\infty$ is bounded for each classical memory state $w$, we can remove the quantum states $v$ with large $\|P_{X|v,w}\|_\infty$ by projection as stated above. Since the classical $\ell_\infty$ truncation depends on $x$, it could change the distributions $P_{X|v,w}$. However, as in the classical case, $P_{X|w}$ will not change a lot. Thus, wherever $P_{X|v,w}$ changes drastically, it must have a small weight and can also be removed by projection. This removal corresponds to truncation by $G_t$ in \cref{sec:truncd}.
    
    \paragraph{Badness Levels.} Interestingly, we are able to avoid the problems of defining the badness level on quantum memory altogether, by keeping it a property on the classical memory only. To do so we need to alter the definition of a bad event: it is now a pair $(w,a)$ of classical memory state $w$ and sample $a$, such that \emph{there exists} some quantum memory state $v$ with $P_{X|v,w}$ satisfying \cref{eq:badevent}.
    
    For each fixed classical memory state $w$, we still need to ensure that bad events happen with a small probability. We prove it (\cref{lemma:badness}) by showing that, if there are many different samples~$a$, each associated with some quantum state $v_a$ satisfying \cref{eq:badevent}, then there is some quantum state~$v$ that simultaneously satisfies \cref{eq:badevent} with most of such $a$ (which is impossible because of the extractor property). This is ultimately due to the continuous nature of \cref{eq:badevent}: Under some proper congruent transformation, \cref{eq:badevent} becomes a simple threshold inequality on quadratic forms over~$v$. Now if it is satisfied by some $v_a$, it is going to be satisfied by most~$v$ for a much smaller threshold parameter $\delta$, and hence the existence of a simultaneously satisfying $v$.\footnote{We note that the error bound for sample truncation (\cref{lemma:g_a}) is also proved using this argument.} In this argument, we use \cref{lemma:anticc}, which is derived from the anti-concentration bound for Gaussian quadratic forms, and crucially relies on the fact that the dimension is at most $2^{\varepsilon n}$ for some small $\varepsilon$.
    
    Another technical problem is that to use the extractor property, we need to ensure that $\langle P_{X|v,w}, P\rangle\geq 2^{-n}$ for the simultaneously satisfying $v$. Thus, what we do in \cref{lemma:badness} is to first conceptually remove the parts where $\langle P_{X|v,w}, P\rangle$ is too small, using projection similarly to the truncation steps. After the removal, we are left with a subspace $\mc V'$ where $\langle P_{X|v,w}, P\rangle$ is always lower bounded, and we show that for every state $v$ that satisfies \cref{eq:badevent}, the inequality is still close to being satisfied after projecting $v$ onto $\mc V'$. Therefore we could still apply the above argument and find a simultaneously satisfying $v$ within the subspace.
	
	\section{Preliminaries}
	
	\subsection{Vectors and Matrices}
	
	
	For a vector $v\in\mathbb{C}^d$ and $p\in[1,\infty]$, we define the $\ell_p$ norm of $v$ as
	\[\|v\|_p=\left(\sum_{i=1}^d |v_i|^p\right)^{1/p}.\]
	For two vectors $u,v\in\mathbb{C}^d$, define their inner product as $\langle u,v\rangle = u\ct v=\sum_{i=1}^d \overline{u_i}v_i$. So $\|v\|_2^2=\langle v,v\rangle$. We also view every distribution $P$ over a set $\mc X$ as a non-negative real vector with $\|P\|\mt_1=1$.
	
	We specifically use Dirac notation to denote unit vectors, $\ket{v}\in\mathbb{C}^d$ implies that $\|\ket{v}\|_2=1$. For a non-zero vector $u\in\mathbb{C}^d$, let $\ket{v}\sim u$ be the normalization of $u$, that is, $\ket{v}=u/\|u\|_2$.
	
	For every vector $v\in\mathbb{C}^d$, let $\Diag v\in\mathbb{C}^{d\times d}$ be the diagonal matrix whose diagonal entries represent $v$. Conversely, for every square matrix $M$, let $\diag M$ be the vector consisting of the diagonal entries of $M$. For a matrix (or generally a linear operator) $M$,  we use $\|M\|\mt_\Tr$ and $\|M\|_2$ to denote its trace norm and spectral norm respectively, that is,
	\[\lN{M}_\Tr=\Tr\left[\sqrt{M M\ct}\right],\quad \lN{M}_2=\max_{v \ne 0} \|M v\|_2/\|v\|_2.\]
	For an Hermitian $M \in \mathbb{C}^{d \times d}$, we say it is a positive semi-definite operator if for every $v \in \mathbb{C}^d$, $v\ct M v \geq 0$. A (partial) density operator is a positive semi-definite operator with its trace being $1$ (or at most $1$, respectively). 
	
	\subsubsection*{Viewing a Learning Problem as a Matrix}
	
	Let $M:\mc A\times\mc X\rightarrow\{-1,1\}$ be a matrix. The matrix $M$ corresponds to the following
	learning problem. There is an unknown element $x\in \mc X$ that was chosen uniformly at random.
	A learner tries to learn $x$ from samples $(a, b)$, where $a\in\mc A$ is chosen uniformly at random and
	$b = M(a, x)$. That is, the learning algorithm is given a stream of samples, $(a_1,b_1),(a_2,b_2),\ldots$, where each $a_t$ is uniformly distributed and for every $t$, $b_t = M(a_t, x)$. For each $a\in\mc A$, we use $M_a:\mc X\rightarrow\{-1,1\}$ to denote the vector corresponding to the $a$-th row of $M$.
	
	\subsubsection*{Extractors}
	
	A matrix $M:\mc A\times\mc X\rightarrow\{-1,1\}$ with $n=\log_2|\mc X|$ is a $(k,\ell)$-$L_2$ extractor with error $2^{-r}$, if for every distribution $P$ over $\mc X$ with $\|P\|_2\leq 2^\ell\cdot 2^{-n/2}$, there are at most $2^{-k}\cdot|\mc A|$ rows $a\in\mc A$ such that
	\[|\langle M_a, P\rangle|\geq 2^{-r}.\]

	
	
	\subsection{Anti-Concentration Bound for Quadratic Form on Unit Vectors}
	\begin{lemma}\label{lemma:anticc}
		There exists an absolute constant $c$ such that following holds. Let $\sigma$ be a Hermitian operator over the Hilbert space $\mc V=\mathbb{C}^d$, and let $v$ be a uniformly random unit vector in $\mc V$. Then for every $\varepsilon>0$, we have
		\[\Pr\left[|v\ct\sigma v|\leq \frac{\varepsilon\|\sigma\|_2}{d}\right]\leq c\sqrt{\varepsilon}+e^{-d}.\]
	\end{lemma}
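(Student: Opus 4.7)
The approach is to reduce the claim to a one-dimensional anti-concentration statement for a single Gaussian coordinate, via spectral decomposition and the Gaussian representation of uniform unit vectors on the sphere.

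First, by the spectral theorem write $\sigma = U\Lambda U\ct$ with $\Lambda = \diag(\lambda_1,\dots,\lambda_d)$ real, and relabel so that $|\lambda_1| = \|\sigma\|_2$. Since the Haar measure on the unit sphere of $\mc V = \mathbb{C}^d$ is unitarily invariant, $u := U\ct v$ is again uniformly distributed on the sphere, and $v\ct\sigma v = u\ct\Lambda u = \sum_i \lambda_i |u_i|^2$. So I may assume $\sigma = \Lambda$ from the outset and work only with the weighted sum $\sum_i \lambda_i |v_i|^2$.

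Next, realize $v$ as $g/\|g\|_2$, where $g \in \mathbb{C}^d$ is a standard complex Gaussian vector; its rotational symmetry makes $g/\|g\|_2$ Haar-uniform. Setting $Y_i := |g_i|^2$, the $Y_i$ are i.i.d.\ unit-mean exponential variables and
$$v\ct\sigma v \;=\; \frac{\sum_i \lambda_i Y_i}{\sum_i Y_i}.$$
A standard $\chi^2$/Bernstein bound gives $\|g\|_2^2 \leq 2d$ except on an event of probability at most $e^{-d}$ (after absorbing constants into $c$). On this good event, the hypothesis $|v\ct\sigma v| \leq \varepsilon\|\sigma\|_2/d$ reduces to $\big|\lambda_1 Y_1 + \sum_{i\geq 2}\lambda_i Y_i\big| \leq 2\varepsilon|\lambda_1|$.

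Finally, condition on $(Y_2,\dots,Y_d)$, which fixes the constant $S := \sum_{i\geq 2}\lambda_i Y_i$. The remaining probability equals $\Pr_{Y_1}\!\big[\,|Y_1 + S/\lambda_1| \leq 2\varepsilon\,\big]$, i.e., the probability that the exponential variable $Y_1$ lies in an interval of length at most $4\varepsilon$ inside $[0,\infty)$. Since the density of $Y_1$ is uniformly bounded by $1$, this conditional probability is at most $4\varepsilon \leq 4\sqrt{\varepsilon}$ for $\varepsilon \leq 1$ (and the lemma is vacuous for larger $\varepsilon$ once $c$ is chosen large enough). Integrating out the conditioning and union-bounding with the $e^{-d}$ bad event yields the claimed inequality. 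The only nontrivial ingredient is the density bound on $Y_1 = |g_1|^2$, which is immediate for the complex Gaussian; the weaker $\sqrt{\varepsilon}$ rate in the statement would be tight if the argument were carried out with real Gaussians, where $|g_1|^2 \sim \chi^2_1$ has a $1/\sqrt{x}$ singularity at the origin and that integrable singularity is precisely what limits the rate.
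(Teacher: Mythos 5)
Your proof is correct and takes a genuinely different route from the paper's. The paper applies the Carbery--Wright anti-concentration inequality for Gaussian quadratic forms: writing $v = g/\|g\|_2$ for a Gaussian $g$, it bounds $\Pr[|g\ct\sigma g|\le 5\varepsilon\|\sigma\|_2]$ using $\Var[g\ct\sigma g]=2\Tr[\sigma^2]\ge 2\|\sigma\|_2^2$, and the $\chi^2$ tail by Laurent--Massart. You instead diagonalize $\sigma$, isolate the coordinate carrying the top eigenvalue, condition on all others, and use nothing deeper than the fact that the density of $|g_1|^2$ for a complex Gaussian $g_1$ (an $\mathrm{Exp}(1)$ random variable) is bounded by~$1$; the $\chi^2$-type tail bound plays the same role in both proofs. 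Your route is more elementary (no Carbery--Wright black box) and in fact yields the stronger rate $O(\varepsilon)$ in the complex setting, with a correct explanation of why the $\sqrt{\varepsilon}$ rate is what survives over real Gaussians (the $1/\sqrt{x}$ singularity of the $\chi^2_1$ density at the origin). This also sidesteps a small awkwardness in the paper's proof, which appeals to real Gaussians $g\sim\mathcal N(0,1)^d$ while $v$ is uniform on the \emph{complex} sphere; these are not equidistributed, though the argument can be repaired and the conclusion is unaffected.

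One small quantitative slip: the Chernoff/Bernstein tail for $\|g\|_2^2 = \sum_i Y_i \sim \mathrm{Gamma}(d,1)$ gives $\Pr[\sum_i Y_i \ge 2d]\le e^{-(1-\ln 2)d}\approx e^{-0.31 d}$, not $e^{-d}$, and the constant in the exponent cannot be ``absorbed into $c$'' since $c$ multiplies $\sqrt{\varepsilon}$, not the exponential term. Raise the threshold to, say, $4d$ (so $t-1-\ln t > 1$ at $t=4$) to get a genuine $e^{-d}$; on that good event the interval for $Y_1$ has length at most $8\varepsilon$ rather than $4\varepsilon$, which only changes the final constant $c$. You also implicitly assume $\sigma\ne 0$ so that $\lambda_1\ne 0$ when dividing; that is the only regime where the lemma has content, but it is worth stating.
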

	\begin{proof}
		Let $g=(g_1,\ldots,g_d)\sim\mathcal{N}(0,1)^d$ be standard Gaussians. Notice that $\|g\|_2^2$ follows $\chi_d^2$-distribution, and $|g\ct\sigma g|/\|g\|_2^2$ is equidistributed as $|v\ct\sigma v|$. Therefore by union bound we have
		\[\Pr_v\left[|v\ct\sigma v|\leq \frac{\varepsilon\|\sigma\|_2}{d}\right]=
		\Pr_g\left[|g\ct\sigma g|\leq {\varepsilon\|\sigma\|_2} \cdot \frac{\|g\|^2_2}{d}\right] \leq 
		\Pr_g\left[|g\ct\sigma g|\leq 5\varepsilon\|\sigma\|_2\right]+
		\Pr_g\Big[\|g\|_2^2\geq 5d\Big].\]
		For the first term, notice that $\Var[g\ct\sigma g]=2\Tr[\sigma^2]$ (see e.g. \cite[Chapter~5]{rencher2008linear}) which is no smaller than $2\|\sigma\|_2^2$. Therefore, by Carbery–Wright inequality \cite{Carbery2001}, there exists an absolute constant $c$ such that
		\[\Pr\left[|g\ct\sigma g|\leq 5\varepsilon\|\sigma\|_2\right]\leq 
		\Pr\left[|g\ct\sigma g|\leq 4\varepsilon\Var[g\ct\sigma g]^{1/2}\right]\leq c\sqrt{\varepsilon}.\]
		For the second term, the standard Laurent-Massart bound on $\chi^2$-distrbitions \cite{Laurent2000} gives:
		\[\Pr\Big[\|g\|_2^2\geq 5d\Big]\leq e^{-d}. \qedhere\]
	\end{proof}
	
	\subsection{Multipartite Quantum Systems}
	The state of $q$ qubits can be represented in a Hilbert space $\mc V=(\mathbb{C}^2)^{\otimes q}=\mathbb{C}^{2^q}$. In a product of $m$ Hilbert spaces $\mc V_{[m]}=\mc V_1\otimes \cdots\otimes \mc V_m$, a multipartite partial system $V_1,\ldots,V_m$ is represented by a partial density operator $\rho\mt_{V_{[m]}}$. For a subset $I\subseteq[m]$ of indices, the subsystem on $\{V_i\}_{i\in I}$ (or $V_{I}$ for short) is defined by tracing out $j\notin I$, that is,
	\[\rho\mt_{V_{I}}=\Tr\mt_{V_{j\notin I}}[\rho\mt_{V_{[m]}}].\]
	Now for any two disjoint subsets $I,J\subset[m]$, given some $\ket{v_J}\in\mc V_J=\bigotimes_{j\in J}\mc V_j$, the conditional system on $V_I$ is defined as
	\[\rho\mt_{V_{I}|v_{J}}=
	\left(\id_{V_{I}} \otimes \bra{v_J}\right)
	\rho\mt_{V_{I\cup J}}
	\left(\id_{V_{I}}\otimes \ket{v_J}\right),\]
	which is a partial density operator on $V_{I}$. Note that the trace \[\Tr\left[\rho\mt_{V_{I}|v_{J}}\right]= \bra{v\mt_J} 
	\rho\mt_{V_{J}}
	 \ket{v\mt_J}\]
	only depends on the system $\rho$ and $\ket{v_J}$, while being \emph{independent} of the choice of $I$. 
	
	Another simple fact that will be repeatedly used later on is that for an \emph{orthogonal basis} $\mc B$ of $V_J$, we have
	\[\rho\mt_{V_{I}}=\Tr_{V_{J}}[\rho\mt_{V_{I\cup J}}]
	=\sum_{\ket{v_{J}}\in\mc B}\rho\mt_{V_{I}|v_{J}}.\]
	
	\subsection{Classical-Quantum Systems}\label{sec:cq-system}
	In the underlying space $\mc V_1\otimes \cdots\otimes \mc V_m$ of the multipartite system, we say $\mc V_i$ is classical if there is a fixed orthogonal basis $\mc B_i$ of $\mc V_i$, such that for every multipartite system $\rho\mt_{V_{[m]}}$, every pair of distinct $\ket{v_i}\neq \ket{v_i'}\in\mc B_i$ and every two states $\ket{v},\ket{v'}\in \bigotimes_{j\neq i}\mc V_j$, we have
	\[\bra{v_i,v}\rho\mt_{V_{[m]}}\ket{v_i',v'}=0.\]
	Without loss of generality, in the rest of the work we always assume $\mc B_i$ is the set of computational basis states. We also identify $\mc V_i$ with the discrete set $\mc B_i$, and remove the Dirac brackets when we talk about the classical elements in $\mc V_i$. In this case every multipartite system $\rho\mt_{V_{[m]}}$ can be written as a direct sum
	\[\rho\mt_{V_{[m]}}=\bigoplus_{v_i\in\mc V_i}\rho\mt_{V_{[m]\setminus\{i\}}|v_i}.\]
	The reader may find this direct sum viewpoint easier to handle in some later scenarios.
	
	When $V_{I}$ is classical, conditioned on any $\ket{v_J}\in\mc V_J$ with $J$ disjoint from $I$, the system $\rho\mt_{V_{I}|v_{J}}$ is represented as a diagonal matrix on $V_I$. If $\Tr[\rho\mt_{V_{I}|v_{J}}]>0$, it induces a distribution over \
	the computation basis states of $V_I$, defined as
    \begin{align}\label{eq:induced_prob}
	P^\rho_{V_{I}|v_{J}}=\diag\rho\mt_{V_{I}|v_{J}}/\Tr[\rho\mt_{V_{I}|v_{J}}]. 
	\end{align}
	In the rest of this paper, whenever we use this notation $P^\rho_{V_{I}|v_{J}}$, it is always implicitly assumed that $\Tr[\rho\mt_{V_{I}|v_{J}}]>0$ and the distribution exists.
	
	In this work we typically consider the following scenario: There is a quantum memory register $V$ ranging in the Hilbert space $\mc V$, and a classical memory register $W$ ranging in the set of memory states $\mc W$, along with some classical information $X\in\mc X$ (later in the work, it is the concept to be learned) that is correlated with $V$ and $W$. We will make use of the following fact:
	
	\begin{claim}\label{claim:lotp}
	    Let $\rho_{XVW}$ be a classical-quantum system over classical $X, W$ and quantum $V$. 
		For every $w\in\mc W$, $P^\rho_{X|w}$ is a convex combination of $P^\rho_{X|v,w}$ for some $\{\ket{v}\}\subseteq\mc V$.
	\end{claim}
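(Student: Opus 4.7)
The plan is to exploit the classical structure of $X$ and $W$, and then use the spectral decomposition of the reduced quantum state $\rho_{V|w}$ to explicitly exhibit the convex combination.

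First, since both $X$ and $W$ are classical registers, any classical-quantum state has the block form
\[\rho_{XVW}=\sum_{x,w}\ket{x,w}\bra{x,w}\otimes \tau_{x,w},\]
with each $\tau_{x,w}$ a positive semi-definite operator on $\mc V$. Unpacking the multipartite-system definitions, $\rho_{X|w}$ is diagonal on $\mc X$ with entries $\Tr[\tau_{x,w}]$, while $\rho_{X|v,w}$ is diagonal on $\mc X$ with entries $\bra{v}\tau_{x,w}\ket{v}$. Therefore
\[P^\rho_{X|w}(x)=\frac{\Tr[\tau_{x,w}]}{\sum_{x'}\Tr[\tau_{x',w}]},\qquad P^\rho_{X|v,w}(x)=\frac{\bra{v}\tau_{x,w}\ket{v}}{\bra{v}\rho_{V|w}\ket{v}},\]
where $\rho_{V|w}=\sum_x \tau_{x,w}$.

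Second, I would diagonalize $\rho_{V|w}=\sum_i \lambda_i\ket{v_i}\bra{v_i}$ in an orthonormal eigenbasis and keep only the indices with $\lambda_i>0$, since those are precisely the ones for which $P^\rho_{X|v_i,w}$ is defined. Proposing the weights $\mu_i=\lambda_i/\Tr[\rho_{V|w}]$, the factors of $\lambda_i$ cancel against the denominator $\bra{v_i}\rho_{V|w}\ket{v_i}=\lambda_i$, giving
\[\sum_i \mu_i\, P^\rho_{X|v_i,w}(x)=\frac{\sum_i \bra{v_i}\tau_{x,w}\ket{v_i}}{\Tr[\rho_{V|w}]},\]
and $\sum_i \mu_i=1$ by construction.

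The only nontrivial step — and the point I expect to be the main obstacle — is justifying that the numerator equals $\Tr[\tau_{x,w}]$ even though the sum is restricted to eigenvectors of strictly positive eigenvalue. To handle this I would use the PSD inequality $\tau_{x,w}\preceq\rho_{V|w}$, which follows from $\rho_{V|w}-\tau_{x,w}=\sum_{x'\neq x}\tau_{x',w}\succeq 0$. This inequality implies that the range of $\tau_{x,w}$ is contained in the range of $\rho_{V|w}$, so the positive-eigenvalue eigenvectors of $\rho_{V|w}$ already span $\mathrm{range}(\tau_{x,w})$ and hence $\sum_{i:\lambda_i>0}\bra{v_i}\tau_{x,w}\ket{v_i}=\Tr[\tau_{x,w}]$. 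Substituting back yields $\sum_i\mu_i P^\rho_{X|v_i,w}(x)=\Tr[\tau_{x,w}]/\Tr[\rho_{V|w}]=P^\rho_{X|w}(x)$, completing the proof.
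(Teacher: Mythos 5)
Your proof is correct and follows essentially the same route as the paper: decompose $\rho_{X|w}$ as $\sum_{\ket{v}\in\mc B}\rho_{X|v,w}$ over an orthonormal basis of $\mc V$ and normalize. The paper uses an arbitrary orthonormal basis and leaves the zero-trace terms implicit, whereas you specialize to the eigenbasis of $\rho_{V|w}$ and spell out (via $\tau_{x,w}\preceq\rho_{V|w}$) why dropping zero-eigenvalue directions is harmless — a minor stylistic difference, not a different argument.
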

	\begin{proof}
	    Let $\mc B$ be an orthogonal basis of $\mc V$, so that we have (from the end of last section)
	    \[
	        \rho\mt_{X|w} =\sum_{\ket{v}\in \mc B}\rho\mt_{X|v,w}.
	    \]
	    Therefore $P^\rho_{X|w}$ is a linear combination of $P^\rho_{X|v,w}$ for $\ket v \in \mc B$, with non-negative coefficients. Since they are all distributions, it must be a convex combination.
	\end{proof}
	
	\paragraph{Characterization of operators over classical-quantum hybrid systems.}
	
	Now we identify all possible operators on the classical-quantum hybrid memory space $\mc V\otimes \mc W$. A priori to the assumption that $W$ is classical, we think of a quantum channel operating on the system as working on the underlying space $\mc V\otimes\mathbb{C}^{|\mc W|}$. Now we denote $\mathscr{T}\mt_{\mc V\otimes\mc W}$ to be the set of all such quantum channels $\Phi$ that satisfy the following: for every classical-quantum system $\rho\mt_{VW}$ in $\mc V\otimes\mc W$, $W$ is still classical in $\Phi(\rho\mt_{VW})$. That is, for every two states $\ket{v},\ket{v'}\in \mc V$ and every pair of distinct $w,w'\in\mc W$, we have
	\[\bra{v,w}\Phi(\rho\mt_{VW})\ket{v',w'}=0.\]
	
	Note that not all channels in $\mathscr{T}\mt_{\mc V\otimes\mc W}$ are physically realizable. For instance, with one-bit classical memory and no quantum memory, the channel
	\[\begin{pmatrix}a & c \\ \overline{c} & b\end{pmatrix}\mapsto
	\begin{pmatrix}a & ic \\ -i\overline{c} & b\end{pmatrix}\]
	is not a classical operator. However, since we are constrained to classical quantum systems, this channel is effectively equivalent to an identity channel on one-bit classical memory. Generally speaking, every channel in $\mathscr{T}\mt_{\mc V\otimes\mc W}$ is equivalent to a channel controlled by $\mc W$ that maps $\mc V$ to $\mc V\otimes\mc W$. Below, we prove this observation and use it to show the following claim:
	
	\begin{claim}\label{claim:chan} 
	    Let $\rho_{XVW}$ be a classical-quantum system over classical $X, W$ and quantum $V$.
		Let $\Phi\in\mathscr{T}\mt_{\mc V\otimes\mc W}$, and we use $\Phi(\rho)$ to denote the system after applying $\Phi$ to $VW$ and identity to $X$. Then for every $\ket{v}\in\mc V$ and $w\in\mc W$, $P^{\Phi(\rho)}_{X|v,w}$ is a convex combination of $P^{\rho}_{X|v',w'}$ for some $\{\ket{v'}\}\subseteq\mc V$ and $\{w'\}\subseteq\mc W$. 
	\end{claim}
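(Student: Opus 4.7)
The plan is to leverage the structural observation in the paragraph preceding the claim: every $\Phi\in\mathscr{T}\mt_{\mc V\otimes\mc W}$ acts on classical--quantum inputs as a $\mc W$-controlled family of channels. I would first make this precise by defining, for each $w_0\in\mc W$, the channel $\Phi_{w_0}$ on operators on $\mc V$ by $\Phi_{w_0}(\sigma):=\Phi(\sigma\otimes\ket{w_0}\bra{w_0})$. Because $\sigma\otimes\ket{w_0}\bra{w_0}$ is classical on $W$, the definition of $\mathscr{T}\mt_{\mc V\otimes\mc W}$ forces the image of $\Phi_{w_0}$ into the classical-$W$ subspace of $\mc V\otimes\mc W$, and it is plainly CPTP. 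For any classical--quantum input $\rho\mt_{VW}=\sum_{w_0}\sigma_{w_0}\otimes\ket{w_0}\bra{w_0}$, linearity then gives $\Phi(\rho\mt_{VW})=\sum_{w_0}\Phi_{w_0}(\sigma_{w_0})$.

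Second, I split each $\Phi_{w_0}$ by its output classical label: for every $w\in\mc W$, define the completely positive map $\tilde\Phi_{w_0,w}$ on operators on $\mc V$ by $\tilde\Phi_{w_0,w}(\sigma):=(\id_V\otimes\bra{w})\Phi_{w_0}(\sigma)(\id_V\otimes\ket{w})$, and fix a Kraus representation $\tilde\Phi_{w_0,w}(\sigma)=\sum_k A_{w_0,w,k}\,\sigma\, A_{w_0,w,k}\ct$ with operators $A_{w_0,w,k}$ on $\mc V$. Using classicality of $X$ and $W$, write $\rho\mt_{XVW}=\sum_{x,w_0}\ket x\bra x\otimes \rho\mt_{V|x,w_0}\otimes\ket{w_0}\bra{w_0}$. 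Applying $\id_X\otimes\Phi$ and reading off the block for $(X=x,W=w)$ on $\mc V$ gives
\[\Phi(\rho)\mt_{V|x,w}=\sum_{w_0,k} A_{w_0,w,k}\,\rho\mt_{V|x,w_0}\, A_{w_0,w,k}\ct.\]

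Third, fix the target $\ket v\in\mc V$ and $w\in\mc W$. For each pair $(w_0,k)$ with $A_{w_0,w,k}\ct\ket v\neq 0$, set $\alpha_{w_0,w,k}:=\|A_{w_0,w,k}\ct\ket v\|_2^2>0$ and let $\ket{v'_{w_0,w,k}}$ be the unit normalization of $A_{w_0,w,k}\ct\ket v$. Then
\[\bra v \Phi(\rho)\mt_{V|x,w} \ket v=\sum_{w_0,k}\alpha_{w_0,w,k}\,\bra{v'_{w_0,w,k}}\rho\mt_{V|x,w_0}\ket{v'_{w_0,w,k}},\]
and by the definition of the induced distribution, $\bra{v'_{w_0,w,k}}\rho\mt_{V|x,w_0}\ket{v'_{w_0,w,k}}=T_{w_0,k}\cdot P^\rho_{X|v'_{w_0,w,k},w_0}(x)$, where $T_{w_0,k}:=\Tr\!\left[\rho\mt_{X|v'_{w_0,w,k},w_0}\right]\geq 0$. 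Dividing through by the sum over $x$ converts the nonnegative coefficients $\alpha_{w_0,w,k}T_{w_0,k}$ into weights summing to $1$, exhibiting $P^{\Phi(\rho)}_{X|v,w}$ as a convex combination of $\{P^\rho_{X|v'_{w_0,w,k},w_0}\}$.

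The main obstacle is the first step: rigorously extracting the $\mc W$-controlled form of $\Phi$ from the abstract classicality-preservation property of $\mathscr{T}\mt_{\mc V\otimes\mc W}$, and confirming that off-diagonal Kraus cross terms between distinct $w_0\neq w_0'$ vanish on classical-$W$ inputs. Everything after that step is linear-algebraic bookkeeping; the only minor technicality is dropping the $(w_0,k)$ indices with $A_{w_0,w,k}\ct\ket v=0$, which contribute zero to the sum and can be safely discarded from the convex combination.
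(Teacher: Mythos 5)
Your proof is correct and follows essentially the same route as the paper's: both reduce $\Phi$ on classical--quantum inputs to a $\mc W$-controlled family $\Phi_{w_0}(\cdot)=\Phi(\cdot\otimes\ket{w_0}\bra{w_0})$ by linearity, pass to a Kraus decomposition, and read off $P^{\Phi(\rho)}_{X|v,w}$ as a nonnegative combination of conditional distributions that normalizes to a convex one. The only cosmetic difference is that you further split by output label to get Kraus operators $A_{w_0,w,k}$ acting on $\mc V$ alone, whereas the paper keeps Kraus operators $E_{w_0,k}:\mc V\to\mc V\otimes\mc W$ and recovers the $\mc V$-vector by applying $E_{w_0,k}\ct\ket{v,w}$; these are the same bookkeeping. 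One remark: the ``obstacle'' you flag at the end is not really one --- the decomposition $\Phi(\rho\mt_{VW})=\sum_{w_0}\Phi_{w_0}(\sigma_{w_0})$ follows purely from linearity once $\rho\mt_{VW}$ is block-diagonal in $W$, so no Kraus cross-terms between distinct input blocks ever arise; there is nothing extra to verify there.
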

	One difference between \Cref{claim:lotp} and \Cref{claim:chan} is that in \Cref{claim:chan} it is not always possible to write $P^{\Phi(\rho)}_{X|v,w}$ as a convex combination of $P^{\rho}_{X|v',w'}$ for $\ket{v'}$ from an orthogonal basis of $\mc V$. But it is always possible in \Cref{claim:lotp}.  Although the difference does not matter in this work, we mention it here for clarity. 
	
	\begin{proof}
	    Since $\Phi \in\mathscr{T}\mt_{\mc V\otimes\mc W}$, the following channel is functionally equivalent to $\Phi$ for classical-quantum systems: 
	    \begin{align*}
	        \Phi' : \rho \to \sum_{w \in \mc W} \Phi(\rho_{V|w} \otimes \ket w \bra w).
	    \end{align*}
	    The physical meaning of $\Phi'$ is to measure $W$ under the computational basis (which should not change the functionality we care about) and apply $\Phi$. 
	    
	    By defining the channel $\Phi_w(\cdot) := \Phi(\cdot \otimes \ket w \bra w)$, 
	    the above can be alternatively written as: 
	    \begin{align*}
	        \Phi' : \rho \to \sum_{w \in \mc W} \Phi_w(\rho_{V|w}).
	    \end{align*}
	    
	    Now consider the Kraus representation of each $\Phi_w$, that is, a finite set of linear operators $E_{w,k}:\mc V\rightarrow\mc V\otimes\mc W$ such that
	    \[\Phi_w(\rho\mt_{V|w})=\sum_k E\mt_{w,k}\rho\mt_{V|w}E_{w,k}\ct,\qquad  \sum_k E_{w,k}\ct E\mt_{w,k}=\id_V.\]
	    We can write
	    \begin{align*}
	        \Phi(\rho)\mt_{X|v,w} = \Phi'(\rho)\mt_{X|v,w} &=
	        (\id_X\otimes \bra{v,w})\Phi'(\rho)(\id_X\otimes \ket{v,w}) \\
	        &= \sum_{w'\in\mc W}\sum_k (\id_X\otimes \bra{v,w}E_{w',k})\rho_{XV|w'}(\id_X\otimes E_{w',k}\ct\ket{v,w}) \\
	        &= \sum_{w'\in\mc W}\sum_k\lN{E_{w',k}\ct\ket{v,w}}_2 \cdot \rho_{X|v',w'}
	    \end{align*}
	    where in each term of the summation, $\ket{v'}\sim E_{w',k}\ct\ket{v,w}$. Similar to the arguments in \cref{claim:lotp}, $P^{\Phi(\rho)}_{X|v,w}$ is a convex combination of $P^{\rho}_{X|v',w'}$.
	\end{proof}
	
	\subsection{Branching Program with Hybrid Memory}
	For a learning problem that corresponds to the matrix $M$, 
	a branching program of hybrid memory with $m$-bit classical memory, $q$-qubit quantum memory and length $T$ is specified as follows.
	
	At each stage $0\leq t\leq T$, the memory state of the branching program is described as a classical-quantum system $\rhot_{VW}$ over quantum memory space $\mc V=(\mathbb{C}^2)^{\otimes q}$ and classical memory space $\mc W=\{0,1\}^m$. The memory state evolves based on the samples that the branching program receives, and therefore depends on the unknown element $x\in\mt_R\mc X$.
	We can then interpret the overall systems over $XVW$, in which ${X}$ consists of an unknown concept $x$, resulting in a classical-quantum system $\rhot_{XVW}$. It always holds that the distribution of $x$ is uniform, i.e.,
	\[\rhot_X=\Tr\mt_{VW}[\rhot_{XVW}]=\frac{1}{2^n}\id_X.\]
	Initially the memory $VW$ is independent of $X$ and can be arbitrarily initialized. We assume that
	\[\rho^{(0)}_{XVW}=\frac{1}{2^n}\id_X\otimes\frac{1}{2^q}\id_V\otimes\frac{1}{2^m}\id_W.\]
	At each stage $0\leq t<T$, the branching program receives a sample $(a,b)$, where $a\in\mt_R \mc A$ and $b=M(a,x)$, and applies an operation $\Phi_{t,a,b}\in\mathscr{T}\mt_{\mc V\otimes\mc W}$ over its memory state. Thus the evolution of the entire system can be written as
	\[\rho^{(t+1)}_{XVW}=\E_{a\in\mt_R\mc A}
	\left[\sum_{x\in\mc X}\ket{x}\bra{x}\otimes\Phi_{t,a,M(a,x)}\big(\rhot_{VW|x}\big)\right].\]
	Finally, at stage $t=T$, a measurement over the computational bases is applied on $\rho^{(T)}_{VW}$, and the branching program outputs an element $\widetilde{x}\in X$ as a function of the measurement result $(v,w)\in\{0,1\}^{q+m}$. The success probability of the program is the probability that $\widetilde{x}=x$ which can be formulated as
	\[\sum_{\substack{x\in\mc X, v\in\{0,1\}^q, w\in\mc W \\ \widetilde{x}(v,w)=x}}\bra{x,v,w}\rho^{(T)}_{XVW}\ket{x,v,w}.\]
	
	\section{Main Result}
	
	\begin{theorem}\label{thm:main}
		Let $\mc X,\mc A$ be two finite sets with $n=\log_2|\mc X|$. Let $M:\mc A\times\mc X\rightarrow\{-1,1\}$ be a matrix which is a $(k',\ell')$-$L_2$ extractor with error $2^{-r'}$ for sufficiently large $k',\ell'$ and $r'$, where $\ell'\leq n$. Let
		\[r = \min\left\{\frac{1}{4}r', \frac{1}{26}\ell' + \frac{1}{6}, \frac{1}{2}(k'-1)\right\}.\]
		Let $\rho$ be a branching program for the learning problem corresponding to $M$, described by classical-quantum systems $\rhot_{XVW}$, with $q$-qubit quantum memory $V$, $m$-bit classical memory $W$ and length $T$. If $m\leq \frac{1}{44}(k'-1)\ell'$, $q\leq r-7$ and $T\leq 2^{r-2}$, the success probability of $\rho$ is at most $O(2^{q-r})$. 
	\end{theorem}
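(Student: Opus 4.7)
The plan is to follow the framework sketched in Section~2, adapted to the hybrid setting: truncate the branching program so that every surviving state is ``nice,'' fix a target distribution $P$, and bound the amplitude of reaching it via a badness-level argument that lives purely on the classical register $W$. I would first apply three successive truncations to each stage of $\rhot_{XVW}$, each implemented as an orthogonal projection on $\mc V\otimes\mc W$. The $\ell_2$ truncation kills any unit $\ket{v}\in\mc V$ with $\lN{P^\rho_{X|v,w}}_2 > 2^{\ell}\cdot 2^{-n/2}$ for a suitable $\ell\ll\ell'$. The $\ell_\infty$ truncation first removes $(x,w)$ pairs with $P^\rho_{X|w}(x)$ too large, as in the classical argument of~\cite{GRT18}, and then projects out any $\ket{v}$ whose distribution $P^\rho_{X|v,w}$ changed substantially as a result; because $\dim\mc V=2^q$ is much smaller than $2^{\delta n}$, only $2^{-\Omega(r)}$ weight is lost per stage. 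The sample truncation removes, for each $a$, any $\ket{v}$ for which $M(a,\cdot)$ does not split $P^\rho_{X|v,w}$ nearly evenly; writing the squared bias as a quadratic form in $\ket{v}$ and applying \cref{lemma:anticc} bounds this error by $2^{-\Omega(r)}$ per stage as well.

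After truncation I would fix a target pair $(\ket{v^*},w^*)$ at some stage with $\lN[2]{P^\rho_{X|v^*,w^*}}_2 \geq 2^{2\varepsilon n}\cdot 2^{-n}$, set $P = P^\rho_{X|v^*,w^*}$, and upper bound the amplitude of reaching it. I would define a bad event at stage $t$ as a pair $(w,a)$ such that \emph{some} unit $\ket{v}$ in the surviving subspace makes $P^\rho_{X|v,w}$ satisfy~\cref{eq:badevent} with respect to $P$, and let $\beta_t(w)$ denote the distribution of the number of bad events encountered along the classical path to $(t,w)$. Along a good step $\langle P^\rho_{X|v,w},P\rangle$ grows by at most a factor $1+2^{-\Omega(r)}$ and along a bad step by at most a factor $\approx 2$, so $\langle P^\rho_{X|v,w},P\rangle \leq 2^{\beta_t(w)}\cdot 2^{-n}$ up to lower-order corrections.

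The core step is the badness lemma: for every fixed $w$ at each stage, $\Pr_{a}[(w,a)\text{ is bad}]\leq 2^{-\Omega(r)}$. I would prove this by contradiction. Assuming that too many $a$'s admit witnesses $\ket{v_a}$, I would first restrict to the subspace $\mc V'\subseteq\mc V$ where $\langle P^\rho_{X|v,w},P\rangle\geq 2^{-n}$, defined by a spectral projection and shown to preserve~\cref{eq:badevent} up to a small error. A congruent transformation then turns~\cref{eq:badevent} into a threshold inequality $|v\ct\sigma_a v|\geq \tau_a$ for a fixed Hermitian $\sigma_a$ on $\mc V'$. Since $\dim\mc V'\leq 2^q\leq 2^{r-7}$, \cref{lemma:anticc} implies that whenever the threshold is achievable by some unit, a random unit $\ket{v}\in\mc V'$ achieves it with at least a constant probability; averaging over $a$ produces a single $\ket{v}$ that simultaneously witnesses~\cref{eq:badevent} for a constant fraction of the bad $a$'s. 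The $\ell_2$ truncation bounds $\lN{P^\rho_{X|v,w}}_2 \leq 2^{\ell}\cdot 2^{-n/2}$ for that $\ket{v}$, so the $(k',\ell')$-extractor property of $M$ forbids so many rows of bias above $2^{-r'}$, a contradiction.

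Aggregating over $T\leq 2^{r-2}$ stages, the classical weight of $w$'s with $\beta_t(w)\geq\beta$ is at most $2^{-\Omega(\beta r)}$, and reaching $\langle P^\rho_{X|v^*,w^*},P\rangle\geq\lN[2]{P}_2$ forces $\beta\geq\Omega(\ell')$. A union bound over the $2^{O(q)+m}$ possible targets, combined with the accumulated truncation errors, then yields the claimed bound $O(2^{q-r})$ under the hypotheses $m\leq (k'-1)\ell'/44$ and $q\leq r-7$. The main obstacle is the badness lemma itself: producing a single simultaneous witness $\ket{v}$ while (a) restricting to the subspace $\mc V'$ where the $(k',\ell')$-extractor hypothesis applies and (b) keeping the anti-concentration margin from degrading under this projection is the delicate step, and is precisely what forces both the cap $q\leq r-7$ on quantum memory and the small-dimension regime in which \cref{lemma:anticc} is useful.
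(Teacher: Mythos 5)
Your proposal matches the paper's proof in all essential respects: the three-stage truncation with the two-step $\ell_\infty$ procedure (classical removal followed by projecting out $\ket v$ whose conditional distribution was disturbed), badness levels defined purely on the classical register $W$ with the existential-over-$\ket v$ bad-event criterion, the badness lemma proven by restricting to a subspace $\mc V'$ of large inner product, applying a congruent transformation, and invoking \cref{lemma:anticc} to extract a simultaneous witness contradicting the $(k',\ell')$-extractor property, and finally aggregating badness weights as $2^{-\Omega(\beta r)}$ over $T\le 2^{r-2}$ stages. The only small gaps are sketch-level: you omit the separate $G_\infty$ projection on $\ket v$ with large $\lN{P_{X|v,w}}_\infty$ (the paper needs this both for the terminal success-probability bound and, via the $\ell_\infty$-to-$\ell_2$ estimate on $P^\sigma_{X|v,w}$, to place the extractor hypothesis within scope inside the badness lemma), and the threshold defining $\mc V'$ should be $2^{-4r-2\ell}\cdot 2^{-n}$ rather than $2^{-n}$ so that the projection error stays $O(2^{q-2r})$ — but these are refinements of exactly the route the paper takes, not a different route.
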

	From now on we let $k=k'-1$ and $\ell=\frac{1}{5}(\ell'-13r-2)$. Then we have the following inequalities to be used later:
	\begin{align}
	    q+r+1-r' &\ \leq \   -2 r. \label{eq:qr} \\ 
	    2\ell+9r-n &\ \leq \ -r.  \label{eq:lr} \\
	    (k-r)\ell &\ \geq \  2m+4r+1. \label{eq:m}
	\end{align}
	We leave a detailed calculation for the above inequalities in \cref{sec:parameter}.
	
	\subsection{Truncated Classical-Quantum Systems}\label{sec:trunc}
	Here we describe how to truncate a partial classical-quantum system $\rho\mt_{XVW}$ according to some property $G(v,w)$ of desire on $\rho\mt_{X|v,w}$. The goal is to remove the parts of $\rho\mt_{XVW}$ where $G$ is not satisfied. We execute the following procedure:
	\begin{enumerate}
		\item Maintain a partial system $\rho'_{XVW}$ initialized as $\rho\mt_{XVW}$, and subspaces $\mc V_w\subseteq\mc V$ initialized as $\mc V$ for each $w\in\mc W$.
		\item Pick $w\in\mc W$ and $\ket{v}\in\mc V_w$ such that $\Tr[\rho'_{X|v,w}]>0$ and $G(v,w)$ is false.
		\item Change the partial system $\rho'_{XVW}$ into the following system by projection:
		\[\big(\id_X\otimes(\id_{VW}-\ket{v,w}\bra{v,w})\big)\rho'_{XVW}\big(\id_X\otimes(\id_{VW}-\ket{v,w}\bra{v,w})\big),\]
		and change $\mc V_w$ to its subspace orthogonal to $\ket{v}$, that is
		\[\{\ket{v'}\in\mc V_w\mid \braket{v}{v'}=0\}.\]
		\item Repeat from step 2 until there is no such $w$ and $\ket{v}$. Denote the final system as $\rho^{|G}_{XVW}$.
	\end{enumerate}
	In step 2 we pick $w$ and $\ket{v}$ arbitrarily as long as it satisfies the requirements, however we could always think of it as iterating over $w\in\mc W$ and processing each $\rho_{XV|w}$ separately. The choices of $\ket{v}$ for each $w$ do affect the final system $\rho^{|G}_{XVW}$; Yet as we will see later, these choices are irrelevant to our proof. 
	
	Below, we give two useful lemmas on truncated systems.
	
	\begin{lemma}\label{lemma:id_dist}
	    For every $\ket{v}\in\mc V$ and $w\in\mc W$ such that $\Tr[\rho^{|G}_{X|v,w}]>0$, there exists $\ket{v'}$ in the remaining subspace $\mc V_w$ such that 
	    \[P^{\rho^{|G}}_{X|v,w}=P^{\rho}_{X|v',w}=P^{\rho^{|G}}_{X|v',w}.\]
	\end{lemma}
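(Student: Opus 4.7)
The plan is to reduce the iterative truncation to a single orthogonal projection and then chase definitions. First I would observe that, for each fixed $w\in\mc W$, the vectors $\ket{v}$ that are successively ``projected out'' during the procedure are mutually orthogonal in $\mc V$: at each step we pick $\ket v$ from the current $\mc V_w$, and the update replaces $\mc V_w$ by its orthogonal complement within itself relative to $\ket v$. Consequently the composition of the rank-one projections $(\id_{VW}-\ket{v_i,w}\bra{v_i,w})$ appearing in Step~3 equals $\id_X\otimes\Pi_w$, where $\Pi_w$ is the orthogonal projector in $\mc V$ onto the final remaining subspace $\mc V_w$. Therefore
\[
\rho^{|G}_{XV|w} \;=\; (\id_X\otimes\Pi_w)\,\rho_{XV|w}\,(\id_X\otimes\Pi_w).
\]

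Next I would use this to produce the vector $\ket{v'}$. For a given $\ket v\in\mc V$, set $u=\Pi_w\ket v$. Because $\Tr[\rho^{|G}_{X|v,w}]>0$ and
\[
\rho^{|G}_{X|v,w} \;=\; (\id_X\otimes\bra v\Pi_w)\,\rho_{XV|w}\,(\id_X\otimes\Pi_w\ket v) \;=\; \|u\|_2^{2}\cdot\rho_{X|v',w},
\]
where $\ket{v'}:=u/\|u\|_2\in\mc V_w$, we have $u\ne 0$ and the definition of $\ket{v'}$ is valid. Taking diagonals and normalizing yields $P^{\rho^{|G}}_{X|v,w}=P^{\rho}_{X|v',w}$. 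For the second equality, note that $\ket{v'}\in\mc V_w$ means $\Pi_w\ket{v'}=\ket{v'}$, so substituting $\ket{v'}$ into the projection formula above gives $\rho^{|G}_{X|v',w}=\rho_{X|v',w}$, and hence $P^{\rho^{|G}}_{X|v',w}=P^{\rho}_{X|v',w}$, completing the identity.

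The only conceptual step is the first one: verifying that the iteratively defined $\rho^{|G}$ really does correspond to sandwiching by $\id_X\otimes\Pi_w$ on each $w$-block. This uses the mutual orthogonality of the projected-out directions within a fixed $w$, together with the direct-sum decomposition of classical-quantum systems across $w\in\mc W$ noted in \cref{sec:cq-system}, so that projections performed for different classical values of $w$ do not interfere. Once that structural observation is in hand, everything else is routine manipulation of the definition~\eqref{eq:induced_prob} of $P^\rho_{X|v,w}$.
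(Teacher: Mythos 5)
Your proof is correct, and it takes a genuinely different route from the paper's. The paper argues one truncation round at a time: it shows that after a single projection step, the distribution $P^{\rho'}_{X|v,w}$ equals $P^{\rho}_{X|v',w}$ for a suitably chosen $\ket{v'}$ in the updated subspace, and then implicitly chains this across the $d$ rounds (using that each $\ket{v'}$ lands in a nested subspace, so the identity propagates backward through the iterations). You instead observe up front that the successively removed directions within a fixed $w$-block are mutually orthogonal and that projections acting on distinct $w$-blocks commute, so the entire iterative procedure collapses to conjugation by the single orthogonal projector $\id_X\otimes\Pi_w$ onto the final $\mc V_w$. Once that is in hand, the rest is a direct computation: $\rho^{|G}_{X|v,w}=\|\Pi_w\ket v\|_2^2\cdot\rho_{X|v',w}$ with $\ket{v'}\sim\Pi_w\ket v$, and $\Pi_w\ket{v'}=\ket{v'}$ gives the second equality. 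Your global approach buys a shorter, non-inductive argument that makes the geometric picture explicit (the truncated state really is just a projected state), at the small cost of first justifying the collapse to a single projector; the paper's local step-by-step argument avoids that structural observation but requires the reader to carry out the chaining across rounds. Both reach the same conclusion and both are valid.
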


	\begin{proof}
	    It suffices to prove the lemma with one round of the truncation procedure executed. Suppose  the $\ket{v_1,w_1}$ is picked in step 2, resulting in the partial system
	    \[\rho'_{XVW}=\big(\id_X\otimes(\id_{VW}-\ket{v_1,w_1}\bra{v_1,w_1})\big)\rho\mt_{XVW}\big(\id_X\otimes(\id_{VW}-\ket{v_1,w_1}\bra{v_1,w_1})\big).\]
	    We can write
	    \begin{align*}
	        \rho'_{X|v,w} &= \big(\id_X\otimes\bra{v,w}\big)\rho'_{XVW}\big(\id_X\otimes\ket{v,w}\big) \\ &= \big(\id_X\otimes(\bra{v,w}-\braket{v,w}{v_1,w_1}\bra{v_1,w_1})\big)\rho\mt_{XVW}\big(\id_X\otimes(\ket{v,w}-\ket{v_1,w_1}\braket{v_1,w_1}{v,w})\big).
	    \end{align*}
	    \begin{itemize}
	    \item If $w\neq w_1$, then 
	    \[\rho'_{X|v,w} = \big(\id_X\otimes\bra{v,w}\big)\rho\mt_{XVW}\big(\id_X\otimes\ket{v,w}\big)
	    =\rho\mt_{X|v,w}.\]
	    And the lemma holds directly by choosing $\ket{v'}=\ket{v}$.
	    \item If $w=w_1$, then with $\braket{v_1,w_1}{v,w}=\braket{v_1}{v}=\lambda$, we have
	    \[\rho'_{X|v,w} = \big(\id_X\otimes(\bra{v}-\overline{\lambda}\bra{v_1})\bra{w}\big)\rho\mt_{XVW}\big(\id_X\otimes(\ket{v}-\lambda\ket{v_1})\ket{w}\big).\]
	    By the fact that $\Tr[\rho^{|G}_{X|v,w}]>0$, we must have $\ket{v}\neq \ket{v_1}$. Therefore if we let $\ket{v'}\sim \ket{v}-\lambda\ket{v_1}$, which is the normalized projection of $\ket{v}$ onto the orthogonal subspace of $\ket{v_1}$, the above equality implies that $P^{\rho'}_{X|v,w}=P^{\rho}_{X|v',w}$. Meanwhile, since $\braket{v_1}{v'}=0$ we have $\rho'_{X|v', w} = \rho\mt_{X|v', w}$, which completes the proof. \qedhere
	    \end{itemize}
	\end{proof}
	A direct corollary of the above lemma is that if $G(v,w)$ only depends on the distribution $P^{\rho}_{X|v,w}$, then $G(v,w)$ holds for every $\ket{v}\in\mc V$ and $w\in\mc W$ in the truncated system $\rho^{|G}_{XVW}$, even when $\ket{v}$ is not in the remaining subspace $\mc V_w$.
	
	Our second lemma is based on the following fact that bounds the trace distance of a partial system and its projection, whose proof can be found in the \cref{sec:FG-variant}.
	\begin{proposition}\label{prop:distance_weight}
		For every partial system $\rho$ and projection operator $\Pi$ on $\rho$, we have
		\[\lN[2]{\rho-\Pi\rho\Pi}_\Tr\leq 4 \Tr[\rho]^2 -4 \Tr[\Pi\rho]^2.\]
	\end{proposition}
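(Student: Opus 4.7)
The plan is to expand $\rho - \Pi\rho\Pi$ into its three non-vanishing blocks with respect to the splitting $\id = \Pi + \Pi^\perp$, where $\Pi^\perp = \id - \Pi$, and bound each piece separately. Writing $\rho = (\Pi+\Pi^\perp)\rho(\Pi+\Pi^\perp)$ and subtracting $\Pi\rho\Pi$ yields
\[
\rho - \Pi\rho\Pi \;=\; \Pi^\perp \rho \Pi^\perp \;+\; \Pi\rho\Pi^\perp \;+\; \Pi^\perp\rho\Pi,
\]
and the triangle inequality for the trace norm then reduces the problem to bounding each of these three summands.

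For the diagonal block $\Pi^\perp\rho\Pi^\perp$, positivity of $\rho$ together with $\Pi^\perp$ being a self-adjoint projection implies $\Pi^\perp\rho\Pi^\perp \geq 0$, so its trace norm equals its trace, and by cyclicity together with $(\Pi^\perp)^2 = \Pi^\perp$ this is $\Tr[\Pi^\perp\rho] = \Tr[\rho] - \Tr[\Pi\rho]$. For each off-diagonal block, I would invoke the Cauchy--Schwarz inequality for Schatten norms, $\|AB\|_\Tr \leq \|A\|_2 \|B\|_2$ with $\|\cdot\|_2$ the Hilbert--Schmidt norm, applied to the factorization $\Pi\rho\Pi^\perp = (\Pi\rho^{1/2})(\rho^{1/2}\Pi^\perp)$ via $\rho = \rho^{1/2}\rho^{1/2}$. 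A short calculation gives $\|\Pi\rho^{1/2}\|_2^2 = \Tr[\Pi\rho]$ and $\|\rho^{1/2}\Pi^\perp\|_2^2 = \Tr[\Pi^\perp\rho]$, hence
\[
\|\Pi\rho\Pi^\perp\|_\Tr \;\leq\; \sqrt{\Tr[\Pi\rho]\cdot\Tr[\Pi^\perp\rho]},
\]
and the same bound holds for the Hermitian conjugate $\Pi^\perp\rho\Pi$.

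Putting things together and writing $t = \Tr[\rho]$, $p = \Tr[\Pi\rho]$, this gives
\[
\|\rho - \Pi\rho\Pi\|_\Tr \;\leq\; (t-p) + 2\sqrt{p(t-p)},
\]
and the proposition reduces to the elementary inequality $\bigl((t-p)+2\sqrt{p(t-p)}\bigr)^2 \leq 4(t^2-p^2) = 4(t-p)(t+p)$. Factoring out $t-p$ (the case $t=p$ is trivial) and rearranging, this becomes $3t+p \geq 4\sqrt{p(t-p)}$; squaring and collecting terms yields $9t^2 - 10tp + 17p^2 \geq 0$, which holds unconditionally since the discriminant $100 - 4\cdot 9\cdot 17$ is negative, so the quadratic is strictly positive. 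There is no real obstacle: the only step that requires a moment of care is the Schatten Cauchy--Schwarz bound applied to $\rho^{1/2}$; everything else is bookkeeping.
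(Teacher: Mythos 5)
Your proof is correct, and it takes a genuinely different route from the one in the paper. The paper proves this via purifications: it shows $F(\rho,\Pi\rho\Pi)\geq\Tr[\Pi\rho]^2$ using Uhlmann's theorem (the projection of a purification of $\rho$ is a purification of $\Pi\rho\Pi$), and then invokes a generalized Fuchs--van de Graaf inequality, proved separately as \cref{lem:variant-FG}, relating trace distance to fidelity for subnormalized operators. Your argument instead works entirely with the block decomposition $\rho-\Pi\rho\Pi=\Pi^\perp\rho\Pi^\perp+\Pi\rho\Pi^\perp+\Pi^\perp\rho\Pi$, bounds the diagonal corner by its trace and the two off-diagonal corners via the Schatten--H\"older inequality $\lN{AB}_\Tr\leq\lN{A}_{\mathrm{HS}}\lN{B}_{\mathrm{HS}}$ applied to the factorization through $\rho^{1/2}$, and finishes with an elementary scalar inequality. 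Each step checks out: $\lN{\Pi\rho^{1/2}}_{\mathrm{HS}}^2=\Tr[\Pi\rho]$ and $\lN{\rho^{1/2}\Pi^\perp}_{\mathrm{HS}}^2=\Tr[\Pi^\perp\rho]$ give $\lN{\rho-\Pi\rho\Pi}_\Tr\leq(t-p)+2\sqrt{p(t-p)}$ with $t=\Tr\rho$, $p=\Tr[\Pi\rho]$, and the reduction to $4\sqrt{p(t-p)}\leq 3t+p$ followed by squaring is valid since both sides are nonnegative and the resulting quadratic $9t^2-10tp+17p^2$ is positive definite. What your route buys is a fully self-contained, finite-dimensional linear-algebra argument with no appeal to Uhlmann's theorem or fidelity; what the paper's route buys is a reusable Fuchs--van de Graaf variant and a clean conceptual identification of $\Tr[\Pi\rho]^2$ as a fidelity lower bound. (One small caution: the paper reserves $\lN{\cdot}_2$ for the spectral norm, so when transcribing your proof into the paper's notation you should use a distinct symbol, e.g.\ $\lN{\cdot}_{\mathrm{HS}}$, for the Hilbert--Schmidt norm, as you in fact flagged.)
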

	\begin{lemma}\label{lemma:traced}
	    For each $w\in W$, let $\ket{v_1},\ldots,\ket{v_d}$ be the states picked in step 2 within $\mc V_w$. Then
	    \[\lN{\rho\mt_{XV|w}-\rho^{|G}_{XV|w}}_\Tr\leq 3\sum_{i=1}^d \sqrt{\Tr[\rho\mt_{X|v_i,w}] \Tr[\rho\mt_{XV|w}]}.\]
	\end{lemma}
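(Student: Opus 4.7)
The plan is to reduce to a single-$w$ analysis and then iterate \cref{prop:distance_weight} one rank-one projection at a time, chained together by the triangle inequality. The crucial structural fact that enables this is that the truncation vectors for a fixed $w$ form an orthonormal family.

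First I would observe that since $W$ is classical, the truncation projection $\id_X \otimes (\id_{VW} - \ket{v_j, w_j}\bra{v_j, w_j})$ decomposes as a direct sum across the sectors indexed by $w$: it acts as $\id_X \otimes (\id_V - \ket{v_j}\bra{v_j})$ on the $w = w_j$ sector and as the identity on all other sectors. Consequently, only the $d$ pairs $(\ket{v_i}, w)$ picked within $\mc V_w$ can modify $\rho\mt_{XV|w}$, yielding $\rho^{|G}\mt_{XV|w} = (\id_X \otimes \Pi_w)\rho\mt_{XV|w}(\id_X \otimes \Pi_w)$ where $\Pi_w = \id_V - \sum_{i=1}^d \ket{v_i}\bra{v_i}$. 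Moreover, because each successive $\ket{v_i}$ is drawn from $\mc V_w$ \emph{after} $\mc V_w$ has been restricted to the orthogonal complement of $\ket{v_1},\dots,\ket{v_{i-1}}$, the vectors $\ket{v_1},\dots,\ket{v_d}$ are mutually orthogonal, so $\Pi_w$ is a genuine orthogonal projection.

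Define $P_i := \id_V - \ket{v_i}\bra{v_i}$ and the intermediate states $\rho^{(0)} = \rho\mt_{XV|w}$ and $\rho^{(i)} = (\id_X \otimes P_i)\rho^{(i-1)}(\id_X \otimes P_i)$; the mutual orthogonality ensures the $P_i$'s commute and the composition telescopes to $\rho^{(d)} = \rho^{|G}\mt_{XV|w}$. By the triangle inequality and \cref{prop:distance_weight} applied to each step,
\[
\LN{\rho^{(i-1)} - \rho^{(i)}}_\Tr^2 \,\le\, 4\big(\Tr[\rho^{(i-1)}]^2 - \Tr[(\id_X \otimes P_i)\rho^{(i-1)}]^2\big) \,=\, 4\big(\Tr[\rho^{(i-1)}] + \Tr[\rho^{(i)}]\big)\big(\Tr[\rho^{(i-1)}] - \Tr[\rho^{(i)}]\big).
\]
The sum $\Tr[\rho^{(i-1)}] + \Tr[\rho^{(i)}] \le 2\Tr[\rho\mt_{XV|w}]$ since trace is monotone under projection, while the difference equals $\Tr[(\id_X \otimes \ket{v_i}\bra{v_i})\rho^{(i-1)}] = \Tr[\rho^{(i-1)}\mt_{X|v_i}]$.

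The step I expect to require the most care is arguing that the conditional trace $\Tr[\rho^{(i-1)}\mt_{X|v_i}]$ is exactly $\Tr[\rho\mt_{X|v_i,w}]$, i.e., that the earlier truncations do not shrink the weight on $\ket{v_i}$. But this follows directly from the orthogonality: since $P_j \ket{v_i} = \ket{v_i}$ for every $j < i$, sandwiching by $(\id_X \otimes \bra{v_i})$ and its adjoint commutes past the $P_j$'s, so $\rho^{(i-1)}\mt_{X|v_i} = \rho\mt_{X|v_i,w}$. Combining these bounds yields $\LN{\rho^{(i-1)} - \rho^{(i)}}_\Tr \le 2\sqrt{2}\sqrt{\Tr[\rho\mt_{XV|w}]\Tr[\rho\mt_{X|v_i,w}]}$, and summing over $i$ gives the claim since $2\sqrt 2 < 3$.
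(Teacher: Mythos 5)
Your proof is correct, and it rests on the same key ingredient as the paper's: \cref{prop:distance_weight} together with the observation that the $\ket{v_i}$ picked within a single $w$-sector are mutually orthogonal. The difference is purely in how you deploy that proposition. The paper applies it \emph{once}, to the single projection $\Pi_w = \id_X\otimes(\id_V-\sum_i\ket{v_i}\bra{v_i})$, gets $\Tr[\rho]-\Tr[\Pi\rho]=\sum_i\Tr[\rho\mt_{X|v_i,w}]$ in one stroke, and then splits $\sqrt{\sum a_i}\le\sum\sqrt{a_i}$ at the end. You instead apply it $d$ times, once per rank-one projection, and glue the steps with the triangle inequality; this forces you to track the intermediate states $\rho^{(i)}$ and to show that the earlier projections leave the weight $\Tr[\rho\mt_{X|v_i,w}]$ on $\ket{v_i}$ untouched (which you correctly derive from $P_j\ket{v_i}=\ket{v_i}$ for $j<i$). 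Both routes land on the same constant $2\sqrt2<3$, so there is no quantitative loss; yours is slightly longer because of the extra bookkeeping, but it has the virtue of mirroring the actual sequential structure of the truncation procedure, whereas the paper's version is more compact. Either is a valid proof.
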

	\begin{proof}
	   In \cref{prop:distance_weight}, take $\rho$ to be $\rho\mt_{XV|w}$, and $\Pi$ to be \[\id_X\otimes\prod_{i=1}^d\left(\id_{V}-\ket{v_i}\bra{v_i}\right)=
	   \id_X\otimes\left(\id_{V}-\sum_{i=1}^d\ket{v_i}\bra{v_i}\right).\]
	   Then $\Pi\rho\Pi=\rho^{|G}_{XV|w}$ and
	   $\Tr[\Pi\rho]=\Tr[\rho\mt_{XV|w}]-\sum_{i=1}^d\Tr[\rho\mt_{X|v_i,w}]$. Therefore we have
       \begin{align*}
           \lN{\rho\mt_{XV|w}-\rho^{|G}_{XV|w}}_\Tr &\leq\sqrt{4 \Tr[\rho]^2 -4 \Tr[\Pi\rho]^2} \\
           &\leq \sqrt{8(\Tr[\rho]-\Tr[\Pi\rho])\Tr[\rho]} \\
           &= \sqrt{8\sum_{i=1}^d\Tr[\rho\mt_{X|v_i,w}] \Tr[\rho\mt_{XV|w}]} \\
           &\leq 3\sum_{i=1}^d \sqrt{\Tr[\rho\mt_{X|v_i,w}] \Tr[\rho\mt_{XV|w}]}. \qedhere
       \end{align*}
	\end{proof}
	Since $\Tr[\rho\mt_{XV|w}]\leq 1$ always holds, by summing over all $w\in\mc W$ we get the following corollary:
	\begin{corollary}\label{cor:traced}
	    Let $\ket{v_1,w_1},\ldots,\ket{v_d,w_d}$ be all of the memory states picked in step 2. Then
	    \[\lN{\rho\mt_{XVW}-\rho^{|G}_{XVW}}_\Tr\leq 3\sum_{i=1}^d \sqrt{\Tr[\rho\mt_{X|v_i,w_i}]}.\]
	\end{corollary}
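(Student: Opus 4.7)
The plan is to reduce the corollary to Lemma~\ref{lemma:traced} by exploiting the block structure that the classical register $W$ forces on both $\rho_{XVW}$ and $\rho^{|G}_{XVW}$. Since $W$ is classical and the truncation procedure only projects inside each fixed-$w$ block (it acts via $\id_X\otimes(\id_{VW}-\ket{v,w}\bra{v,w})$, which preserves classicality of $W$), the operator $\rho_{XVW}-\rho^{|G}_{XVW}$ is block-diagonal with blocks indexed by $w$, and its $w$-block is precisely $\rho_{XV|w}-\rho^{|G}_{XV|w}$. For such a block-diagonal Hermitian operator the trace norm is additive over blocks, so
\[\lN{\rho_{XVW}-\rho^{|G}_{XVW}}_\Tr=\sum_{w\in\mc W}\lN{\rho_{XV|w}-\rho^{|G}_{XV|w}}_\Tr.\]

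Next I would apply Lemma~\ref{lemma:traced} separately to each $w$. For fixed $w$, let $\ket{v_{1,w}},\ldots,\ket{v_{d_w,w}}$ denote the subsequence of the global truncation list $(\ket{v_1,w_1}),\ldots,(\ket{v_d,w_d})$ with second coordinate equal to $w$; by construction these are exactly the vectors picked in step~2 within $\mc V_w$. The lemma then yields
\[\lN{\rho_{XV|w}-\rho^{|G}_{XV|w}}_\Tr\;\leq\;3\sum_{i=1}^{d_w}\sqrt{\Tr[\rho_{X|v_{i,w},w}]\,\Tr[\rho_{XV|w}]}.\]
Because $\rho_{XVW}$ is a partial density operator with $\Tr[\rho_{XVW}]\leq 1$, and $\Tr[\rho_{XV|w}]$ is exactly the weight of the $w$-block of $\rho_{XVW}$, we have $\Tr[\rho_{XV|w}]\leq 1$ for every $w$, so the $\sqrt{\Tr[\rho_{XV|w}]}$ factor can simply be dropped.

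Summing over $w\in\mc W$ and observing that the disjoint union of the per-$w$ subsequences is exactly the global list $(\ket{v_1,w_1}),\ldots,(\ket{v_d,w_d})$, we obtain
\[\lN{\rho_{XVW}-\rho^{|G}_{XVW}}_\Tr\;\leq\;3\sum_{w\in\mc W}\sum_{i=1}^{d_w}\sqrt{\Tr[\rho_{X|v_{i,w},w}]}\;=\;3\sum_{i=1}^{d}\sqrt{\Tr[\rho_{X|v_i,w_i}]},\]
which is the desired bound. There is essentially no hard step here: the only point that requires a moment of care is the additivity of the trace norm over the $W$-blocks, but this is immediate from the classicality definition in Section~\ref{sec:cq-system} (off-diagonal $w\neq w'$ matrix elements vanish) together with the fact that the truncation procedure never mixes different $w$-sectors.
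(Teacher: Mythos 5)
Your proposal is correct and follows essentially the same route as the paper: apply \cref{lemma:traced} to each block $\rho_{XV|w}$, drop the factor $\sqrt{\Tr[\rho_{XV|w}]}\leq 1$, and sum over $w$. The paper leaves the block-diagonality of $\rho_{XVW}-\rho^{|G}_{XVW}$ in $W$ and the resulting additivity of the trace norm implicit (its proof is just the single remark preceding the corollary), whereas you spell these out, but there is no substantive difference in the argument.
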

	
	\subsection{Truncated Branching Program}\label{sec:truncd}
	The properties that we desire for the partial system $\rho\mt_{XVW}$ consist of three parts: 
	\begin{itemize}
		\item Small $L_2$ norm: Let $G_2(v,w)$ be the property that 
		\[\lN{P^\rho_{X|v,w}}_2\leq 2^\ell\cdot 2^{-n/2}.\]
		\item Small $L_\infty$ norm: Let $G_\infty(v,w)$ be the property that \[\lN{P^\rho_{X|v,w}}_\infty\leq 2^{2\ell+9r}\cdot 2^{-n}.\]
		\item Even division: For every $a\in\mc A$, let $G_a(v,w)$ be the property that \[|\langle M_a, P^\rho_{X|v,w}\rangle|\leq 2^{-r}.\]
	\end{itemize}

	Now we define the truncated branching program, by specifying the truncated partial classical-quantum system $\taut_{XVW}$ for each stage $t$. Initially let $\tau^{(0)}_{XVW}=\rho^{(0)}_{XVW}$. For each stage $0\leq t\leq T$, the truncation consists of three ingredients (below we ignore the superscripts on $P$ for convenience):
	\begin{enumerate}
		\item Remove parts where $\lN{P_{X|v,w}}_2$ is large. That is, let $\tau^{(t,\star)}_{XVW}=\tau^{(t)|G_2}_{XVW}$.
		\item Remove parts where $\lN{P_{X|v,w}}_\infty$ is large. This is done by two steps. 
		\begin{itemize}
			\item[-] First, let $g\in\{0,1\}^{\mc X\otimes\mc W}$ be an indicator vector such that $g(x,w)=1$ if and only if 
			\[\Tr[\tau^{(t,\star)}_{X|w}]>0\textrm{ and }P^{\tau^{(t,\star)}}_{X|w}(x)\leq 2^{2\ell+5r}\cdot 2^{-n}.\]
			Let $\tau^{(t,\circ)}_{XVW}=(gg\ct\otimes\id_V)\tau^{(t,\star)}_{XVW}(gg\ct\otimes\id_V)$, where $gg\ct$ is the projection operator acting on $\mc X\otimes\mc W$.
			\item[-] To make sure that the distributions did not change a lot after the projection $gg\ct$, for each $0\leq t< T$, let $G_t(v,w)$ be the property that
			\[\Tr[\tau^{(t,\circ)}_{X|v,w}]\geq (1-2^{-r})\Tr[\tau^{(t,\star)}_{X|v,w}].\]
			Let $\tau^{(t,\infty)}_{XVW}=\tau^{(t,\circ)|G_\infty\wedge G_t}_{XVW}$. 
		\end{itemize}
		\item For each $a\in\mc A$, remove (only for this $a$) parts where $P_{X|v,w}$ is not evenly divided by $a$. That is, for each $a\in\mc A$, let $\tau^{(t,a)}_{XVW}=\tau^{(t,\infty)|G_a}_{XVW}$.
	\end{enumerate}
	Then, if $t<T$, for each $a\in\mt_R \mc A$ we evolve the system by applying the sample operations $\Phi_{t,a,b}$ as the original branching program on $\tau^{(t,a)}_{XVW}$, so that we have
	\[\tau^{(t+1)}_{XVW}=\E_{a\in\mt_R\mc A}
	\left[\sum_{x\in\mc X}\ket{x}\bra{x}\otimes\Phi_{t,a,M(a,x)}\big(\tau^{(t,a)}_{VW|x}\big)\right].\]
	
	\subsection{Bounding the Truncation Difference}
	
	In order to show that the success probability of the original branching program $\rhot$ is low, the plan is to prove an upper bound on the success probability of the truncated branching program $\taut$, and bound the difference between the two probabilities.
	
	Here we bound the difference by the trace distance between the two systems $\rhot_{XVW}$ and $\taut_{XVW}$. We will show that the contribution to the trace distance from each one of the truncation ingredients is small, and in addition the evolution preserves the trace distance.
	
	\subsubsection{\texorpdfstring{Truncation by $G_2$}{Truncation by G2}}
	
	\begin{lemma}\label{lemma:main}
		For every $0\leq t\leq T$, $\ket{v}\in\mc V$ and $w\in\mc W$ such that $G_2(v,w)$ is violated (that is, $\lN{P^{\taut}_{X|v,w}}_2> 2^\ell\cdot 2^{-n/2}$), we must have $\Tr[\taut_{X|v,w}]<2^{-2 m}\cdot 2^{-4 r}$.
	\end{lemma}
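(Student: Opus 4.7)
The plan is to fix the offending $(\ket v, w)$ as a target, set $P := P^{\taut}_{X|v,w}$, and control $\Tr[\taut_{X|v,w}]$ by tracking the evolution of the inner product $\langle P^{\tau^{(s)}}_{X|v',w'}, P\rangle$ across all reachable pairs $(v',w')$ and stages $0\le s\le t$. Initially this inner product equals the baseline $2^{-n}$ uniformly, whereas having non-trivial trace at the target forces the inner product there to be $\|P\|_2^2 \ge 2^{2\ell-n}$. Since this demands an exponential growth over the baseline, it will suffice to show that such growth can occur only on a set of exponentially small total trace weight.

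Following the proof overview, I would define a badness level $\beta(w')$ on the classical register only: a pair $(w',a)$ is \emph{bad} at stage $s$ if some $\ket{v'}\in\mc V$ has $\langle P^{\tau^{(s)}}_{X|v',w'}, P\rangle \ge 2^{-n}$ and witnesses \cref{eq:badevent} with slack $2^{-r}$. The argument then rests on two ingredients. The first, announced as \cref{lemma:badness}, is that for every fixed $(s,w')$ at most a $2^{-\Omega(r)}$ fraction of samples $a\in\mc A$ are bad; this is the single place where the quantum nature of $V$ intrudes and is discussed below. The second is a stage-to-stage bound: the $G_2$- and $G_\infty$-truncations keep $\|P^{\tau^{(s)}}_{X|v',w'}\cdot P\|_2$ and the denominator in \cref{eq:intuition} under control, and combined with the sample-truncation $G_a$ this implies that a non-bad step multiplies the inner product by at most $1+2^{-\Omega(r)}$, while a bad step multiplies it by at most $2$. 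Since $T\le 2^{r-2}$, the non-bad drift over all stages is a constant factor, so after $\beta$ bad events along a computational path the inner product at any reachable $(v',w')$ is at most $2^{\beta+O(1)}\cdot 2^{-n}$. A standard potential/Markov argument then shows that the total trace weight on classical states of badness level $\ge\beta$ is at most $2^{-\Omega(\beta r)}$. Reaching the target demands $\beta \gtrsim 2\ell$, and a union bound over the $\le 2^m$ classical memory states combined with the slack inequalities \cref{eq:lr} and \cref{eq:m} yields $\Tr[\taut_{X|v,w}] \le 2^{m - \Omega(\ell r)} < 2^{-2m-4r}$.

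The main obstacle will be \cref{lemma:badness}: because each bad $a$ may be witnessed by a different quantum state $\ket{v'_a}$, one cannot directly invoke the $(k',\ell')$-$L_2$ extractor property of $M$. The strategy is to produce a single $\ket{v^*}$ that simultaneously witnesses a slightly weakened form of \cref{eq:badevent} for a non-negligible fraction of the bad $a$'s; after an appropriate congruent transformation this becomes a quadratic-threshold inequality in $\ket{v^*}$, and \cref{lemma:anticc} implies that a uniformly random unit $\ket{v^*}$ satisfies it for most such $a$, provided $d=2^q\le 2^{r-7}$, which is precisely the theorem's hypothesis. A delicate preliminary step is to first restrict the witnesses $\ket{v'_a}$ to the subspace $\mc V'\subseteq\mc V$ on which $\langle P^{\tau^{(s)}}_{X|\cdot,w'}, P\rangle \ge 2^{-n}$, defined via projections analogous to those in \cref{sec:trunc}, and to verify that this projection damages \cref{eq:badevent} by only a sub-constant factor. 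This ensures that the resulting $\ket{v^*}$ satisfies both $\langle P^{\tau^{(s)}}_{X|v^*,w'}, P\rangle \ge 2^{-n}$ and (thanks to the $G_2$-truncation already in place) $\|P^{\tau^{(s)}}_{X|v^*,w'}\|_2 \le 2^\ell\cdot 2^{-n/2}$, so the extractor hypothesis applies and the simultaneous-witness property yields the required contradiction.
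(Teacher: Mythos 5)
Your proposal follows essentially the same route as the paper: fix the target distribution $P = P^{\taut}_{X|v,w}$, define badness on classical-memory/sample pairs $(w',a)$ witnessed existentially over $\mc V$, bound the bad-sample probability via anti-concentration of quadratic forms after a congruent transformation (the content of \cref{lemma:badness}), and use the subspace restriction $\mc V'$ (with projection damaging the witness only sub-constantly) to ensure the extractor hypothesis applies. The structure of the progress argument — non-bad steps give a $(1+2^{-\Omega(r)})$ multiplicative drift, bad steps at most double, $T\le 2^{r-2}$ keeps the benign drift to a constant — also matches \cref{lemma:ipbound}.

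Two quantitative details in your endgame, however, do not match the paper and would weaken the result. First, the bound on the weight at badness level $\beta$ is $2^{-k\beta}\binom{t}{\beta}\le 2^{-(k-r)\beta}$, driven by the extractor parameter $k=k'-1$ from \cref{lemma:badness}, not by $r$; writing it as $2^{-\Omega(\beta r)}$ conflates $k$ with $r$, which coincide only up to constants for parity learning but in general $k-r$ can be much larger than $r$, and this is precisely what makes the classical memory bound $\Omega(k\ell)$ rather than $\Omega(r\ell)$. Second, there is no union bound over the $2^m$ classical memory states inside this lemma — the chain is simply $\Tr[\taut_{X|v,w}]\le\Tr[\taut_{B|w}]$, then the weighted-sum inequality $\sum_{\beta\ge\ell} P^{\taut}_{B|w}(\beta)\cdot 2^\beta > 2^\ell$ together with $\bra{\beta}\taut_B\ket{\beta}\le 2^{-(k-r)\beta}$ gives $\Tr[\taut_{B|w}]<2\cdot 2^{-(k-r)\ell}\le 2^{-2m-4r}$ directly via \cref{eq:m}; the $2^{q+m}$ union bound occurs one level up, in \cref{cor:g_2}, which is why the lemma must deliver the extra $2^{-2m-4r}$ slack. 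Adding a spurious $2^m$ factor as you propose would require a strictly tighter constraint on $m$ than \cref{eq:m} provides. (Also, the relevant threshold is $\beta\ge\ell$, not $\beta\gtrsim 2\ell$, since the argument works with the weighted sum $\sum_\beta P_{B|w}(\beta)\cdot 2^\beta$ rather than a hard cutoff on $\beta$.)
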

    The lemma says, for any direction $\ket{v, w}$ picked by the truncation procedure, the weight will be small and the truncation will not change the state significantly.
	\begin{proof}
		This is our main technical lemma and we defer the proof to \cref{sect:proof_lemma_main}.
	\end{proof}
	
	Since there are at most $2^{q+m}$ such directions picked in the truncation procedure, we conclude the following corollary. 
	\begin{corollary}\label{cor:g_2}
		For every $0\leq t\leq T$, we have $\lN{\tau^{(t,\star)}_{XVW}-\taut_{XVW}}_\Tr\leq 3\cdot2^{q-2r}$.
	\end{corollary}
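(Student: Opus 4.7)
The plan is to combine Lemma \ref{lemma:main} with the general trace-distance bound from Corollary \ref{cor:traced}. Corollary \ref{cor:traced} says that after performing the $G_2$ truncation procedure on $\tau^{(t)}_{XVW}$, the trace distance $\|\tau^{(t)}_{XVW}-\tau^{(t,\star)}_{XVW}\|_\Tr$ is bounded by $3\sum_i \sqrt{\Tr[\tau^{(t)}_{X|v_i,w_i}]}$, where $\ket{v_i,w_i}$ ranges over the directions picked during the procedure. So there are only two pieces of information needed: a uniform upper bound on each weight $\Tr[\tau^{(t)}_{X|v_i,w_i}]$, and a count on the number of directions $d$ picked.

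For the weight bound, Lemma \ref{lemma:main} directly provides $\Tr[\tau^{(t)}_{X|v,w}] < 2^{-2m}\cdot 2^{-4r}$ for any $\ket{v,w}$ where the $G_2$ property fails, which is exactly the criterion used in step 2 of the truncation procedure. For the counting, I would observe that within each $w\in\mc W$, the truncation procedure shrinks the subspace $\mc V_w$ by one dimension each time a vector is picked, so at most $2^q=\dim\mc V$ vectors can be picked for each fixed $w$. Summing over $w$ yields $d\leq 2^{q+m}$.

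Plugging these two facts into the bound from Corollary \ref{cor:traced} gives
\[
\|\tau^{(t)}_{XVW}-\tau^{(t,\star)}_{XVW}\|_\Tr \;\leq\; 3\cdot 2^{q+m}\cdot\sqrt{2^{-2m}\cdot 2^{-4r}} \;=\; 3\cdot 2^{q-2r},
\]
as desired. There is no real obstacle here since all the heavy lifting is in Lemma \ref{lemma:main}; the corollary is a clean quantitative packaging of that lemma via the generic truncation distance bound.
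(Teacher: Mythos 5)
Your proposal matches the paper's proof exactly: both count at most $2^{q+m}$ directions from the dimension of $\mc V\otimes\mc W$, bound the weight of each by $2^{-2m}\cdot 2^{-4r}$ via \cref{lemma:main}, and plug into the generic bound from \cref{cor:traced}. No discrepancies.
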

	\begin{proof}
	    Recall that $\tau^{(t,\star)}_{XVW}= \tau^{(t)|G_2}_{XVW}$. Since $\dim (\mc V\otimes \mc W)=2^{q+m}$, the truncation lasts for at most $2^{q+m}$ rounds. Since in every round the picked $\ket{v,w}$ has $\Tr[\taut_{X|v,w}]<2^{-2 m}\cdot 2^{-4 r}$, by \cref{cor:traced} we have 
	    \[\lN{\tau^{(t,\star)}_{XVW}-\taut_{XVW}}_\Tr\leq  3\cdot 2^{q+m}\cdot \sqrt{2^{-2 m}\cdot 2^{-4 r}}= 3\cdot2^{q-2r}. \qedhere\]
	\end{proof}
	
	\subsubsection{\texorpdfstring{Truncation by $G_\infty$}{Truncation by Ginfty}}
	
	\begin{lemma}\label{lemma:cinfty}
		For every $0\leq t\leq T$ and $w\in\mc W$ we have
		\[\sum_{\substack{x\in\mc X\\g(x,w)=0}}P^{\tau^{(t,\star)}}_{X|w}(x)\leq 2^{-5r}.\]
	\end{lemma}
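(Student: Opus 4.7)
The plan is to show that $P^{\tau^{(t,\star)}}_{X|w}$ itself has small $\ell_2$-norm, and then to derive the tail bound from a Markov-type inequality on $\ell_2^2$. If $\Tr[\tau^{(t,\star)}_{X|w}]=0$ then $g(x,w)=0$ for all $x$ by definition, and the sum on the left-hand side is an empty sum (under our convention that $P^\rho$ is only referenced when $\Tr[\rho]>0$), so we may assume $\Tr[\tau^{(t,\star)}_{X|w}]>0$.

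First, by the remark following \cref{lemma:id_dist}, since the property $G_2$ depends only on the distribution $P^\rho_{X|v,w}$, after $G_2$-truncation the inequality $\lN{P^{\tau^{(t,\star)}}_{X|v,w}}_2\leq 2^\ell\cdot 2^{-n/2}$ holds for \emph{every} $\ket{v}\in\mc V$ with $\Tr[\tau^{(t,\star)}_{X|v,w}]>0$ (not just those in the remaining subspace $\mc V_w$). By \cref{claim:lotp} applied to $\tau^{(t,\star)}_{XVW}$, the distribution $P^{\tau^{(t,\star)}}_{X|w}$ is a convex combination of such $P^{\tau^{(t,\star)}}_{X|v,w}$. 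Since the $\ell_2$-norm is convex (triangle inequality), we conclude
\[
    \lN{P^{\tau^{(t,\star)}}_{X|w}}_2 \leq 2^\ell\cdot 2^{-n/2}, \quad\text{equivalently}\quad \sum_{x\in\mc X} \left(P^{\tau^{(t,\star)}}_{X|w}(x)\right)^2 \leq 2^{2\ell}\cdot 2^{-n}.
\]

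Now set $\theta = 2^{2\ell+5r}\cdot 2^{-n}$ and let $S=\{x\in\mc X : P^{\tau^{(t,\star)}}_{X|w}(x) > \theta\}$, which is exactly the set of $x$ for which $g(x,w)=0$. For every $x\in S$ we have $P^{\tau^{(t,\star)}}_{X|w}(x)^2 > \theta\cdot P^{\tau^{(t,\star)}}_{X|w}(x)$, and therefore
\[
    \theta\cdot \sum_{x\in S} P^{\tau^{(t,\star)}}_{X|w}(x) \ \leq\ \sum_{x\in S} \left(P^{\tau^{(t,\star)}}_{X|w}(x)\right)^2 \ \leq\ 2^{2\ell}\cdot 2^{-n}.
\]
Dividing by $\theta$ gives the desired bound $\sum_{x:g(x,w)=0} P^{\tau^{(t,\star)}}_{X|w}(x) \leq 2^{-5r}$.

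There is no real obstacle here: the only subtle point is to justify that the $\ell_2$-norm bound obtained from $G_2$-truncation for conditional distributions on pure quantum states $\ket{v}$ lifts to a bound on the marginal $P^{\tau^{(t,\star)}}_{X|w}$, which we do via \cref{claim:lotp} and convexity. Everything else is a standard ``level-set'' Markov estimate.
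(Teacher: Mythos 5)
Your proof is correct and follows the same approach as the paper: use \cref{claim:lotp} together with the remark after \cref{lemma:id_dist} to conclude that $P^{\tau^{(t,\star)}}_{X|w}$ has small $\ell_2$-norm by convexity, then apply a Markov/level-set argument to the squared probabilities. The only cosmetic differences are that you explicitly handle the edge case $\Tr[\tau^{(t,\star)}_{X|w}]=0$ and unfold the Markov inequality into the level-set sum, but the underlying argument is identical to the paper's.
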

	\begin{proof}
		By \cref{claim:lotp}, $P^{\tau^{(t,\star)}}_{X|w}$ is a convex combination of $P^{\tau^{(t,\star)}}_{X|v,w}$. From \cref{lemma:id_dist} we know that $G_2(P^{\tau^{(t,\star)}}_{X|v,w})$ holds for every $\ket{v}$ and $w$, and thus by convexity of $\ell_2$-norms we know that $G_2(P^{\tau^{(t,\star)}}_{X|w})$ also holds. That means
		\[\E_{x\sim P^{\tau^{(t,\star)}}_{X|w}}\left[P^{\tau^{(t,\star)}}_{X|w}(x)\right]
		=\lN[2]{P^{\tau^{(t,\star)}}_{X|w}}_2\leq 2^{2\ell}\cdot 2^{-n}. \]
		Therefore, by Markov's inequality we have
		\[\sum_{\substack{x\in\mc X\\g(x,w)=0}}P^{\tau^{(t,\star)}}_{X|w}(x)=
		\Pr_{x\sim P^{\tau^{(t,\star)}}_{X|w}}\left[P^{\tau^{(t,\star)}}_{X|w}(x)
		>2^{2\ell+5r}\cdot 2^{-n}\right]
		\leq 2^{-5r}. \qedhere\]
	\end{proof}
	\begin{corollary}\label{cor:g_inf1}
		For every $0\leq t\leq T$ and every $w \in \mc W$, we have $\tau^{(t,\circ)}_{XV|w}\leq \tau^{(t,\star)}_{XV|w}$, and 
		\[\Tr[\tau^{(t,\circ)}_{XV|w}]\geq (1-2^{-5r})\cdot \Tr[\tau^{(t,\star)}_{XV|w}].\]
		Moreover, we have $\lN{\tau^{(t,\circ)}_{XVW}-\tau^{(t,\star)}_{XVW}}_\Tr\leq 2^{-5r}$.
	\end{corollary}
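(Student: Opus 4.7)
The plan is to establish each of the three claims in turn; the whole argument rests on the classical structure of the $X$ and $W$ registers together with \cref{lemma:cinfty}.

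First, for $\tau^{(t,\circ)}_{XV|w}\le \tau^{(t,\star)}_{XV|w}$: since $X$ is classical, I can write $\tau^{(t,\star)}_{XV|w}=\sum_{x\in\mc X}\ket{x}\bra{x}\otimes \sigma_{x,w}$ for PSD operators $\sigma_{x,w}$ on $\mc V$. The operator $gg\ct\otimes \id_V$, restricted to the $w$-block, acts as $\Diag(g(\cdot,w))\otimes \id_V$, which simply selects the blocks with $g(x,w)=1$. Hence
\[\tau^{(t,\star)}_{XV|w}-\tau^{(t,\circ)}_{XV|w}=\sum_{x:\,g(x,w)=0}\ket{x}\bra{x}\otimes \sigma_{x,w}\ge 0,\]
which gives the PSD ordering.

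Second, taking traces in the same block decomposition, $\Tr[\sigma_{x,w}]=\Tr[\tau^{(t,\star)}_{XV|w}]\cdot P^{\tau^{(t,\star)}}_{X|w}(x)$, assuming the denominator is positive (the case $\Tr[\tau^{(t,\star)}_{XV|w}]=0$ is trivial since then $\tau^{(t,\circ)}_{XV|w}=0$ as well). Summing the PSD difference operator's trace over $x$ with $g(x,w)=0$ and invoking \cref{lemma:cinfty} gives $\Tr[\tau^{(t,\star)}_{XV|w}]-\Tr[\tau^{(t,\circ)}_{XV|w}]\le 2^{-5r}\cdot \Tr[\tau^{(t,\star)}_{XV|w}]$, which is exactly the second claim after rearranging.

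Finally, because $W$ is classical the global trace distance decomposes over $w$: $\lN{\tau^{(t,\star)}_{XVW}-\tau^{(t,\circ)}_{XVW}}_\Tr=\sum_w \lN{\tau^{(t,\star)}_{XV|w}-\tau^{(t,\circ)}_{XV|w}}_\Tr$. By the first step each summand is PSD, so its trace norm equals its trace, and by the second step this is at most $2^{-5r}\cdot\Tr[\tau^{(t,\star)}_{XV|w}]$. Summing in $w$ yields $2^{-5r}\cdot\Tr[\tau^{(t,\star)}_{XVW}]\le 2^{-5r}$ since partial systems have trace at most one. There is no real obstacle here: the corollary is a routine unpacking of the classical block structure plus the $\ell_\infty$-mass bound of \cref{lemma:cinfty}, and the one thing to be slightly careful about is confirming that the diagonal action of $gg\ct$ preserves the per-$x$ block decomposition so that the difference operator really is PSD (rather than merely Hermitian), which is what allows the trace norm to collapse to a trace.
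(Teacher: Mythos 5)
Your proof is correct and follows essentially the same route as the paper: decompose into $x$-blocks to see that $\tau^{(t,\star)}_{XV|w}-\tau^{(t,\circ)}_{XV|w}=\sum_{x:\,g(x,w)=0}\ket{x}\bra{x}\otimes\tau^{(t,\star)}_{V|x,w}$ is PSD, convert per-block traces to $P^{\tau^{(t,\star)}}_{X|w}(x)$-mass, apply \cref{lemma:cinfty}, and use positivity plus classicality of $W$ to collapse the trace norm to a sum of traces.
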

	\begin{proof}
	    Since $X$ and $W$ are both classical and $\tau^{(t,\circ)}_{XVW}=(gg\ct\otimes\id_V)\tau^{(t,\star)}_{XVW}(gg\ct\otimes\id_V)$, we have
	    \[\tau^{(t,\star)}_{XV|w}-\tau^{(t,\circ)}_{XV|w} = \sum_{\substack{x\in\mc X\\g(x,w)=0}}\ket{x}\bra{x}\otimes \tau^{(t,\star)}_{V|x,w},\]
	    which is positive semi-definite. Recalling that (\Cref{eq:induced_prob}) 
	    \[\Tr[\tau^{(t,\star)}_{V|x, w}] = \bra{x,w}\tau^{(t,\star)}_{XW}\ket{x,w} = \diag \tau^{(t,\star)}_{X|w}(x) = P^{\tau^{(t, \star)}}_{X|w}(x) \Tr[\tau^{(t, \star)}_{X|w}],\] we have
	    \[
	        \Tr[\tau^{(t,\star)}_{XV|w}]-\Tr[\tau^{(t,\circ)}_{XV|w}]
	        = \sum_{\substack{x\in\mc X\\g(x,w)=0}}
	        \Tr[\tau^{(t,\star)}_{V|x,w}]
	        = \sum_{\substack{x\in\mc X\\g(x,w)=0}}
	        P^{\tau^{(t,\star)}}_{X|w}(x)\cdot \Tr[\tau^{(t,\star)}_{X|w}]
	        \leq 2^{-5r}\cdot \Tr[\tau^{(t,\star)}_{X|w}].
	    \]
	    And therefore, as $\tau^{(t,\circ)}_{XVW}-\tau^{(t,\star)}_{XVW}$ is positive semi-definite, we have
	    \[\lN{\tau^{(t,\circ)}_{XVW}-\tau^{(t,\star)}_{XVW}}_\Tr=
	    \sum_{w\in\mc W}\Tr[\tau^{(t,\star)}_{XV|w}]-\Tr[\tau^{(t,\circ)}_{XV|w}]\leq 
	    2^{-5r}\sum_{w\in\mc W}\Tr[\tau^{(t,\star)}_{X|w}]\leq 2^{-5r}. \qedhere\]
	\end{proof}
	
	\begin{lemma}
		For every $0\leq t\leq T$, $\ket{v}\in\mc V$ and $w\in\mc W$ such that $G_\infty(v,w)$ is violated (that is, $\lN{P^{\tau^{(t,\circ)}}_{X|v,w}}_\infty> 2^{2\ell+9r}\cdot 2^{-n}$) or $G_t(v,w)$ is violated (that is, $\Tr[\tau^{(t,\circ)}_{X|v,w}]< (1-2^{-r})\Tr[\tau^{(t,\star)}_{X|v,w}]$), we must have $\Tr[\tau^{(t,\circ)}_{X|v,w}]<2\cdot 2^{-4r}\cdot \Tr[\tau^{(t,\circ)}_{X|w}]$.
	\end{lemma}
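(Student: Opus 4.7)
I plan to split into the two sub-cases according to which of $G_\infty(v,w)$ or $G_t(v,w)$ fails, and in each case first bound $\Tr[\tau^{(t,\circ)}_{X|v,w}]$ in terms of $\Tr[\tau^{(t,\star)}_{X|w}]$, then invoke \cref{cor:g_inf1} to convert the right-hand side into $\Tr[\tau^{(t,\circ)}_{X|w}]$, losing at most a factor of $1/(1-2^{-5r})\le 2$. The common workhorse is the elementary pointwise bound
\[\bra{x,v,w}\sigma\mt_{XVW}\ket{x,v,w}\ \le\ \bra{x,w}\sigma\mt_{XW}\ket{x,w},\]
valid for any positive semi-definite $\sigma\mt_{XVW}$ with classical $X,W$ and any unit vector $\ket v\in\mc V$, obtained by extending $\ket v$ to an orthonormal basis of $\mc V$ and discarding the remaining non-negative terms in the partial trace over $V$. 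Equivalently, $P^{\sigma}_{X|v,w}(x)\cdot\Tr[\sigma\mt_{X|v,w}]\le P^{\sigma}_{X|w}(x)\cdot\Tr[\sigma\mt_{X|w}]$.

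For the $G_\infty$ violation, I would pick a witness $x^\star\in\mc X$ with $P^{\tau^{(t,\circ)}}_{X|v,w}(x^\star)>2^{2\ell+9r}\cdot 2^{-n}$. Since the projection $gg\ct\otimes\id\mt_V$ zeroes out every $(x,w)$ with $g(x,w)=0$, positivity at $x^\star$ forces $g(x^\star,w)=1$; hence $P^{\tau^{(t,\star)}}_{X|w}(x^\star)\le 2^{2\ell+5r}\cdot 2^{-n}$ by the definition of $g$, and $\bra{x^\star,w}\tau^{(t,\circ)}_{XW}\ket{x^\star,w}=\bra{x^\star,w}\tau^{(t,\star)}_{XW}\ket{x^\star,w}$. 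Applying the workhorse to $\sigma=\tau^{(t,\circ)}$ at the point $x=x^\star$ and dividing through by $2^{2\ell+9r}\cdot 2^{-n}$ then gives $\Tr[\tau^{(t,\circ)}_{X|v,w}]<2^{-4r}\Tr[\tau^{(t,\star)}_{X|w}]$.

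For the $G_t$ violation, I would rewrite the hypothesis as $\Tr[\tau^{(t,\star)}_{X|v,w}]-\Tr[\tau^{(t,\circ)}_{X|v,w}]>2^{-r}\Tr[\tau^{(t,\star)}_{X|v,w}]$. Since $\tau^{(t,\star)}_{XVW}-\tau^{(t,\circ)}_{XVW}$ is positive semi-definite by \cref{cor:g_inf1}, the workhorse bound applies to it; summing over $x\in\mc X$ yields
\[\Tr[\tau^{(t,\star)}_{X|v,w}]-\Tr[\tau^{(t,\circ)}_{X|v,w}]\ \le\ \Tr[\tau^{(t,\star)}_{X|w}]-\Tr[\tau^{(t,\circ)}_{X|w}]\ \le\ 2^{-5r}\Tr[\tau^{(t,\star)}_{X|w}],\]
where the last inequality is again \cref{cor:g_inf1}. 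Combined with the hypothesis this forces $\Tr[\tau^{(t,\star)}_{X|v,w}]<2^{-4r}\Tr[\tau^{(t,\star)}_{X|w}]$, and the inequality $\Tr[\tau^{(t,\circ)}_{X|v,w}]\le\Tr[\tau^{(t,\star)}_{X|v,w}]$ (also from \cref{cor:g_inf1}) inherits the same bound. In both sub-cases, converting $\Tr[\tau^{(t,\star)}_{X|w}]\le 2\Tr[\tau^{(t,\circ)}_{X|w}]$ via \cref{cor:g_inf1} one last time yields the desired $\Tr[\tau^{(t,\circ)}_{X|v,w}]<2\cdot 2^{-4r}\Tr[\tau^{(t,\circ)}_{X|w}]$. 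I do not anticipate a real obstacle here — the proof is careful but elementary book-keeping — with only two subtleties to watch: in the $G_\infty$ case one must observe that the projection by $g$ already places the witness $x^\star$ into the set where the threshold defining $g$ applies, and in the $G_t$ case one must apply the workhorse inequality to the non-negative operator $\tau^{(t,\star)}-\tau^{(t,\circ)}$ rather than to a single state.
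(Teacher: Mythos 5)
Your proposal is correct and takes essentially the same route as the paper: you bound $\Tr[\tau^{(t,\circ)}_{X|v,w}]$ against $\Tr[\tau^{(t,\star)}_{X|w}]$ and then convert via \cref{cor:g_inf1}, losing a factor of $(1-2^{-5r})^{-1}\le 2$. The ``workhorse'' pointwise bound $\bra{x,v,w}\sigma\mt_{XVW}\ket{x,v,w}\le\bra{x,w}\sigma\mt_{XW}\ket{x,w}$ is exactly what the paper invokes as $\tau^{(t,\circ)}_{X|w}\geq\tau^{(t,\circ)}_{X|v,w}$ in the $G_\infty$ case, and in the $G_t$ case, where the paper passes through the trace norm $\lN{\tau^{(t,\circ)}_{XV|w}-\tau^{(t,\star)}_{XV|w}}_\Tr$, your application of the workhorse to the psd difference $\tau^{(t,\star)}-\tau^{(t,\circ)}$ yields the same quantity since the trace norm of a psd operator is its trace; your unified presentation is arguably a bit cleaner, but the content is identical.
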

	\begin{proof}
		If $G_\infty(v,w)$ is violated, let $x\in\mc X$ be the one such that $P^{\tau^{(t,\circ)}}_{X|v,w}(x)> 2^{2\ell+9r}\cdot 2^{-n}$. If $g(x,w)=0$ then $P^{\tau^{(t,\circ)}}_{X|w}(x)=0$, while if $g(x,w)=1$ then by \cref{cor:g_inf1},
		\[P^{\tau^{(t,\circ)}}_{X|w}(x)\leq  \frac{\Tr[\tau^{(t,\star)}_{X|w}]}{\Tr[\tau^{(t,\circ)}_{X|w}]}\cdot 2^{2\ell+5r}\cdot 2^{-n}\leq (1-2^{-5r})^{-1}\cdot 2^{2\ell+5r}\cdot 2^{-n}.\]
		Hence we always have
		\[\Tr[\tau^{(t,\circ)}_{X|v,w}]
		\leq \frac{P^{\tau^{(t,\circ)}}_{X|w}(x)}{P^{\tau^{(t,\circ)}}_{X|v,w}(x)}\cdot \Tr[\tau^{(t,\circ)}_{X|w}]
		\leq 2\cdot 2^{-4r}\cdot \Tr[\tau^{(t,\circ)}_{X|w}],\]
		where the first inequality comes from the fact that $\tau^{(t,\circ)}_{X|w}\geq \tau^{(t,\circ)}_{X|v,w}$ and \Cref{eq:induced_prob}. 
		
		If $G_t(v,w)$ is violated, since we know from \cref{cor:g_inf1} that
		\begin{align*}
		   \left|\Tr[\tau^{(t,\circ)}_{X|v,w}]-\Tr[\tau^{(t,\star)}_{X|v,w}]\right|
		   &\leq 
		   \lN{\tau^{(t,\circ)}_{XV|w}-\tau^{(t,\star)}_{XV|w}}_\Tr\leq 2^{-5r}\cdot \Tr[\tau^{(t,\star)}_{XV|w}] \\
		   &\leq 2^{-5r}\cdot (1-2^{-5r})^{-1}\cdot \Tr[\tau^{(t,\circ)}_{XV|w}],
		\end{align*}
		therefore from $\Tr[\tau^{(t,\circ)}_{X|v,w}]< (1-2^{-r})\Tr[\tau^{(t,\star)}_{X|v,w}]$ we deduce that
		\begin{align*}
		    \Tr[\tau^{(t,\circ)}_{X|v,w}] &< (2^r-1)\cdot\left( \Tr[\tau^{(t,\star)}_{X|v,w}]-\Tr[\tau^{(t,\circ)}_{X|v,w}]\right)  \\
		    & \leq (2^r-1)\cdot 2^{-5r}\cdot (1-2^{-5r})^{-1}\cdot \Tr[\tau^{(t,\circ)}_{XV|w}] \\
		    & < 2\cdot 2^{-4r}\cdot \Tr[\tau^{(t,\circ)}_{X|w}]. 		\qedhere
		\end{align*}

	\end{proof}

	\begin{corollary}\label{cor:g_inf2}
		For every $0\leq t\leq T$, we have $\lN{\tau^{(t,\infty)}_{XVW}-\tau^{(t,\circ)}_{XVW}}_\Tr\leq 5\cdot 2^{q-2r}$.
	\end{corollary}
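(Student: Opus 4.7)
The plan is to mirror the proof of \cref{cor:g_2}, but apply \cref{lemma:traced} per classical memory state $w$ so that the finer bound provided by the preceding lemma (on $\Tr[\tau^{(t,\circ)}_{X|v,w}]$ relative to $\Tr[\tau^{(t,\circ)}_{X|w}]$) can be used. Recall that whenever $G_\infty(v,w)$ or $G_t(v,w)$ is violated, the preceding lemma furnishes $\Tr[\tau^{(t,\circ)}_{X|v,w}]<2\cdot 2^{-4r}\cdot\Tr[\tau^{(t,\circ)}_{X|w}]$; this is exactly the input the distance bound needs.

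First I would argue that every pick $(v_i,w_i)$ made by the truncation procedure applied to $\tau^{(t,\circ)}_{XVW}$ actually violates $G_\infty$ or $G_t$ on the \emph{original} state $\tau^{(t,\circ)}$, not only on the dynamically projected intermediate state. For $G_t$ this is immediate, since $G_t$ is defined a priori in terms of $\tau^{(t,\circ)}$ and $\tau^{(t,\star)}$. For $G_\infty$ I would observe that the picks $\{\ket{v_j,w_j}\}_{j<i}$ are pairwise orthogonal in $\mc V\otimes\mc W$ (orthogonal across distinct $w$'s trivially, and enforced orthogonal within each fixed $w$ by the shrinking subspace $\mc V_w$). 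Hence the projector $\Pi_i=\id_{VW}-\sum_{j<i}\ket{v_j,w_j}\bra{v_j,w_j}$ fixes $\ket{v_i,w_i}$, which in turn gives $\rho^{(i)}_{X|v_i,w_i}=\tau^{(t,\circ)}_{X|v_i,w_i}$. So a $G_\infty$ violation in the intermediate state $\rho^{(i)}$ immediately transfers to a $G_\infty$ violation of $\tau^{(t,\circ)}$ at the same $(v_i,w_i)$, making the preceding lemma applicable.

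With this in hand I would apply \cref{lemma:traced} separately for each $w\in\mc W$. Since $\dim\mc V_w\leq 2^q$, at most $2^q$ picks carry the memory state $w$, and each such $v_i$ obeys $\Tr[\tau^{(t,\circ)}_{X|v_i,w}]<2\cdot 2^{-4r}\cdot\Tr[\tau^{(t,\circ)}_{X|w}]$. Using $\Tr[\tau^{(t,\circ)}_{X|w}]=\Tr[\tau^{(t,\circ)}_{XV|w}]$, this gives
\begin{align*}
\lN{\tau^{(t,\circ)}_{XV|w}-\tau^{(t,\infty)}_{XV|w}}_\Tr
&\leq 3\sum_{i:\,w_i=w}\sqrt{\Tr[\tau^{(t,\circ)}_{X|v_i,w}]\cdot\Tr[\tau^{(t,\circ)}_{XV|w}]} \\
&\leq 3\cdot 2^q\cdot\sqrt{2\cdot 2^{-4r}}\cdot\Tr[\tau^{(t,\circ)}_{XV|w}]
= 3\sqrt{2}\cdot 2^{q-2r}\cdot\Tr[\tau^{(t,\circ)}_{XV|w}].
\end{align*}
Summing over $w$ and using $\Tr[\tau^{(t,\circ)}_{XVW}]\leq 1$ produces $3\sqrt{2}\cdot 2^{q-2r}\leq 5\cdot 2^{q-2r}$, as claimed.

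The only delicate step is the first one, where I must make sure that an intermediate-state violation is really a violation on the fixed reference state $\tau^{(t,\circ)}$ so that the preceding lemma's bound can be invoked; the orthogonality of the picks handles this cleanly. After that the calculation is routine and essentially identical in shape to \cref{cor:g_2}, with the per-$w$ refinement of \cref{lemma:traced} absorbing the extra $\Tr[\tau^{(t,\circ)}_{X|w}]$ factor.
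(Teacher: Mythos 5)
Your proof is correct and follows essentially the same route as the paper's: applying \cref{lemma:traced} on each fixed $w$ with the bound $\Tr[\tau^{(t,\circ)}_{X|v_i,w}]<2\cdot 2^{-4r}\cdot\Tr[\tau^{(t,\circ)}_{X|w}]$ from the preceding lemma, then summing over $w$. The extra opening step you include — arguing via pairwise orthogonality of the picked $\ket{v_i,w_i}$ that a $G_\infty$-violation on the dynamically projected system is a $G_\infty$-violation on $\tau^{(t,\circ)}$ itself (and that $G_t$ is a priori defined so this holds trivially) — is a legitimate bit of care that the paper leaves implicit; the paper's convention is that $G(v,w)$ is evaluated on the \emph{input} system of the truncation procedure, so the picks are by definition violations on $\tau^{(t,\circ)}$, but your orthogonality argument shows the two readings coincide anyway. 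The numerics match: $3\sqrt{2}\cdot 2^{q-2r}\leq 5\cdot 2^{q-2r}$.
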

	\begin{proof}
	    Recall that $\tau^{(t,\infty)}_{XVW}= \tau^{(t,\circ)|G_\infty\wedge G_t}_{XVW}$. For each $w\in\mc W$, the truncation picks at most $\dim\mc V=2^q$ states $\ket{v,w}$, each with $\Tr[\tau^{(t,\circ)}_{X|v,w}]<2\cdot 2^{-4r}\cdot \Tr[\tau^{(t,\circ)}_{X|w}]$. Therefore by applying \cref{lemma:traced} for each $w\in\mc W$, we have
	    \[\lN{\tau^{(t,\infty)}_{XVW}-\tau^{(t,\circ)}_{XVW}}_\Tr \leq
	    3\cdot\sum_{w\in\mc W}2^q\cdot \sqrt{2\cdot 2^{-4r}}\cdot \Tr[\tau^{(t,\circ)}_{X|w}]\leq 5\cdot 2^{q-2r}. \qedhere\]
	\end{proof}
	
	\subsubsection{\texorpdfstring{Truncation by $G_a$}{Truncation by Ga}}
	
	Notice that in the truncation step from $\tau^{(t,\star)}$ to $\tau^{(t,\circ)}$, the distribution $P^{\tau^{(t,\star)}}_{X|v,w}$ might change and not satisfy $G_2$ anymore. However, with the truncation by $G_t$, any such distribution that changes too much is eliminated, and we have the following guarantee.
	
	\begin{lemma}\label{lemma:2norm}
	    For every $0\leq t\leq T$, $\ket{v}\in\mc V$ and $w\in\mc W$, we have
	    \[\lN{P^{\tau^{(t,\infty)}}_{X|v,w}}_2\leq (1-2^{-r})^{-1}\cdot 2^\ell\cdot 2^{-n/2}.\]
	\end{lemma}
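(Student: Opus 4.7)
The plan is to chain the three intermediate systems $\tau^{(t,\star)}\to\tau^{(t,\circ)}\to\tau^{(t,\infty)}$ and show that the $G_2$ bound already established for $\tau^{(t,\star)}$ propagates to $\tau^{(t,\infty)}$ with only a $(1-2^{-r})^{-1}$ multiplicative loss, coming from the $G_t$ constraint built into the last truncation.

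First I would use \cref{lemma:id_dist} to reduce the question about $\tau^{(t,\infty)}$ to one about $\tau^{(t,\circ)}$. Assuming $\Tr[\tau^{(t,\infty)}_{X|v,w}]>0$ (otherwise the distribution is undefined and the claim is vacuous), the lemma, applied to the truncation $\tau^{(t,\infty)}=\tau^{(t,\circ)|G_\infty\wedge G_t}$, produces some unit vector $\ket{v'}$ in the final remaining subspace $\mc V_w$ with
\[P^{\tau^{(t,\infty)}}_{X|v,w}=P^{\tau^{(t,\circ)}}_{X|v',w}\qquad\text{and}\qquad \Tr[\tau^{(t,\circ)}_{X|v',w}]>0.\]
Because the truncation procedure halts only when every nonzero-weight vector in $\mc V_w$ satisfies $G_\infty\wedge G_t$, this $\ket{v'}$ must in particular satisfy $G_t(v',w)$, giving
\[\Tr[\tau^{(t,\circ)}_{X|v',w}]\geq (1-2^{-r})\,\Tr[\tau^{(t,\star)}_{X|v',w}].\]

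Next I would compare the distributions $P^{\tau^{(t,\circ)}}_{X|v',w}$ and $P^{\tau^{(t,\star)}}_{X|v',w}$ pointwise. Since $gg\ct$ is diagonal on the classical pair $(X,W)$, writing $p_x=\bra{x,v',w}\tau^{(t,\star)}_{XVW}\ket{x,v',w}$, the projection preserves $p_x$ whenever $g(x,w)=1$ and kills it otherwise. After renormalization, the only effect on the induced distribution is that its denominator shrinks by at most a factor $(1-2^{-r})$ relative to the denominator for $\tau^{(t,\star)}$ — this is exactly the $G_t$ inequality above — so for every $x$ we obtain
\[P^{\tau^{(t,\circ)}}_{X|v',w}(x)\leq (1-2^{-r})^{-1}\cdot P^{\tau^{(t,\star)}}_{X|v',w}(x).\]

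Finally I would invoke the remark following \cref{lemma:id_dist}: since $G_2$ depends only on the distribution $P^\rho_{X|v,w}$, it holds for every $\ket{v''},w$ in $\tau^{(t,\star)}=\tau^{(t)|G_2}$, and in particular $\lN{P^{\tau^{(t,\star)}}_{X|v',w}}_2\leq 2^\ell\cdot 2^{-n/2}$. Taking $\ell_2$ norms in the pointwise inequality above and composing the two bounds yields
\[\lN{P^{\tau^{(t,\infty)}}_{X|v,w}}_2=\lN{P^{\tau^{(t,\circ)}}_{X|v',w}}_2\leq (1-2^{-r})^{-1}\lN{P^{\tau^{(t,\star)}}_{X|v',w}}_2\leq (1-2^{-r})^{-1}\cdot 2^\ell\cdot 2^{-n/2}.\]
I do not anticipate a real obstacle; the argument is essentially bookkeeping across the three classical-quantum systems. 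The only subtle point is in the first step, where one must verify that the $\ket{v'}$ produced by \cref{lemma:id_dist} genuinely satisfies $G_t$ — this uses both that it lies in the final $\mc V_w$ and that its trace stays positive, so the truncation procedure's stopping rule applies to it.
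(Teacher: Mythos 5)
Your proposal is correct and follows essentially the same route as the paper: apply \cref{lemma:id_dist} to pass from $\tau^{(t,\infty)}$ to some $\ket{v'}$ in the remaining subspace, use $G_t(v',w)$ (justified, as you note, by the stopping rule of the truncation together with $\Tr[\tau^{(t,\infty)}_{X|v',w}]>0$) to control the denominator, observe that the $gg\ct$ projection can only decrease each diagonal entry, and finish with the $G_2$ bound on $\tau^{(t,\star)}$. The only cosmetic difference is that you state a pointwise inequality $P^{\tau^{(t,\circ)}}_{X|v',w}(x)\leq(1-2^{-r})^{-1}P^{\tau^{(t,\star)}}_{X|v',w}(x)$ and then take $\ell_2$ norms, whereas the paper bounds the ratio $\lN{\diag\tau^{(t,\circ)}_{X|v',w}}_2/\Tr[\tau^{(t,\circ)}_{X|v',w}]$ directly; these are the same computation.
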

	\begin{proof}
	    By \cref{lemma:id_dist}, there exists $\ket{v'}\in\mc V$ such that $P^{\tau^{(t,\infty)}}_{X|v,w}=P^{\tau^{(t,\infty)}}_{X|v',w}=P^{\tau^{(t,\circ)}}_{X|v',w}$. The truncation by $G_t$ ensures that $\Tr[\tau^{(t,\circ)}_{X|v',w}]\geq (1-2^{-r})\Tr[\tau^{(t,\star)}_{X|v',w}]$, and therefore
	    \[\lN{P^{\tau^{(t,\infty)}}_{X|v,w}}_2=\lN{P^{\tau^{(t,\circ)}}_{X|v',w}}_2
	    =\frac{\lN{\diag\tau^{(t,\circ)}_{X|v',w}}_2}{\Tr[\tau^{(t,\circ)}_{X|v',w}]}
	    \leq \frac{\lN{\diag\tau^{(t,\star)}_{X|v',w}}_2}{(1-2^{-r})\Tr[\tau^{(t,\star)}_{X|v',w}]}
	    \leq (1-2^{-r})^{-1}\cdot 2^\ell\cdot 2^{-n/2}. \qedhere
	    \]
	\end{proof}
	
	\begin{lemma}\label{lemma:g_a}
		For every partial classical-quantum system $\tau\mt_{XV}$ over $\mc X\otimes\mc V$ such that $\lN{P^{\tau}_{X|v}}_2\leq 2^{\ell'}\cdot2^{-n/2}$ holds for every $\ket{v}\in\mc V$, we have
		\[\Pr_{a\in\mt_R \mc A}\left[\exists \ket{v}\in\mc V, |\langle M_a, P^{\tau}_{X|v}\rangle|\geq 2^{-r}\right]\leq 2^{-2r}.\]
	\end{lemma}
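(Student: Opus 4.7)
The plan is a contradiction argument. Assume the bad set
\[B:=\{a\in\mc A: \exists \ket v\in\mc V,\ |\langle M_a,P^\tau_{X|v}\rangle|\geq 2^{-r}\}\]
has $|B|>2^{-2r}|\mc A|$. I will extract a single unit vector $\ket{v^*}\in\mc V$ that witnesses the slightly weaker inequality ($2^{-r'}$ in place of $2^{-r}$) for a large fraction of rows $a\in B$. The $(k',\ell')$-$L_2$ extractor property then caps the number of such rows by $2^{-k'}|\mc A|$, which together with $|B|>2^{-2r}|\mc A|$ and $2r+1\leq k'$ (from $r\leq (k'-1)/2$) delivers the contradiction. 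The mechanism for upgrading ``for each $a$ there exists $\ket v$'' into ``there exists $\ket v$ for many $a$'' is the anti-concentration bound \cref{lemma:anticc}, applied after recasting the condition as a quadratic form on the unit sphere.

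For the reduction to a quadratic form, I would write $\tau\mt_{XV}=\sum_x\ket x\bra x\otimes\tau\mt_{V|x}$, which yields $\Tr[\tau\mt_{X|v}]=\bra v\tau\mt_V\ket v$ and
\[\langle M_a,P^\tau_{X|v}\rangle=\frac{\bra v\sigma_a\ket v}{\bra v\tau\mt_V\ket v},\qquad \sigma_a:=\sum_{x\in\mc X}M_a(x)\,\tau\mt_{V|x}.\]
Restricting without loss of generality to the support of $\tau\mt_V$, so that $\tau\mt_V^{1/2}$ is invertible on a subspace of dimension $d\leq 2^q$, and performing the congruent substitution $\ket{\tilde v}\sim \tau\mt_V^{1/2}\ket v$, the ratio collapses to $\bra{\tilde v}\tilde\sigma_a\ket{\tilde v}$ for the Hermitian operator $\tilde\sigma_a:=\tau\mt_V^{-1/2}\sigma_a\tau\mt_V^{-1/2}$. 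In particular, any witness $\ket{v_a}$ for $a\in B$ certifies $\lN{\tilde\sigma_a}_2\geq 2^{-r}$.

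For each $a\in B$ I would then apply \cref{lemma:anticc} to $\tilde\sigma_a$ with parameter $\varepsilon:=2^{r-r'}d$, ensuring $\varepsilon\lN{\tilde\sigma_a}_2/d\geq 2^{-r'}$. By \cref{eq:qr} one has $\sqrt\varepsilon\leq 2^{-r-1/2}$, so a uniformly random unit vector $\ket{\tilde v}$ satisfies $|\bra{\tilde v}\tilde\sigma_a\ket{\tilde v}|\geq 2^{-r'}$ with probability at least $1-c\cdot 2^{-r-1/2}-e^{-d}$. Averaging this count over $a\in B$ produces some fixed unit $\ket{\tilde v^*}$ (with corresponding $\ket{v^*}$) such that $|\langle M_a,P^\tau_{X|v^*}\rangle|\geq 2^{-r'}$ for at least $(1-c\cdot 2^{-r-1/2}-e^{-d})|B|$ rows $a$. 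Together with the hypothesis $\lN{P^\tau_{X|v^*}}_2\leq 2^{\ell'}\cdot 2^{-n/2}$ and the extractor bound, this yields the claimed contradiction after routine parameter bookkeeping.

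The hard part is calibrating the parameters so that the anti-concentration loss $c\sqrt\varepsilon+e^{-d}$ stays small relative to the extractor gap $2^{k'-2r}$; the inequality \cref{eq:qr} is precisely designed to control the $\sqrt\varepsilon$ term while leaving enough slack for the $e^{-d}$ contribution whenever $d$ is not tiny. The degenerate small-dimension regime is handled separately: when $d=1$ the statement reduces directly to applying the extractor to the single distribution $P^\tau_X$, and intermediate $d$ can be absorbed either by an $\varepsilon$-net argument on the unit sphere of $\mc V$ or by a mild strengthening of the parameter budget.
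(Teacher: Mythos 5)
Your proposal is correct and follows essentially the same route as the paper: normalize $\tau\mt_V$ by a congruent transformation so the condition becomes a quadratic form $\bra{v}\sigma_a\ket{v}$, apply \cref{lemma:anticc} to upgrade ``for each $a$ there is a witness $\ket{v}$'' to ``one $\ket{v^*}$ witnesses a (constant) fraction of $a$'s at threshold $2^{-r'}$,'' and then invoke the extractor property for that single $\ket{v^*}$. The paper phrases it as a direct bound rather than a contradiction, and it sets $\varepsilon$ via $\varepsilon\lN{\sigma_a}_2/d=2^{-r'}$ rather than fixing $\varepsilon=2^{r-r'}d$ outright, but these are cosmetic differences; your $\sqrt\varepsilon\leq 2^{-r-1/2}$ bound matches the paper's $2^{(q+r-r')/2}\leq 2^{-r}$ up to constants. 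One small point: your worry about the $e^{-d}$ term for small $d$ and the suggested $\varepsilon$-net or separate $d=1$ case are unnecessary. The paper simply bounds $c\sqrt\varepsilon+e^{-d}\leq c\cdot 2^{-r}+e^{-1}<1/2$ using the standing assumption that $r$ is sufficiently large, which is valid uniformly over $d\geq 1$; in the degenerate $d=1$ case the anti-concentration event is actually deterministic since $|v\ct\sigma_a v|=\lN{\sigma_a}_2\geq 2^{-r}\geq 2^{-r'}$.
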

	\begin{proof} 
		Notice that we can think of $\tau\mt_V=\Tr\mt_X[\tau\mt_{XV}]$ to be $\id_V$. This is because we can first assume that $\tau\mt_V$ is full rank (otherwise change $\mc V$ to its subspace and the conclusion in this lemma still holds), and if we have diagonalization $Q\ct\tau\mt_V Q=\id_V$ for some non-singular $Q$, then consider the new system
		\[\tau'_{XV}=(\id_X\otimes Q\ct)\tau\mt_{XV}(\id_X\otimes Q),\]
		and the set of distributions $\{P^{\tau}_{X|v}\}$ and $\{P^{\tau'}_{X|v}\}$ over $\ket{v}\in\mc V$ are the same, since $P^{\tau'}_{X|v}=P^{\tau}_{X|v'}$ for $\ket{v'}\sim Q\ket{v}$.  With $\tau\mt_V=\id_V$ we have $\Tr[\tau\mt_{X|v}]=1$ for every $\ket{v}\in\mc V$, and thus $P^\tau_{X|v}=\diag \tau\mt_{X|v}$.
		
		Let $\mc A'\subseteq\mc A$ be the set of $a\in\mc A$ such that there exists $\ket{v}\in\mc V$ with $|\langle M_a, P^{\tau}_{X|v}\rangle|\geq 2^{-r}$. For each $a\in\mc A'$, let
		\[\sigma_a=\Tr\mt_X[(\Diag M_a\otimes\id_V)\tau\mt_{XV}]\]
		which is a Hermitian operator on $\mc V$. There exists $\ket{v}\in\mc V$ such that \[|\bra{v}\sigma_a\ket{v}|=|\langle M_a, \diag \tau\mt_{X|v}\rangle|=|\langle M_a, P^{\tau}_{X|v}\rangle|\geq 2^{-r},\]
		which means that $\|\sigma_a\|_2\geq 2^{-r}$. Now let $\ket{u}$ be a uniformly random unit vector in $\mc V$, and by \cref{lemma:anticc} we know that for some absolute constant $c$,
		\[\Pr_{\ket u}\left[|\bra{u}\sigma_a\ket{u}|\geq 2^{-r'}\right]\geq 1-2^{(q+r-r')/2}c-e^{-2^q}\geq 1-2^{-r}c-e^{-1}\geq 1/2.\]
		The second last inequality comes from \cref{eq:qr}, while the last inequality is because of the assumption that $r$ is sufficiently large.
		
		Since the above holds for every $a \in \mc{A}'$, it implies that $\Pr_{a \in \mc{A}', \ket u}[|\langle u|\sigma_a| u\rangle| \geq 2^{-r'}]$ is at least $1/2$. 
		It means that there exists some $\ket{u}\in\mc V$ such that $|\bra{u}\sigma_a\ket{u}|\geq 2^{-r'}$ for at least half of $a\in\mc A'$. On the other hand, since $M$ is a $(k',\ell')$-extractor with error $2^{-r'}$, and $\lN{P^{\tau}_{X|u}}_2\leq 2^{\ell'}\cdot2^{-n/2}$, there are at most $2^{-k'}$ fraction of $a\in\mc A$ such that $|\bra{u}\sigma_a\ket{u}|=|\langle M_a, P^{\tau}_{X|u}\rangle|\geq 2^{-r'}$. That means
		\[\Pr_{a\in\mt_R \mc A}\left[a\in\mc A'\right]\leq 2\cdot 2^{-k'}\leq 2^{-2r}. \]
		Here $k' - 1 \geq 2 r$, by the definition of $r$. 
	\end{proof}
	
	\begin{corollary}\label{cor:g_a}
		For every $0\leq t\leq T$, we have $\E_{a\in\mt_R \mc A}\lN{\tau^{(t,a)}_{XVW}-\tau^{(t,\infty)}_{XVW}}_\Tr\leq 2^{-2r}$.
	\end{corollary}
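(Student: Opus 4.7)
The plan is to apply Lemma~\ref{lemma:g_a} separately to each block $\tau^{(t,\infty)}_{XV|w}$ indexed by the classical memory state $w \in \mc W$, and then sum over $w$. Since $W$ is classical, the $G_a$-truncation decomposes block-wise: each $\tau^{(t,\infty)}_{XV|w}$ is independently projected onto the subspace of $\mc V$ orthogonal to those directions $\ket{v}$ for which $|\langle M_a, P^{\tau^{(t,\infty)}}_{X|v,w}\rangle| > 2^{-r}$. In particular $\tau^{(t,a)}_{XV|w} \leq \tau^{(t,\infty)}_{XV|w}$ as positive semi-definite operators, so
\[\lN{\tau^{(t,\infty)}_{XV|w} - \tau^{(t,a)}_{XV|w}}_\Tr = \Tr[\tau^{(t,\infty)}_{XV|w}] - \Tr[\tau^{(t,a)}_{XV|w}],\]
which is zero when no direction is bad for the pair $(a,w)$, and at most $\Tr[\tau^{(t,\infty)}_{XV|w}]$ otherwise.

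To invoke Lemma~\ref{lemma:g_a} on a fixed block $\tau^{(t,\infty)}_{XV|w}$, I first verify the required $L_2$-bound $\lN{P^{\tau^{(t,\infty)}}_{X|v,w}}_2 \leq 2^{\ell'} \cdot 2^{-n/2}$ for every $\ket{v} \in \mc V$. This is precisely what Lemma~\ref{lemma:2norm} delivers: it gives the bound $(1-2^{-r})^{-1} \cdot 2^{\ell} \cdot 2^{-n/2}$, and the mild constant $(1-2^{-r})^{-1}$ is absorbed by the gap $\ell' - \ell \geq 1$ coming from $\ell = (\ell' - 13r - 2)/5$ for sufficiently large $r$. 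Lemma~\ref{lemma:g_a} then yields
\[\Pr_{a \in \mt_R \mc A}\bigl[\exists \ket{v} \in \mc V \text{ with } |\langle M_a, P^{\tau^{(t,\infty)}}_{X|v,w}\rangle| \geq 2^{-r}\bigr] \leq 2^{-2r},\]
so that the $G_a$-truncation leaves the block untouched with probability at least $1 - 2^{-2r}$.

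Combining these two observations and taking expectation over $a$ gives $\E_{a \in \mt_R \mc A} \lN{\tau^{(t,\infty)}_{XV|w} - \tau^{(t,a)}_{XV|w}}_\Tr \leq 2^{-2r} \cdot \Tr[\tau^{(t,\infty)}_{XV|w}]$ for each $w$. Summing over $w \in \mc W$, using the triangle inequality in the classical register (which is in fact sharp here because distinct $w$-blocks are supported on orthogonal subspaces), and noting that $\sum_w \Tr[\tau^{(t,\infty)}_{XV|w}] = \Tr[\tau^{(t,\infty)}_{XVW}] \leq 1$, yields the claimed bound of $2^{-2r}$.

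I do not anticipate any real obstacle here: the substantive content is already inside Lemma~\ref{lemma:g_a} (via the extractor property of $M$ combined with the anti-concentration Lemma~\ref{lemma:anticc}), and the remaining step is bookkeeping that is essentially forced once one observes that $W$ is classical and that the $G_a$-truncation is a monotone projection. The only mild point of care is the constant loss $(1-2^{-r})^{-1}$ from Lemma~\ref{lemma:2norm}, which is comfortably subsumed by the slack between $\ell$ and $\ell'$.
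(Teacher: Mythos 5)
Your proof follows the same route as the paper's: decompose block-wise over the classical register $w$, verify the hypothesis of Lemma~\ref{lemma:g_a} via Lemma~\ref{lemma:2norm}, use the fact that an untouched block contributes zero, and sum the per-block contributions against $\sum_w \Tr[\tau^{(t,\infty)}_{XV|w}]\leq 1$. So the approach is correct and matches the paper.

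There is, however, one false step in the middle. You assert that $\tau^{(t,a)}_{XV|w}\leq \tau^{(t,\infty)}_{XV|w}$ as positive semi-definite operators and conclude $\lN{\tau^{(t,\infty)}_{XV|w}-\tau^{(t,a)}_{XV|w}}_\Tr=\Tr[\tau^{(t,\infty)}_{XV|w}]-\Tr[\tau^{(t,a)}_{XV|w}]$. Neither claim is right: projecting by a projection that does not commute with $\rho$ does not preserve the Loewner order. Concretely, for $\rho=\frac{1}{2}\begin{pmatrix}1&1\\1&1\end{pmatrix}$ and $\Pi=\begin{pmatrix}1&0\\0&0\end{pmatrix}$ one has $\rho-\Pi\rho\Pi=\begin{pmatrix}0&1/2\\1/2&1/2\end{pmatrix}$, which has a negative eigenvalue and trace norm $\sqrt{5}/2>1=\Tr[\rho]$. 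The $G_a$-truncation projects on the quantum register $\mc V$, which does not commute with $\tau^{(t,\infty)}_{XV|w}$, so this is exactly the problematic regime. (Compare with \cref{cor:g_inf1}, where the projection $gg\ct\otimes\id_V$ acts on the classical part and does commute, so the Loewner ordering there is legitimate.) The slip is harmless for the conclusion: the triangle inequality $\lN{A-B}_\Tr\leq\Tr[A]+\Tr[B]\leq 2\Tr[\tau^{(t,\infty)}_{XV|w}]$, or \cref{prop:distance_weight} / \cref{lemma:traced}, gives the per-block bound up to a factor of $2$, yielding $2\cdot 2^{-2r}$, which is absorbed by the constants elsewhere in the argument. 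Worth noting, too, that the paper's own display implicitly uses the same $\Tr[\tau^{(t,\infty)}_{XV|w}]$ bound without the factor of $2$, so this is a shared imprecision rather than a gap unique to your write-up. Everything else — the reduction to Lemma~\ref{lemma:g_a}, the absorption of $(1-2^{-r})^{-1}$ into $\ell'-\ell$, the orthogonality of the $w$-blocks — is sound.
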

	\begin{proof}
	    For each $w\in\mc W$, the partial system $\tau^{(t,\infty)}_{XV|w}$ satisfies the condition of \cref{lemma:g_a} since for every $\ket{v}\in\mc V$,
	    \[\lN{P^{\tau^{(t,\infty)}}_{X|v,w}}_2\leq (1-2^{-r})^{-1}\cdot 2^\ell\cdot 2^{-n/2}
	    \leq 2^{\ell'}\cdot 2^{-n/2}.\]
	    Notice that for each $a\in\mc A$ such that there does not exist $\ket{v}\in\mc V$ with $|\langle M_a, P^{\tau^{(t,\infty)}}_{X|v,w}\rangle|\geq 2^{-r}$ (that is, when $G_a(v,w)$ holds for every $\ket{v}\in\mc V$), the sub system $\tau^{(t,\infty)}_{XV|w}$ is not touched in the truncation by $G_a$ and we have $\tau^{(t,a)}_{XV|w}=\tau^{(t,\infty)}_{XV|w}$. Therefore 
	    \begin{align*}
	        \E_{a\in\mt_R \mc A}\lN{\tau^{(t,a)}_{XVW}-\tau^{(t,\infty)}_{XVW}}_\Tr &=
	        \sum_{w\in\mc W} \E_{a\in\mt_R \mc A}\lN{\tau^{(t,a)}_{XV|w}-\tau^{(t,\infty)}_{XV|w}}_\Tr \\
	        & \leq \sum_{w\in\mc W}\Pr_{a\in\mt_R \mc A}\left[\exists \ket{v}\in\mc V, |\langle M_a, P^{\tau^{(t,\infty)}}_{X|v,w}\rangle|\geq 2^{-r}\right]\cdot  \Tr[\tau^{(t,\infty)}_{XV|w}] \\
	        &\leq 2^{-2r}\cdot \sum_{w\in\mc W}\Tr[\tau^{(t,\infty)}_{XV|w}] 
	        \leq 2^{-2r}.
	    \qedhere
	    \end{align*}
	\end{proof}

	\subsubsection{Evolution preserves trace distance}
	
	\begin{lemma}\label{lemma:evol}
		For every $0\leq t<T$, we have $\lN{\tau^{(t+1)}_{XVW}-\rho^{(t+1)}_{XVW}}_\Tr\leq\E_{a\in\mt_R \mc A}\lN{\tau^{(t,a)}_{XVW}-\rho^{(t)}_{XVW}}_\Tr$.
	\end{lemma}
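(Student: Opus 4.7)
The plan is to express the lemma as a short application of two standard operator facts: contractivity of the trace norm under CPTP maps, and the triangle inequality under averaging. For each fixed $a\in\mc A$, I would define a single map $\Psi_{t,a}$ on the hybrid space $\mc X\otimes\mc V\otimes\mc W$ by
\[\Psi_{t,a}(\sigma\mt_{XVW}) = \sum_{x\in\mc X}\ket{x}\bra{x}\otimes\Phi_{t,a,M(a,x)}\big(\sigma\mt_{VW|x}\big),\]
which operationally reads $x$ from the classical register $X$, computes $b=M(a,x)$, and applies the quantum channel $\Phi_{t,a,b}$ to $VW$. The point of isolating $\Psi_{t,a}$ is that it is CPTP on the entire space: it decomposes as the (CPTP) dephasing channel on $X$ followed by a classically controlled application of the CPTP maps $\{\Phi_{t,a,b}\}_b$.

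Since both $\rhot_{XVW}$ and $\tau^{(t,a)}_{XVW}$ already have $X$ classical, the dephasing step acts trivially on them, and the evolution equations from the preceding subsection rewrite as
\[\rho^{(t+1)}_{XVW} = \E_{a\in\mt_R\mc A}\Psi_{t,a}\big(\rhot_{XVW}\big), \qquad \tau^{(t+1)}_{XVW} = \E_{a\in\mt_R\mc A}\Psi_{t,a}\big(\tau^{(t,a)}_{XVW}\big).\]
Subtracting these identities, pulling the expectation over $a$ outside the trace norm via its convexity, and then applying contractivity of the trace norm under each CPTP map $\Psi_{t,a}$ yields
\[\lN{\tau^{(t+1)}_{XVW}-\rho^{(t+1)}_{XVW}}_\Tr \;\leq\; \E_{a\in\mt_R\mc A}\lN{\Psi_{t,a}\big(\tau^{(t,a)}_{XVW}-\rhot_{XVW}\big)}_\Tr \;\leq\; \E_{a\in\mt_R\mc A}\lN{\tau^{(t,a)}_{XVW}-\rhot_{XVW}}_\Tr,\]
which is exactly the claim.

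I do not anticipate any genuine obstacle: the lemma is, at its core, the data-processing inequality for the trace norm, adapted to the classical-quantum hybrid setting used throughout the paper. The only point that merits explicit verification is that $\Psi_{t,a}$ is CPTP on the full space rather than only on classical-quantum inputs, which the dephasing-plus-controlled-channel decomposition above handles cleanly. Everything else is a one-line triangle inequality.
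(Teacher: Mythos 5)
Your proof is correct and is essentially the same argument as the paper's, which also combines the triangle inequality over $a$ with contractivity of the trace norm under the CPTP maps $\Phi_{t,a,b}$. The only cosmetic difference is that the paper first splits the trace norm into its $x$-blocks (using that the trace norm is additive over the block-diagonal decomposition induced by the classical register $X$) and then applies contractivity block by block, whereas you package the dephasing on $X$ together with the controlled application of $\{\Phi_{t,a,b}\}_b$ into a single CPTP map $\Psi_{t,a}$ and apply contractivity once; both routes are valid and equally elementary.
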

	\begin{proof}
		Recall that
		\[\rho^{(t+1)}_{XVW}=\E_{a\in\mt_R\mc A}
		\left[\sum_{x\in\mc X}\ket{x}\bra{x}\otimes\Phi_{t,a,M(a,x)}\big(\rhot_{VW|x}\big)\right],\]
		\[\tau^{(t+1)}_{XVW}=\E_{a\in\mt_R\mc A}
		\left[\sum_{x\in\mc X}\ket{x}\bra{x}\otimes\Phi_{t,a,M(a,x)}\big(\tau^{(t,a)}_{VW|x}\big)\right].\]
		Therefore by triangle inequality and contractivity of quantum channels under trace norms, 
		\begin{align*}
			\lN{\tau^{(t+1)}_{XVW}-\rho^{(t+1)}_{XVW}}_\Tr &\leq  \E_{a\in\mt_R\mc A}
			\left\|\sum_{x\in\mc X}\ket{x}\bra{x}\otimes\left(\Phi_{t,a,M(a,x)}\big(\tau^{(t,a)}_{VW|x}\big)-
			\Phi_{t,a,M(a,x)}\big(\rhot_{VW|x}\big)\right)\right\|_\Tr \\
			&=\E_{a\in\mt_R\mc A}\sum_{x\in\mc X}\left\|\Phi_{t,a,M(a,x)}\big(\tau^{(t,a)}_{VW|x}\big)-
			\Phi_{t,a,M(a,x)}\big(\rhot_{VW|x}\big)\right\|_\Tr \\
			&\leq \E_{a\in\mt_R\mc A}\sum_{x\in\mc X}\lN{\tau^{(t,a)}_{VW|x}-\rhot_{VW|x}}_\Tr \\
			&= \E_{a\in\mt_R \mc A}\lN{\tau^{(t,a)}_{XVW}-\rho^{(t)}_{XVW}}_\Tr. \qedhere
		\end{align*}
	\end{proof}
	
	\subsection{\texorpdfstring{Proof of \cref{thm:main}}{Proof of the Main Theorem}}
	
	\begin{proof}
		First, combining \cref{cor:g_2,cor:g_inf1,cor:g_inf2,cor:g_a,lemma:evol} we have
		\[\lN{\tau^{(t+1)}_{XVW}-\rho^{(t+1)}_{XVW}}_\Tr\leq \lN{\taut_{XVW}-\rhot_{XVW}}_\Tr+8\cdot 2^{q-2r} +2^{-5r}+ 2^{-2r}.\]
		Since $\tau^{(0)}_{XVW}=\rho^{(0)}_{XVW}$, by triangle inequality we know that  $\lN{\tau^{(T)}_{XVW}-\rho^{(T)}_{XVW}}_\Tr\leq T\cdot 10\cdot 2^{q-2r}\leq 10\cdot 2^{q-r}$, and thus
		\[\lN{\tau^{(T,\infty)}_{XVW}-\rho^{(T)}_{XVW}}_\Tr\leq 10\cdot 2^{q-r}+8\cdot 2^{q-2r} + 2^{-5 r}.\] 
		This bounds the difference between the measurement probabilities of $\tau^{(T,\infty)}_{XVW}$ and $\rho^{(T)}_{XVW}$ under any measurement, specifically the difference between the success probability of the branching program $\rho$ and the following value on $\tau$:
		\[\sum_{\substack{x\in\mc X,v\in\{0,1\}^q,w\in\mc W\\ \widetilde{x}(v,w)=x}}
		\bra{x,v,w}\tau^{(T,\infty)}_{XVW}\ket{x,v,w}
		=\sum_{v\in\{0,1\}^q,w\in\mc W}\Tr[\tau^{(T,\infty)}_{X|v,w}]\cdot P^{\tau^{(T,\infty)}}_{X|v,w}(\widetilde{x}(v,w)).\]
		Since $\lN{P^{\tau^{(T,\infty)}}_{X|v,w}}_\infty\leq 2^{2\ell+9r}\cdot 2^{-n}$ and $\Tr[\tau^{(T,\infty)}_{XVW}]\leq 1$, the above value is at most $2^{2\ell+9r}\cdot 2^{-n}$. Therefore the success probability of the branching program $\rho$ is at most (recall that $2\ell+9r-n\leq -r$) 
		\[10\cdot 2^{q-r}+8\cdot 2^{q-2r} + 2^{-5 r}+2^{2\ell+9r}\cdot 2^{-n}=O(2^{q-r}). \qedhere\]
	\end{proof}

	\section{\texorpdfstring{Proof of \cref{lemma:main}}{Proof of Main Lemma}}\label{sect:proof_lemma_main}
	
	The first step towards proving \cref{lemma:main} is to analyze how $P^{\tau^{(t)}}_{X|v,w}$ evolves according to the rule
	\[\tau^{(t+1)}_{XVW}=\E_{a\in\mt_R\mc A}
	\left[\sum_{x\in\mc X}\ket{x}\bra{x}\otimes\Phi_{t,a,M(a,x)}\big(\tau^{(t,a)}_{VW|x}\big)\right].\]
	We introduce the following notations. For every $a\in\mc A$ and $b\in\{-1,1\}$, let
	\[\mathbbm{1}_{a,b}=\frac{1}{2}(\vec{1}+b\cdot M_a),\]
	which is a $0$-$1$ vector that indicates whether $M(a,x)=b$. Let 
	\begin{equation}\label{eq:tauab}
	    \tau^{(t,a,b)}_{XVW}=(\Diag \mathbbm{1}_{a,b}\otimes\id_{VW})\tau^{(t,a)}_{XVW},
	\end{equation}
	so that we can write
	\begin{equation}\label{eq:evolution}
	    \tau^{(t+1)}_{XVW}=\E_{a\in\mt_R\mc A}
	\left[(\id_X\otimes\Phi_{t,a,1})\big(\tau^{(t,a,1)}_{XVW}\big)+(\id_X\otimes\Phi_{t,a,-1})\big(\tau^{(t,a,-1)}_{XVW}\big)\right].
	\end{equation}
	Thus \cref{claim:chan} implies that $P^{\tau^{(t+1)}}_{X|v,w}$ is a convex combination of $P^{\tau^{(t,a,b)}}_{X|v',w'}$ for some $a,b,w'$ and $\ket{v'}$.
	
	\subsection{Target Distribution and Badness}
	Before considering the target distribution, let us first establish that the $\ell_2$-norms of $P^{\tau^{(t)}}_{X|v,w}$ cannot be too large:
	\begin{lemma} For every $0\leq t\leq T$, $\ket v\in\mc V$, $w\in\mc W$, we have
		\[\lN{P^{\tau^{(t)}}_{X|v,w}}_2\leq 4\cdot 2^\ell\cdot 2^{-n/2}.\]
	\end{lemma}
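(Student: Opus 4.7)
The plan is to prove the bound by induction on $t$. For the base case $t=0$, $\tau^{(0)}_{XVW}=\rho^{(0)}_{XVW}$ is the maximally mixed state, so $P^{\tau^{(0)}}_{X|v,w}$ is uniform for every $\ket v,w$ with nonzero trace, and its $\ell_2$-norm is $2^{-n/2}\leq 4\cdot 2^\ell\cdot 2^{-n/2}$.

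For the inductive step, assume the bound for $\tau^{(t)}$ and trace through the three truncations. By \cref{lemma:id_dist} applied to the $G_2$ truncation, every distribution $P^{\tau^{(t,\star)}}_{X|v,w}$ equals $P^{\tau^{(t)}}_{X|v',w}$ for some $\ket{v'}$, and moreover satisfies $G_2$ (by the remark following \cref{lemma:id_dist}), so it has $\ell_2$-norm at most $2^\ell\cdot 2^{-n/2}$. \Cref{lemma:2norm} then bounds $\lN{P^{\tau^{(t,\infty)}}_{X|v,w}}_2\leq (1-2^{-r})^{-1}\cdot 2^\ell\cdot 2^{-n/2}$, and the same bound transfers to $\tau^{(t,a)}$ for every $a$ via another application of \cref{lemma:id_dist} to the $G_a$ truncation.

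Next I bound $\tau^{(t,a,b)}$ using the definition \cref{eq:tauab}. Since $X$ is classical, $\tau^{(t,a,b)}_{X|v,w}$ has diagonal $\mathbbm{1}_{a,b}\circ\diag\tau^{(t,a)}_{X|v,w}$ (pointwise product). Writing $p=P^{\tau^{(t,a)}}_{X|v,w}$, the normalized conditional distribution becomes
\[P^{\tau^{(t,a,b)}}_{X|v,w}=\frac{\mathbbm{1}_{a,b}\circ p}{\tfrac{1}{2}\bigl(1+b\langle M_a,p\rangle\bigr)}.\]
Because $\tau^{(t,a)}$ is truncated by $G_a$, we have $|\langle M_a,p\rangle|\leq 2^{-r}$, so the denominator is at least $\tfrac{1}{2}(1-2^{-r})$. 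Combining with the norm bound on $p$ yields
\[\lN{P^{\tau^{(t,a,b)}}_{X|v,w}}_2\leq \frac{2}{(1-2^{-r})^2}\cdot 2^\ell\cdot 2^{-n/2}\leq 4\cdot 2^\ell\cdot 2^{-n/2}\]
for $r$ sufficiently large.

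Finally, to pass from $\tau^{(t,a,b)}$ to $\tau^{(t+1)}$ I use the evolution rule \cref{eq:evolution} together with \cref{claim:chan}. Writing $\tau^{(t+1)}$ as an average over $a$ of channel outputs applied to the $\tau^{(t,a,b)}$, the distribution $P^{\tau^{(t+1)}}_{X|v,w}$ becomes a convex combination of distributions $P^{\tau^{(t,a,b)}}_{X|v',w'}$ (with weights proportional to the corresponding traces), so by the triangle inequality its $\ell_2$-norm is at most the maximum over this collection. This completes the inductive step. The main obstacle here is purely bookkeeping: making sure the chain of $\ell_2$-norm bounds through $\tau^{(t,\star)}\to\tau^{(t,\infty)}\to\tau^{(t,a)}\to\tau^{(t,a,b)}\to\tau^{(t+1)}$ accumulates a multiplicative factor of at most $4$, which requires $r$ to be large enough so that $2(1-2^{-r})^{-2}\leq 4$.
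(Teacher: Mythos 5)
Your proof is correct and follows essentially the same route as the paper: identify that after the $G_2$ truncation every distribution $P^{\tau^{(t,\star)}}_{X|v,w}$ satisfies $G_2$ (via \cref{lemma:id_dist} and the remark following it), push the $(1-2^{-r})^{-1}$ factor through to $\tau^{(t,\infty)}$ and $\tau^{(t,a)}$ via \cref{lemma:2norm}, use the $G_a$ truncation to bound the normalization when passing to $\tau^{(t,a,b)}$, and finish with convexity of the $\ell_2$-norm applied to \cref{claim:chan}. One small remark: you frame the argument as an induction on $t$ and state an inductive hypothesis "assume the bound for $\tau^{(t)}$," but you never actually invoke it — the $\ell_2$ bound on $\tau^{(t,\star)}$ comes entirely from the $G_2$ truncation, not from any assumed bound on $\tau^{(t)}$. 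The paper accordingly presents this as a direct case split on $t=0$ vs. $t>0$ rather than an induction; your argument is logically the same, just with the unused hypothesis carried along.
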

	\begin{proof}
	    When $t=0$ the statement is clearly true as $P^{\tau^{(0)}}_{X|v,w}$ is always uniform. 
	    
	    Now assume $t>0$. By \cref{lemma:id_dist} and \cref{lemma:2norm} we know that \[\lN{P^{\tau^{(t-1,a)}}_{X|v',w'}}_2\leq (1-2^{-r})^{-1}\cdot 2^\ell\cdot 2^{-n/2}\]
	    for every $w'\in\mc W, \ket{v'}\in\mc V$ and $a\in\mc A$, as $\tau_{XVW}^{(t-1, a)}$ is truncated from $\tau_{XVW}^{(t-1, \infty)}$. Since $G_a(P^{\tau^{(t-1,a)}}_{X|v',w'})$ is true, meaning that the distribution is evenly divided by $a$, we further have
		\[\lN{P^{\tau^{(t-1,a,b)}}_{X|v',w'}}_2=
		\frac{\lN{\mathbbm{1}_{a,b}\cdot P^{\tau^{(t-1,a)}}_{X|v',w'}}_2}
		{\lN{\mathbbm{1}_{a,b}\cdot P^{\tau^{(t-1,a)}}_{X|v',w'}}_1}
		\leq 2(1-2^{-r})^{-1}\cdot \lN{P^{\tau^{(t-1,a)}}_{X|v',w'}}_2\leq 4\cdot 2^{\ell}\cdot 2^{-n/2}.\]
		Since $P^{\tau^{(t)}}_{X|v,w}$ is a convex combination of $P^{\tau^{(t-1,a,b)}}_{X|v',w'}$, by convexity its $\ell_2$-norm is bounded by $4 \cdot 2^{\ell}\cdot 2^{-n/2}$.
	\end{proof}
	
	From now on we use $P$ to denote a fixed target distribution (which we will later choose to be the distribution in \cref{lemma:main}), such that
	\[2^\ell \cdot 2^{-n/2}\leq \|P\|_2 \leq 4\cdot 2^\ell \cdot 2^{-n/2}.\]
    We want to bound the progress of $\langle P^{\tau^{(t)}}_{X|v,w}, P\rangle$, which starts off as $2^{-n}$ at $t=0$, and becomes at least $2^{2\ell}\cdot 2^{-n}$ when $P^{\tau^{(t)}}_{X|v,w}=P$. Note that by Cauchy-Schwarz we always have
    \begin{equation}\label{eq:cauchy}
        \langle P^{\tau^{(t)}}_{X|v,w}, P\rangle\leq \lN{P^{\tau^{(t)}}_{X|v,w}}_2\lN{P}_2\leq 16\cdot 2^{2\ell}\cdot 2^{-n}.
    \end{equation}
    In order to bound the progress, we introduce some new notations. For any superscript (such as $(t,a)$) on the partial systems, we use $\sigma\mt_{XVW}$ to denote $\tau\mt_{XVW}(\Diag P\otimes\id_{VW})$. Notice that
	\[\Tr[\sigma\mt_{X|v,w}]=\Tr[\tau\mt_{X|v,w}\Diag P]=\Tr[\tau\mt_{X|v,w}]\cdot \langle P^\tau_{X|v,w},P\rangle.\]
	Similarly, $P^\sigma_{X|v,w}$ can be deduced from $P^\tau_{X|v,w}$ via
	\begin{equation}\label{eq:sigma}
	    P^\sigma_{X|v,w}(x)=\frac{\Tr[\tau\mt_{X|v,w}]}{\Tr[\sigma\mt_{X|v,w}]}\cdot
	    P^\tau_{X|v,w}(x)\cdot P(x) = \frac{P^\tau_{X|v,w}(x)\cdot P(x)}{\langle P^\tau_{X|v,w},P\rangle}.
	\end{equation}
	Therefore we can bound the $\ell_2$ norm of $P^\sigma_{X|v,w}$ as
	\[
	\lN{P^\sigma_{X|v,w}}_2 \leq \frac{1}{\langle P^\tau_{X|v,w},P\rangle}
	\cdot\lN{P^\tau_{X|v,w}}_\infty\cdot \lN{P}_2.
	\]

	Now we can identity the places where $\langle P^{\tau^{(t)}}_{X|v,w}, P\rangle$ increases by a lot, which happens when the \emph{inner product} is not evenly divided by some $a\in\mc A$ (we will see the reason in the analysis later). Formally, at stage $0\leq t<T$, we say $(w,a)$ is \emph{bad} if
	\begin{equation}\label{eq:badness}
	\exists\ket{v}\in\mc V\textrm{, s.t. }
	|\langle M_a, P^{\sigma^{(t,a)}}_{X|v,w}\rangle|>2^{-r}
	\textrm{ and }
	\langle P^{\tau^{(t,a)}}_{X|v,w}, P\rangle\geq \frac{1}{2}\cdot 2^{-n}.
	\end{equation}

	\begin{lemma}\label{lemma:badness}
		For every $0\leq t<T$ and $w\in\mc W$, we have
		\[\Pr_{a\in\mt_R \mc A}[\textrm{$(w,a)$ is bad}]\leq 2^{-k}.\]
	\end{lemma}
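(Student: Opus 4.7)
The plan is to show that if more than a $2^{-k}$-fraction of $a$'s were bad for a fixed $w$, then the $(k',\ell')$-$L_2$ extractor property of $M$ would be contradicted. Following the template of \cref{lemma:g_a}, I will produce a single unit vector $\ket u \in \mc V$ that serves as an extractor-violating witness for many bad $a$'s simultaneously, via the anti-concentration bound \cref{lemma:anticc}.

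First, fix $w$. As in the proof of \cref{lemma:g_a}, apply a non-singular congruence on $\mc V$ so that $\tau^{(t,\infty)}_{V|w}$ becomes the identity on its (at most $2^q$-dimensional) support; this preserves every $P^{\tau^{(t,\infty)}}_{X|v,w}$. By \cref{lemma:id_dist} applied to the $G_a$-truncation step, for every witness $\ket{v_a}$ of a bad $(w,a)$ I may replace $\ket{v_a}$ by a state in the remaining subspace $\mc V_{w,a}$ inducing $\tau^{(t,a)}_{X|v_a,w}=\tau^{(t,\infty)}_{X|v_a,w}$. Define the Hermitian operator $\sigma'_a = \Tr_X\!\big[(\Diag M_a \otimes \id_V)\,\sigma^{(t,\infty)}_{XV|w}\big]$ on $\mc V$. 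For unit $\ket v$ in the support, a direct computation using \cref{eq:sigma} gives
\[\bra v \sigma'_a \ket v \;=\; \bra v \sigma^{(t,\infty)}_{V|w}\ket v \cdot \big\langle M_a,\,P^{\sigma^{(t,\infty)}}_{X|v,w}\big\rangle,\]
since after the congruence $\Tr[\tau^{(t,\infty)}_{X|v,w}]=1$ and $\bra v \sigma^{(t,\infty)}_{V|w}\ket v = \langle P^{\tau^{(t,\infty)}}_{X|v,w},P\rangle$. The two conditions in \cref{eq:badness} then jointly imply $|\bra{v_a}\sigma'_a\ket{v_a}| > \tfrac 1 2 \cdot 2^{-r-n}$, and in particular $\|\sigma'_a\|_2 > \tfrac 1 2 \cdot 2^{-r-n}$.

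Next, I carve out a subspace $\mc V'\subseteq\mc V$ on which the second condition in \cref{eq:badness} holds uniformly, by diagonalising the Hermitian form $\bra v \sigma^{(t,\infty)}_{V|w}\ket v$ and letting $\mc V'$ be the span of eigenvectors with eigenvalue $\geq \theta\cdot 2^{-n}$ for a small constant $\theta$. I then need to verify that the restricted operator $P_{\mc V'}\sigma'_a P_{\mc V'}$ still has spectral norm $\Omega(2^{-r-n})$, after which \cref{lemma:anticc} applies inside $\mc V'$: for a uniformly random unit $\ket u \in \mc V'$,
\[\Pr_{\ket u}\!\big[|\bra u \sigma'_a\ket u| \geq 2^{-r'-n}\big] \;\geq\; 1 - c\sqrt{2^{q+r+1-r'}} - e^{-2^q} \;\geq\; \tfrac 1 2,\]
using \cref{eq:qr} and the sufficient largeness of $r$. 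Averaging over the bad $a$'s, a single $\ket u \in \mc V'$ works for at least half of them; for such $a$, membership of $\ket u$ in $\mc V'$ lower-bounds $\bra u \sigma^{(t,\infty)}_{V|w}\ket u$, converting the above into $|\langle M_a,P^{\sigma^{(t,\infty)}}_{X|u,w}\rangle| \geq 2^{-r'}$. A direct $\ell_2$-norm estimate combining the $G_\infty$-bound on $\|P^{\tau^{(t,\infty)}}_{X|u,w}\|_\infty$, the bound $\|P\|_2 \leq 4\cdot 2^\ell\cdot 2^{-n/2}$, and the $\mc V'$ lower bound on $\langle P^{\tau^{(t,\infty)}},P\rangle$, yields $\|P^{\sigma^{(t,\infty)}}_{X|u,w}\|_2 \leq 2^{\ell'}\cdot 2^{-n/2}$. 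The extractor property of $M$ then bounds the number of such $a$ by $2^{-k'}|\mc A|$, and accounting for the averaging gives at most $2\cdot 2^{-k'}|\mc A| = 2^{-k}|\mc A|$ bad $a$, since $k=k'-1$.

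\textbf{The main obstacle} is the construction of $\mc V'$ and the argument that $\|P_{\mc V'}\sigma'_a P_{\mc V'}\|_2$ remains $\Omega(2^{-r-n})$: the witness $\ket{v_a}$ need not lie in $\mc V'$, and since $\sigma^{(t,\infty)}_{V|w}$ has small operator norm (of order $2^{2\ell+9r}\cdot 2^{-n}$), the projection of $\ket{v_a}$ onto $\mc V'$ can have small Euclidean mass. Resolving this requires exploiting the Hermitian quadratic-form structure of both $\sigma^{(t,\infty)}_{V|w}$ and $\sigma'_a$ in tandem, together with $\dim\mc V' \leq 2^q$ (which keeps the anti-concentration loss small via \cref{eq:qr}) and careful use of the $G_\infty$- and $\ell_2$-bounds from the earlier truncation steps.
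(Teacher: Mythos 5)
Your plan correctly reproduces the skeleton of the paper's argument -- passing to $\tau^{(t,\infty)}$ via \cref{lemma:id_dist}, normalizing $\tau^{(t,\infty)}_{V|w}$ to the identity by a congruence, restricting to a subspace $\mc V'$ on which $\langle P^{\tau^{(t,\infty)}}_{X|v,w},P\rangle$ is uniformly lower bounded, and invoking \cref{lemma:anticc} to produce a single extractor-violating witness -- but the proof as written has two genuine gaps.

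First, you explicitly leave unresolved the claim that $\|\Pi\sigma'_a\Pi\|_2=\Omega(2^{-r-n})$ for the projection $\Pi$ onto $\mc V'$, and this is exactly where the real work is. The paper does not prove a spectral statement about the projected operator per se; it produces a concrete witness inside $\mc V'$. It views $(\id_X\otimes\Pi)\sigma_{XV|w}(\id_X\otimes\Pi)$ as a truncated system, bounds its trace distance to $\sigma_{XV|w}$ via \cref{lemma:traced} by $12\cdot 2^{q-2r}\cdot 2^{-n}$, and then -- crucially using the badness condition $\Tr[\sigma_{X|u,w}]\geq\tfrac12\cdot 2^{-n}$ as the denominator -- shows that $\|P^\sigma_{X|u,w}-P^\sigma_{X|v,w}\|_1\leq 48\cdot 2^{q-2r}\leq\tfrac12\cdot 2^{-r}$ for $\ket{v}\sim\Pi\ket{u}$. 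Hence this $\ket{v}\in\mc V'$ still satisfies $|\langle M_a,P^\sigma_{X|v,w}\rangle|>\tfrac12\cdot 2^{-r}$. That small-$\ell_1$-perturbation argument, not a spectral-norm projection bound, is what survives the obstacle you identify.

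Second, even granting your unproved claim, the conversion in your step from $|\bra{u}\sigma'_a\ket{u}|\geq 2^{-r'-n}$ to $|\langle M_a,P^{\sigma^{(t,\infty)}}_{X|u,w}\rangle|\geq 2^{-r'}$ using a \emph{lower} bound on $\bra u\sigma^{(t,\infty)}_{V|w}\ket u$ runs the wrong way. Since (after the congruence) $\bra u\sigma'_a\ket u=\bra u\sigma^{(t,\infty)}_{V|w}\ket u\cdot\langle M_a,P^\sigma_{X|u,w}\rangle$, you would need $\bra u\sigma^{(t,\infty)}_{V|w}\ket u\lesssim 2^{-n}$, an \emph{upper} bound. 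But on $\mc V'$ the eigenvalues of $\sigma^{(t,\infty)}_{V|w}$ can be as large as roughly $2^{2\ell}\cdot 2^{-n}$ (since $\Tr[\sigma^{(t,\infty)}_{XV|w}]\leq 16\cdot 2^{2\ell}\cdot 2^{-n}$), which degrades the conclusion to $|\langle M_a,P^\sigma_{X|u,w}\rangle|\gtrsim 2^{-r'-2\ell}$, too weak to invoke the $(k',\ell')$-extractor with error $2^{-r'}$. The paper circumvents this by working not with $\sigma'_a$ but with the congruently rescaled operator
\[\pi_a=\Tr\mt_X\!\left[(\Diag M_a\otimes U\ct D^{-1/2}U)\,\sigma\mt_{XV|w}\,(\id_X\otimes U\ct D^{-1/2}U)\right],\]
for which $\bra{v'}\pi_a\ket{v'}=\langle M_a,P^\sigma_{X|v,w}\rangle$ exactly (with $\ket{v'}\sim U\ct D^{1/2}U\ket{v}$), because the denominator $\bra{v'}U\ct D^{-1/2}U\sigma_{V|w}U\ct D^{-1/2}U\ket{v'}$ is identically $1$. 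So \cref{lemma:anticc} applied to $\Pi\pi_a\Pi$ controls the desired quantity directly, with no stray $\sigma_{V|w}$ weight to remove afterwards. Both ingredients -- the trace-distance argument giving a witness in $\mc V'$, and the $D^{-1/2}$ rescaling that strips the weight -- need to be in place before anti-concentration closes the proof.
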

	\begin{proof}
		Since $\tau^{(t,a)}_{XVW}$ is truncated from $\tau^{(t,\infty)}_{XVW}$,  \cref{lemma:id_dist} shows that for every $\ket{v}\in\mc V$, $w\in\mc W$ and $a\in\mc A$ there is $\ket{v'}\in\mc V$ such that
		\[P^{\tau^{(t,a)}}_{X|v,w}=P^{\tau^{(t,\infty)}}_{X|v',w}\]
		and by \cref{eq:sigma} it also implies that
		\[P^{\sigma^{(t,a)}}_{X|v,w}=P^{\sigma^{(t,\infty)}}_{X|v',w}.\]
		Now fix some $w\in\mc W$, and let $\mc A'\subseteq\mc A$ be the set of of $a\in\mc A$ such that
		\[\exists\ket{v}\in\mc V\textrm{, s.t. }
		|\langle M_a, P^{\sigma^{(t,\infty)}}_{X|v,w}\rangle|>2^{-r}
		\textrm{ and }
		\langle P^{\tau^{(t,\infty)}}_{X|v,w}, P\rangle\geq \frac{1}{2}\cdot 2^{-n}.\]
		Then $\mc A'$ contains all $a$ such that $(w,a)$ is bad, and our goal is to bound the fraction of $\mc A'$ in $\mc A$.
		
		In the rest of the proof we temporarily omit the super script and write $\tau^{(t,\infty)}$ and $\sigma^{(t,\infty)}$ simply as $\tau$ and $\sigma$. For the same reason as in \cref{lemma:g_a} we can assume that $\tau\mt_{V|w}=\id_V$, and thus 
		\[\bra{v}\sigma\mt_{V|w}\ket{v}=\Tr[\sigma\mt_{X|v,w}]=\langle P^\tau_{X|v,w},P\rangle,
		{\textrm{ and } \Tr[\sigma\mt_{XV|w}]=\langle P^\tau_{X|w},P\rangle\leq 16\cdot 2^{2\ell}\cdot 2^{-n}.}\] 
		where the last inequality is by \cref{lemma:2norm} and Cauchy-Schwarz, in the same way as \cref{eq:cauchy}.
		
		Suppose that we have diagonalization $\sigma\mt_{V|w}=U\ct DU$, where $U$ is unitary and $D$ is diagonal and non-negative. Let $\mc V'\subseteq\mc V$ be the subspace spanned by $U\ct\ket{e}$ over the computational basis vectors $\ket{e}\in\mc V$ such that $\bra{e}D\ket{e}\geq 2^{-4r}\cdot {2^{-2\ell}}\cdot 2^{-n}$. So for every $\ket{v}\in\mc V'$ we have \[\langle P^\tau_{X|v,w},P\rangle=\Tr[\sigma\mt_{X|v,w}]\geq 2^{-4r}\cdot {2^{-2\ell}}\cdot 2^{-n}.\]
		
		We claim that for every $a\in\mc A'$, there exists $\ket{v}\in\mc V'$ such that $|\langle M_a, P^\sigma_{X|v,w}\rangle|>\frac{1}{2}\cdot 2^{-r}$. To prove the claim, let $\Pi$ be the projection operator from $\mc V$ to $\mc V'$, and then $(\id_X\otimes\Pi)\sigma\mt_{XV|w}(\id_X\otimes\Pi)$ can be conceptually seen as a truncated partial system $\sigma^{|G}_{XV|w}$ where $G(v,w)$ holds when $\Tr[\sigma\mt_{X|v,w}]\geq 2^{-4r -2 \ell}\cdot 2^{-n}$ for the fixed $w$. By \cref{lemma:traced} we have
		\[\lN{\sigma^{|G}_{XV|w}-\sigma\mt_{XV|w}}_\Tr\leq 3\cdot 2^q\cdot {\sqrt{2^{-4r-2\ell-n}\cdot \Tr[\sigma\mt_{XV|w}]}}\leq 12\cdot 2^{q-2r}\cdot 2^{-n}.\]
        Since $a\in\mc A'$, assume for $\ket{u}\in\mc V$ we have $|\langle M_a, P^\sigma_{X|u,w}\rangle|>2^{-r}$ and $\Tr[\sigma\mt_{X|u,w}]=\langle P^\tau_{X|u,w}, P\rangle\geq \frac{1}{2}\cdot 2^{-n}$. Let $\ket{v}\sim\Pi\ket{u}$, then we have 
        \begin{align*}
        \lN{P^\sigma_{X|u,w}-P^\sigma_{X|v,w}}_1 
         =&\ \lN{P^\sigma_{X|u,w}-P^{\sigma^{|G}}_{X|u,w}}_1 
         \leq\left\| \frac{\sigma\mt_{X|u,w}}{\Tr[\sigma\mt_{X|u,w}]} 
         -\frac{\sigma^{|G}_{X|u,w}}{\Tr[\sigma^{|G}_{X|u,w}]}\right\|_\Tr\\
         \leq &\   \left\|{\frac{\sigma\mt_{X|u,w}}{\Tr[\sigma\mt_{X|u,w}]} -
         \frac{\sigma^{|G}_{X|u,w}}{\Tr[\sigma\mt_{X|u,w}]}}\right\|_\Tr 
         +  \left\|{\frac{\sigma^{|G}_{X|u,w}}{\Tr[\sigma\mt_{X|u,w}]} -
         \frac{\sigma^{|G}_{X|u,w}}{\Tr[\sigma^{|G}_{X|u,w}]}}\right\|_\Tr \\
         =&\   \left\|{\frac{\sigma\mt_{X|u,w}}{\Tr[\sigma\mt_{X|u,w}]} - \frac{\sigma^{|G}_{X|u,w}}{\Tr[\sigma\mt_{X|u,w}]}}\right\|_\Tr
         + \left|\frac{1}{\Tr[\sigma\mt_{X|u,w}]}-\frac{1}{\Tr[\sigma^{|G}_{X|u,w}]}\right|\cdot \Tr[\sigma^{|G}_{X|u,w}]\\
        =& \frac{\lN{\sigma\mt_{X|u,w}-\sigma^{|G}_{X|u,w}}_\Tr}{\Tr[\sigma\mt_{X|u,w}]}
        +\frac{\left|\Tr[\sigma^{|G}_{X|u,w}]-\Tr[\sigma\mt_{X|u,w}]\right|}{\Tr[\sigma\mt_{X|u,w}]}\\
        \leq&\ \frac{2\lN{\sigma\mt_{X|u,w}-\sigma^{|G}_{X|u,w}}_\Tr}{\Tr[\sigma\mt_{X|u,w}]} 
        \leq \frac{2\lN{\sigma\mt_{XV|w}-\sigma^{|G}_{XV|w}}_\Tr}{\Tr[\sigma\mt_{X|u,w}]} 
        \leq 48\cdot 2^{q-2r}\leq \frac{1}{2}\cdot 2^{-r},
        \end{align*}
		where the last step is due to $q\leq r-7$. Thus
		\[|\langle M_a, P^\sigma_{X|v,w}\rangle|\geq |\langle M_a, P^\sigma_{X|u,w}\rangle|-\lN{P^\sigma_{X|u,w}-P^\sigma_{X|v,w}}_1> \frac{1}{2}\cdot 2^{-r}.\]
		
		Similarly to the proof for \cref{lemma:g_a}, for each $a\in\mc A'$ let
		\[\pi_a=\Tr\mt_X[(\Diag M_a\otimes U\ct D^{-1/2}U)\cdot \sigma\mt_{XV|w}\cdot (\id_X\otimes U\ct D^{-1/2}U)]\]
		which is a Hermitian operator on $\mc V$. For each $\ket{v}\in\mc V$, let $\ket{v'}\sim U\ct D^{1/2}U\ket{v}$. Recall that $\sigma\mt_{V|w}=U\ct DU$, and therefore
		\begin{align*}
		    P^{\sigma}_{X|v,w} 
		    &=\frac{\diag\ (\id_X\otimes\bra{v})\sigma\mt_{XV|w}(\id_X\otimes\ket{v})}
		    {\bra{v}\sigma\mt_{V|w}\ket{v}} \\
		    &=\frac{\diag\ (\id_X\otimes\bra{v'}U\ct D^{-1/2}U)\sigma\mt_{XV|w}(\id_X\otimes U\ct D^{-1/2}U\ket{v'})}
		    {\bra{v'}U\ct D^{-1/2}U\sigma\mt_{V|w}U\ct D^{-1/2}U\ket{v'}} \\
		    &= \diag\ (\id_X\otimes\bra{v'}U\ct D^{-1/2}U)\sigma\mt_{XV|w}(\id_X\otimes U\ct D^{-1/2}U\ket{v'}).
		\end{align*}
		And that means
		\[\bra{v'}\pi_a\ket{v'}=\left\langle M_a, \diag\ (\id_X\otimes\bra{v'}U\ct D^{-1/2}U)\sigma\mt_{XV|w}(\id_X\otimes U\ct D^{-1/2}U\ket{v'})\right\rangle
		=\langle M_a, P^\sigma_{X|v,w}\rangle.\]
		We showed above that there exists $\ket{v}\in\mc V'$, and thus $\ket{v'}\in\mc V'$ such that
		\[|\bra{v'}\pi_a\ket{v'}|=\left|\langle M_a, P^\sigma_{X|v,w}\rangle\right|\geq\frac{1}{2}\cdot 2^{-r},\]
		which means that for $\Pi\pi_a\Pi$, the restriction of $\pi_a$ on $\mc V'$, we have $\|\Pi\pi_a\Pi\|_2\geq \frac{1}{2}\cdot 2^{-r}$. Now consider a uniformly random unit vector $\ket{v'}$ in $\mc V'$, and by \cref{lemma:anticc} we know that for some absolute constant $c$,
		\[\Pr_{\ket v'}\left[|\bra{v'}\sigma_a\ket{v'}|\geq 2^{-r'}\right]\geq 1-2^{(q+r+1-r')/2}c-e^{-2^q}\geq 1-2^{-r}c-e^{-1}\geq \frac{1}{2}.\]
		Therefore, for the random vector $\ket{v}\sim U\ct D^{-1/2}U\ket{v'}$ where $\ket{v'}$ is uniform in $\mc V'$, we conclude that
		\[\Pr_{\ket v}\left[|\langle M_a, P^\sigma_{X|v,w}\rangle|\geq 2^{-r'}\right]\geq \frac{1}{2}.\]
		On the other hand, as $\ket{v'}\in\mc V'$, it also holds that $\ket{v}\in\mc V'$, therefore $\langle P^\tau_{X|v,w}, P\rangle\geq 2^{-4r}\cdot 2^{-2\ell}\cdot 2^{-n}$ is always true. Thus there exists a $\ket{v}\in\mc V$ that simultaneously satisfies
		\[\langle P^\tau_{X|v,w}, P\rangle\geq 2^{-4r}\cdot 2^{-2\ell}\cdot 2^{-n} \quad\textrm{and}\quad
		|\langle M_a, P^\sigma_{X|v,w}\rangle|\geq 2^{-r'}\]
		for at least $1/2$ of $a\in\mc A'$. Since
		\[\lN{P^\sigma_{X|v,w}}_2\leq 
		\frac{1}{\langle P^\tau_{X|v,w},P\rangle}
		\cdot\lN{P^\tau_{X|v,w}}_\infty\cdot \lN{P}_2
		\leq  4\cdot 2^{{5\ell+13r}}\cdot 2^{-n/2}= 2^{\ell'}\cdot 2^{-n/2},\]
		and $M$ is a $(k',\ell')$-extractor with error $2^{-r'}$, there are at most $2^{-k'}$ fraction of $a\in\mc A$ such that $|\langle M_a, P^{\sigma}_{X|v',w}\rangle|\geq 2^{-r'}$, which means that
		\[\Pr_{a\in\mt_R \mc A}[\textrm{$(w,a)$ is bad}]\leq \Pr_{a\in\mt_R \mc A}[a\in\mc A']\leq 2\cdot 2^{-k'}=2^{-k}. \qedhere\]
	\end{proof}
	
	\subsection{Badness Levels}
	At stage $t$, for each classical memory state $w\in\mc W$ we count how many times the path to it has been bad, which is a random variable depending on the previous random choices of $a\in\mc A$. This is stored in another classical register $B$, which we call \emph{badness level} and takes values $\beta\in\{0,\ldots,T\}$. It is initially set to be $0$, that is, we let
	\[\tau^{(0)}_{XVWB}=\tau^{(0)}_{XVW}\otimes\ket{0}\bra{0}_B.\]
	We ensure that the distribution of $B$ always only depends on $W$ and is independent of $X$ and $V$ conditioned on $W$, using the following updating rules on the combined system $\tau\mt_{XVWB}$ for each stage $0\leq t< T$:
	\begin{itemize}
	    \item The truncation steps are executed independently of $B$. Therefore, for each $a\in\mc A$ we let
	    \begin{equation}\label{eq:trunc}
	        \tau^{(t,a)}_{XVWB}=\sum_{w\in\mc W}\tau^{(t,a)}_{XV|w}\otimes\ket{w}\bra{w}\otimes\Diag P^{\taut}_{B|w}.
	    \end{equation}
	    \item The value of $B$ updates before the evolution step, where for each $a\in\mc A$ and $b\in\{-1,1\}$ we let
	    \[\tau^{(t,a,b)}_{XVWB}=
	    (\Diag \mathbbm{1}_{a,b}\otimes \id_V\otimes U_a)\tau^{(t,a)}_{XVWB}(\id_{XV}\otimes U_a\ct).\]
	    Here $U_a$ is a permutation operator, depending on $\tau^{(t, a)}_{XVW}$, acting on $\mc W\otimes\{0,\ldots,T\}$ such that
    	\[U_a\ket{w}\ket{\beta}=\left\{\begin{array}{ll}
    	\ket{w}\ket{(\beta+1) \mathop{\mathrm{mod}} (T+1)} &\textrm{ if $(w,a)$ is bad,} \\
    	\ket{w}\ket{\beta} &\textrm{ otherwise.}
    	\end{array}\right.\]
    	\item For the evolution step, we apply the channels $\Phi_{t,a,b}$ on the memories $W$ and $V$ to get
    	\[\tau^{(t+1)}_{XVWB}=\E_{a\in\mt_R\mc A}
	    \left[(\id_X\otimes\Phi_{t,a,1}\otimes\id_B)\big(\tau^{(t,a,1)}_{XVWB}\big)+(\id_X\otimes\Phi_{t,a,-1}\otimes\id_B)\big(\tau^{(t,a,-1)}_{XVWB}\big)\right].\]
    \end{itemize}
    Notice that the evolution step might introduce dependencies between $X,V$ and $B$. However, such dependencies are eliminated later due to how we handle the truncation steps \eqref{eq:trunc}, and thus do not affect our proof.
    
    We can check that the combined partial system $\taut_{XVWB}$ defined above is consistent with the partial system $\taut_{XVW}$ that we discussed in previous sections, in the sense that $\Tr\mt_B[\taut_{XVWB}]=\taut_{XVW}$ always holds:
    \begin{itemize}
        \item For the truncation step, it is straightforward to check that
        \[\Tr\mt_B[\tau^{(t,a)}_{XVWB}]=\sum_{w\in\mc W}\tau^{(t,a)}_{XV|w}\otimes\ket{w}\bra{w}=\tau^{(t,a)}_{XVW}.\]
        \item The permutation operator $U_a$ acts on $\mc W$ as identity since
        \[\Tr\mt_B\left[U_a\ket{w,\beta}\bra{w,\beta}U_a\ct\right]=\ket{w}\bra{w}.\]
        Recalling \cref{eq:tauab} that $\tau^{(t,a,b)}_{XVW}=
	    (\Diag \mathbbm{1}_{a,b}\otimes \id_V)\tau^{(t,a)}_{XVW}$,
	    we have $\Tr\mt_B[\tau^{(t,a,b)}_{XVWB}]=\tau^{(t,a,b)}_{XVW}$.
	    \item The evolution step can be checked directly from the formula without $B$ (\cref{eq:evolution}):
	    \[\tau^{(t+1)}_{XVW}=\E_{a\in\mt_R\mc A}
	    \left[(\id_X\otimes\Phi_{t,a,1})\big(\tau^{(t,a,1)}_{XVW}\big)+(\id_X\otimes\Phi_{t,a,-1})\big(\tau^{(t,a,-1)}_{XVW}\big)\right].\]
    \end{itemize}
    So all previously proved properties about $\taut_{XVW}$ are preserved. In addition, we prove the following two properties about badness levels.
	
	\begin{lemma}\label{lemma:ipbound}
		For every $0\leq t\leq T$, $\ket{v}\in\mc V$ and $w\in\mc W$, we have
		\[\langle P^{\taut}_{X|v,w}, P \rangle\leq \sum_{\beta=0}^T P^{\taut}_{B|w}(\beta)\cdot 2^\beta\cdot 2^{-n}\cdot(1-2^{-r})^{-3t}.\]
	\end{lemma}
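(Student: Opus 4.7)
The plan is a proof by induction on $t$, propagating the bound through the intermediate sub-stages $\taut \to \tau^{(t,\star)} \to \tau^{(t,\circ)} \to \tau^{(t,\infty)} \to \tau^{(t,a)} \to \tau^{(t,a,b)} \to \tau^{(t+1)}$ of one round of the branching program. The base case $t=0$ is immediate since $P^{\tau^{(0)}}_{X|v,w}$ is uniform, giving inner product $2^{-n}$, and $P^{\tau^{(0)}}_{B|w}$ is concentrated at $\beta=0$, giving RHS equal to $2^{-n}$.

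Through the truncation sub-stages, \cref{lemma:id_dist} identifies $P^{\tau^{(t,a)}}_{X|v,w}$ with some $P^{\taut}_{X|v',w}$ up to the $gg\ct$ projection in the $\ell_\infty$ step; the $G_t$ truncation controls this projection so that its contribution to the inner product is at most a $(1-2^{-r})^{-1}$ multiplicative blowup. The combined-system construction sets $P^{\tau^{(t,a)}}_{B|w} = P^{\taut}_{B|w}$, so the inductive hypothesis transfers. By \cref{lemma:id_dist} applied to the $G_a$ truncation, every surviving $(v,w)$ also satisfies $|\langle M_a, P^{\tau^{(t,a)}}_{X|v,w}\rangle| \leq 2^{-r}$.

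The key sub-stage is the sample-conditioning $\tau^{(t,a)} \to \tau^{(t,a,b)}$, where a direct computation using $\mathbbm{1}_{a,b} = \tfrac{1}{2}(\vec{1}+bM_a)$ yields
\[
\frac{\langle P^{\tau^{(t,a,b)}}_{X|v,w}, P\rangle}{\langle P^{\tau^{(t,a)}}_{X|v,w}, P\rangle}
\;=\;
\frac{1 + b\,\langle M_a, P^{\sigma^{(t,a)}}_{X|v,w}\rangle}{1 + b\,\langle M_a, P^{\tau^{(t,a)}}_{X|v,w}\rangle},
\]
with denominator at least $1-2^{-r}$ via $G_a$. A case split on whether $(w,a)$ is bad mirrors the definition of the badness event exactly. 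If $(w,a)$ is not bad, then at the given $v$ either $|\langle M_a, P^{\sigma^{(t,a)}}_{X|v,w}\rangle| \leq 2^{-r}$ (so the fraction is at most $(1-2^{-r})^{-2}$) or $\langle P^{\tau^{(t,a)}}_{X|v,w}, P\rangle < \tfrac12 \cdot 2^{-n}$ (so the trivial factor-$2$ bound keeps the product within $2^{-n}(1-2^{-r})^{-1}$, absorbed by the $2^{-n}$ baseline), and $U_a$ leaves $\beta$ fixed. If $(w,a)$ is bad, the fraction can reach $2(1-2^{-r})^{-1}$, but $U_a$ shifts $\beta \mapsto \beta+1$ at $W=w$, which doubles $\sum_\beta P^{\tau^{(t,a,b)}}_{B|w}(\beta) 2^\beta$ and absorbs the factor of $2$. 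Combined with the earlier $(1-2^{-r})^{-1}$ from truncation, the per-step multiplicative growth is $(1-2^{-r})^{-3}$, matching the jump $(1-2^{-r})^{-3t} \to (1-2^{-r})^{-3(t+1)}$.

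Finally, $\tau^{(t,a,b)} \to \tau^{(t+1)}$ applies $\Phi_{t,a,b}$ only to $VW$, leaving $(X,B)$ untouched. Treating $(X,B)$ jointly as a classical register, \cref{claim:chan} expresses $P^{\tau^{(t+1)}}_{(X,B)|v,w}$ as a convex combination $\sum \alpha \cdot P^{\tau^{(t,a,b)}}_{(X,B)|v',w'}$ whose weights apply uniformly to the $X$- and $B$-marginals; averaging the bound already established at $\tau^{(t,a,b)}$ over this convex combination closes the induction. The main anticipated obstacle is that $\Phi$ may entangle $B$ with $V$ given $W$ in $\tau^{(t+1)}$, so the output of \cref{claim:chan} most naturally produces a bound involving $P^{\tau^{(t+1)}}_{B|v,w}$ rather than the coarser $P^{\tau^{(t+1)}}_{B|w}$ in the lemma's statement; reconciling the two requires exploiting both the coincidence of weights for the $X$- and $B$-marginals from \cref{claim:chan} and the input-stage invariant that $B$ is independent of $V$ given $W$ at $\tau^{(t,a,b)}$, so that each summand's badness expectation $\sum_\beta P^{\tau^{(t,a,b)}}_{B|v',w'}(\beta) 2^\beta$ is a function of $w'$ alone.
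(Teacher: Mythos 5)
Your proposal follows the paper's proof of \cref{lemma:ipbound} in its overall architecture: induction on $t$, propagation through the truncation sub-stages via \cref{lemma:id_dist} and the $G_t$ factor, the three-way case split on badness, and the final evolution step via \cref{claim:chan}. The explicit ratio identity
\[
\frac{\langle P^{\tau^{(t,a,b)}}_{X|v,w}, P\rangle}{\langle P^{\tau^{(t,a)}}_{X|v,w}, P\rangle}
=\frac{1 + b\,\langle M_a, P^{\sigma^{(t,a)}}_{X|v,w}\rangle}{1 + b\,\langle M_a, P^{\tau^{(t,a)}}_{X|v,w}\rangle}
\]
is correct and is a cleaner packaging of the paper's bounds \eqref{eq:bad} and \eqref{eq:good}; with $G_a$ bounding the denominator and the badness case split bounding the numerator, it reproduces the factors $2(1-2^{-r})^{-1}$ and $(1-2^{-r})^{-2}$ exactly. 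Up to this point the proposal is essentially the paper's argument.

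However, the obstacle you flag at the evolution step is a genuine difficulty that your proposal locates but does not close. The two ingredients you cite (that \cref{claim:chan} gives the same convex weights on the $X$- and $B$-marginals, and that $P^{\tau^{(t,a,b)}}_{B|v',w'}=P^{\tau^{(t,a,b)}}_{B|w'}$ at the input stage) together prove
\[
\langle P^{\tau^{(t+1)}}_{X|v,w}, P\rangle \leq \sum_{\beta} P^{\tau^{(t+1)}}_{B|v,w}(\beta)\cdot 2^\beta\cdot 2^{-n}\cdot(1-2^{-r})^{-3(t+1)},
\]
i.e.\ the bound with $B$ conditioned on $(v,w)$. The lemma, as the induction and the reset \eqref{eq:trunc} require, wants $P^{\tau^{(t+1)}}_{B|w}$, which is a $v$-averaged version; passing from the $(v,w)$-conditional to the $w$-conditional would need something like $\sum_\beta P^{\tau^{(t+1)}}_{B|v,w}(\beta)\,2^\beta\le\sum_\beta P^{\tau^{(t+1)}}_{B|w}(\beta)\,2^\beta$ for the particular $v$ at hand, and that inequality does not follow from what you have written (nor is it true pointwise, since the average can be smaller than a given term). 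So the phrase \emph{``reconciling the two requires exploiting\ldots''} names the right ingredients but never performs the reconciliation. For what it is worth, the paper's proof is also terse at exactly this point (its footnote only discusses $X$ versus $B$ independence), and one should observe that the $B|v,w$ form of the inequality would in fact suffice for the downstream use in \cref{lemma:main}, since $\bra{\beta}\taut_{B|v,w}\ket{\beta}\le\bra{\beta}\taut_B\ket{\beta}$; but either that restatement (and a matching modification of the definitions in \cref{eq:trunc}) or an explicit argument for the $B|w$ form is needed, and your proposal supplies neither.
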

	\begin{proof}
		We prove it by induction on $t$. For $t=0$ the lemma is true as $\langle P^{\taut}_{X|v,w}, P \rangle=2^{-n}$ and $P^{\taut}_{B|w}(0)=1$.
		
		Suppose the lemma holds true for some $t<T$. By a similar argument as in \cref{lemma:2norm} and applying \cref{lemma:id_dist} multiple times, we know that for every $\ket{v}\in\mc V, w\in\mc W$ and $a\in\mc A$, there exists $\ket{v'}$and $\ket{v''}\in\mc V$ such that
		\[\langle P^{\tau^{(t,a)}}_{X|v,w}, P \rangle
		=\langle P^{\tau^{(t,\circ)}}_{X|v',w}, P \rangle
		\leq (1-2^{-r})^{-1}\langle P^{\tau^{(t,\star)}}_{X|v',w}, P \rangle
		=(1-2^{-r})^{-1}\langle P^{\taut}_{X|v'',w}, P \rangle,\]
		and therefore
		\begin{equation}\label{eq:induction}
		\langle P^{\tau^{(t,a)}}_{X|v,w}, P \rangle\leq \sum_{\beta=0}^T P^{\taut}_{B|w}(\beta)\cdot 2^\beta\cdot 2^{-n}\cdot(1-2^{-r})^{-3t-1}.
		\end{equation}
		Also, the truncation step by $G_a$ implies that $|\langle M_a,  P^{\tau^{(t,a)}}_{X|v,w} \rangle|\leq 2^{-r}$. That is, for both $b\in\{-1,1\}$,
		\[1-2^{-r}\leq 2\lN{\mathbbm{1}_{a,b}\cdot P^{\tau^{(t,a)}}_{X|v,w}}_1\leq 1+2^{-r}.\]
		Therefore we have, unconditionally
		\begin{equation}\label{eq:bad}
		\langle P^{\tau^{(t,a,b)}}_{X|v,w}, P\rangle = 
		\frac{\langle \mathbbm{1}_{a,b}\cdot P^{\tau^{(t,a)}}_{X|v,w}, P \rangle}
		{\lN{\mathbbm{1}_{a,b}\cdot P^{\tau^{(t,a)}}_{X|v,w}}_1}  \leq
		2(1-2^{-r})^{-1}\cdot\langle P^{\tau^{(t,a)}}_{X|v,w}, P \rangle.
		\end{equation}
		
		When the inner product is evenly divided, i.e. $|\langle M_a,  P^{\sigma^{(t,a)}}_{X|v,w} \rangle|\leq 2^{-r}$, we further have
		\[\langle \mathbbm{1}_{a,b}\cdot P^{\tau^{(t,a)}}_{X|v,w}, P \rangle
		\leq \frac{1}{2}(1+2^{-r})\langle P^{\tau^{(t,a)}}_{X|v,w}, P \rangle
		\leq \frac{1}{2}(1-2^{-r})^{-1}\langle P^{\tau^{(t,a)}}_{X|v,w}, P \rangle,\]
		which means that
		\begin{equation}\label{eq:good}
		\langle P^{\tau^{(t,a,b)}}_{X|v,w}, P\rangle = 
		\frac{\langle \mathbbm{1}_{a,b}\cdot P^{\tau^{(t,a)}}_{X|v,w}, P \rangle}
		{\lN{\mathbbm{1}_{a,b}\cdot P^{\tau^{(t,a)}}_{X|v,w}}_1}  \leq
		(1-2^{-r})^{-2}\cdot\langle P^{\tau^{(t,a)}}_{X|v,w}, P \rangle.
		\end{equation}
		Now there are three cases to discuss:
		\begin{itemize}
		    \item If $(w,a)$ is bad, we have $P^{\tau^{(t,a,b)}}_{B|w}(\beta)=P^{\taut}_{B|w}(\beta-1)$ for every $\beta>0$. Notice that $P^{\taut}_{B|w}(T)=0$ as $t<T$, and thus \cref{eq:induction} and \cref{eq:bad} imply that
		    \begin{align*}
		        \langle P^{\tau^{(t,a,b)}}_{X|v,w}, P\rangle & \leq \sum_{\beta=0}^{T-1} P^{\taut}_{B|w}(\beta)\cdot 2^{\beta+1}\cdot 2^{-n}\cdot(1-2^{-r})^{-3t-2} \\
		        & \leq \sum_{\beta=0}^T P^{\tau^{(t,a,b)}}_{B|w}(\beta)\cdot 2^{\beta}\cdot 2^{-n}\cdot(1-2^{-r})^{-3(t+1)}.
		    \end{align*}
		    \item If $(w,a)$ is not bad and $|\langle M_a,  P^{\sigma^{(t,a)}}_{X|v,w} \rangle|\leq 2^{-r}$, we have $P^{\tau^{(t,a,b)}}_{B|w}(\beta)=P^{\taut}_{B|w}(\beta)$ for every $\beta\geq 0$. Then \cref{eq:induction} and \cref{eq:good} imply that
		    \begin{align*}
		        \langle P^{\tau^{(t,a,b)}}_{X|v,w}, P\rangle & \leq \sum_{\beta=0}^T P^{\taut}_{B|w}(\beta)\cdot 2^{\beta}\cdot 2^{-n}\cdot(1-2^{-r})^{-3t-3} \\
		        & = \sum_{\beta=0}^T P^{\tau^{(t,a,b)}}_{B|w}(\beta)\cdot 2^{\beta}\cdot 2^{-n}\cdot(1-2^{-r})^{-3(t+1)}.
		    \end{align*}
		    \item If $(w,a)$ is not bad and $|\langle M_a,  P^{\sigma^{(t,a)}}_{X|v,w} \rangle|>2^{-r}$, by the definition of badness \eqref{eq:badness} we must have $\langle P^{\tau^{(t,a)}}_{X|v,w}, P \rangle<\frac{1}{2}\cdot 2^{-n}$. Thus by \cref{eq:bad},
		    \[\langle P^{\tau^{(t,a,b)}}_{X|v,w}, P\rangle< (1-2^{-r})^{-1}\cdot 2^{-n}
		    \leq \sum_{\beta=0}^T P^{\tau^{(t,a,b)}}_{B|w}(\beta)\cdot 2^{\beta}\cdot 2^{-n}\cdot(1-2^{-r})^{-3(t+1)}.\]
		\end{itemize}
		The last inequality follows from $\sum_{\beta=0}^T P^{\tau^{(t,a,b)}}_{B|w}(\beta)\cdot 2^{\beta}\cdot 2^{-n}\cdot(1-2^{-r})^{-3(t+1)} \geq 2^{-n} (1-2^{-r})^{-3 (t+1)}$. 
		Hence we obtain the same conclusion from all three cases.
		
		For the evolution step, since $B$ is classical we can view $X$ and $B$ as a whole and apply \cref{claim:chan} on $P^{\tau^{(t+1)}}_{XB|v,w}$, which asserts that $P^{\tau^{(t+1)}}_{XB|v,w}$ is a convex combination of $P^{\tau^{(t,a,b)}}_{XB|v',w'}$ for some $a,b,w'$ and $\ket{v'}$. Then by linearity we conclude that 
		\footnote{It should be noted that in $\tau^{(t+1)}$, $X$ and $B$ are not independent. (In $\tau^{(t,a,b)}$ they are independent (conditioned on $v',w'$)). Nevertheless, independence of $X,B$ (in $\tau^{(t+1)}$) is not needed or used here and we can conclude the final inequality by linearity by taking the corresponding convex combination of all inequalities.}
		\[\langle P^{\tau^{(t+1)}}_{X|v,w}, P \rangle\leq \sum_{\beta=0}^T P^{\tau^{(t+1)}}_{B|w}(\beta)\cdot 2^\beta\cdot 2^{-n}\cdot(1-2^{-r})^{-3(t+1)}. \qedhere\]
	\end{proof}

	\begin{lemma}\label{lemma:badnessweight}
		For every $0\leq\beta\leq t\leq T$ we have
		\[\bra{\beta}\taut_B\ket{\beta}\leq 2^{-k\beta}\binom{t}{\beta}.\]
	\end{lemma}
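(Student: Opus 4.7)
The plan is to induct on $t$, with the base case $t=0$ immediate since $\tau^{(0)}_B = \ket{0}\bra{0}$ gives $\bra{0}\tau^{(0)}_B\ket{0} = 1 = \binom{0}{0}$ and all other $\beta$ give $0$. For the inductive step, I want to derive the recursion
\[
\bra{\beta}\tau^{(t+1)}_B\ket{\beta} \;\leq\; \bra{\beta}\tau^{(t)}_B\ket{\beta} \;+\; 2^{-k}\bra{\beta-1}\tau^{(t)}_B\ket{\beta-1},
\]
after which the conclusion follows by combining the induction hypothesis with Pascal's rule $\binom{t}{\beta}+\binom{t}{\beta-1}=\binom{t+1}{\beta}$, using the convention that $\bra{-1}\tau^{(t)}_B\ket{-1}=0$.

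To obtain the recursion, first unfold the evolution formula for $\tau^{(t+1)}_{XVWB}$. The channels $\Phi_{t,a,b}$ act as identity on $B$ and are trace-preserving on $VW$, so tracing out $X,V,W$ gives $\tau^{(t+1)}_B = \E_a\bigl[\tau^{(t,a,1)}_B + \tau^{(t,a,-1)}_B\bigr]$. Since $X$ is classical and $\mathbbm{1}_{a,1}+\mathbbm{1}_{a,-1}=\vec 1$, the sum over $b$ collapses the $\Diag\mathbbm{1}_{a,b}$ factors and leaves only the action of $U_a$ on $WB$, i.e.\ $\sum_b \tau^{(t,a,b)}_{WB} = U_a\,\tau^{(t,a)}_{WB}\,U_a\ct$. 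From the definition of $\tau^{(t,a)}_{XVWB}$ as a $B$-product with $P^{\tau^{(t)}}_{B|w}$, tracing out $W$ now yields
\[
\bra{\beta}\tau^{(t+1)}_B\ket{\beta} \;=\; \E_a\!\left[\sum_{w:(w,a)\text{ bad}} \Tr[\tau^{(t,a)}_{XV|w}]\,P^{\tau^{(t)}}_{B|w}(\beta-1) \;+\; \sum_{w:(w,a)\text{ not bad}} \Tr[\tau^{(t,a)}_{XV|w}]\,P^{\tau^{(t)}}_{B|w}(\beta)\right].
\]

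Now I bound each piece. All truncation steps (by $G_2$, $G_\infty\wedge G_t$, and $G_a$) only remove mass, so $\Tr[\tau^{(t,a)}_{XV|w}] \leq \Tr[\tau^{(t)}_{XV|w}]$ for every $w$ and $a$. Dropping the restriction to non-bad pairs in the second sum and applying this bound gives a contribution at most $\sum_w \Tr[\tau^{(t)}_{XV|w}]\,P^{\tau^{(t)}}_{B|w}(\beta) = \bra{\beta}\tau^{(t)}_B\ket{\beta}$. For the first sum, I swap the expectation and use \cref{lemma:badness}, which says $\Pr_a[(w,a)\text{ bad}]\leq 2^{-k}$ for every fixed $w$; again bounding $\Tr[\tau^{(t,a)}_{XV|w}]$ by $\Tr[\tau^{(t)}_{XV|w}]$, the bad contribution is at most $2^{-k}\sum_w \Tr[\tau^{(t)}_{XV|w}]\,P^{\tau^{(t)}}_{B|w}(\beta-1) = 2^{-k}\bra{\beta-1}\tau^{(t)}_B\ket{\beta-1}$. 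This yields the recursion above.

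The only real subtlety is double-checking that tracing out $X$, $V$, $W$ indeed leaves the $B$-marginal of $\sum_b \tau^{(t,a,b)}_{XVWB}$ equal to $U_a\,\tau^{(t,a)}_{WB}\,U_a\ct$ after tracing out $W$ — this rests on $X$ being classical, on $\Phi_{t,a,b}$ being trace-preserving on $VW$, and on $U_a$ being a permutation; none of these introduces any loss. Everything else is routine bookkeeping, and the final application of Pascal's rule closes the induction.
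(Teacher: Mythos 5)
Your proof is correct and follows essentially the same route as the paper: you express $\tau^{(t+1)}_B = \E_a \Tr_W[U_a\tau^{(t,a)}_{WB}U_a^\dagger]$, split the sum over $w$ by whether $(w,a)$ is bad, invoke \cref{lemma:badness} for the $2^{-k}$ bound on the bad event, use that truncation only decreases $\Tr[\tau^{(t,a)}_{XV|w}] \leq \Tr[\tau^{(t)}_{XV|w}]$ to conclude $\tau^{(t,a)}_B \leq \tau^{(t)}_B$, and close with Pascal's rule. The only cosmetic difference is that you write the intermediate expression out explicitly in terms of $\Tr[\tau^{(t,a)}_{XV|w}]\cdot P^{\tau^{(t)}}_{B|w}(\cdot)$ whereas the paper keeps it in operator form; otherwise the argument is the same.
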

	\begin{proof}
		We prove it by induction on $t$. For $t=0$ the lemma holds as $\tau^{(0)}_B=\ket{0}\bra{0}\mt_B$. Also notice that the lemma is trivially true for every $t$ when $\beta=0$.
		
		Now suppose the lemma holds for some $t$. By definition we have
		\[\tau^{(t+1)}_B=\E_{a\in\mt_R \mc A}[\tau^{(t,a,1)}_B+\tau^{(t,a,-1)}_B]
		=\E_{a\in\mt_R \mc A}\Tr\mt_W[U_a\tau^{(t,a)}_{WB}U_a\ct].\]
		Therefore
		\[\bra{\beta}\tau^{(t+1)}_B\ket{\beta}=\sum_{w\in\mc W}\E_{a\in\mt_R \mc A} \left[ \bra{w,\beta}U_a\tau^{(t,a)}_{WB}U_a\ct\ket{w,\beta}\right].\]
		By \cref{lemma:badness} we know that for every $w\in\mc W$, the probability that $(w,a)$ is bad for $a\in\mt_R \mc A$ is at most $2^{-k}$. In other words, for every $\beta>0$, 
		\[U_a\ct\ket{w,\beta}=\left\{\begin{array}{ll}
		\ket{w,\beta}, & \textrm{ w.p. }\geq 1-2^{-k} \\
		\ket{w,\beta-1}, & \textrm{ w.p. }\leq 2^{-k}\end{array}\right.\]
		where the probability is taken over the random choice of $a$. 
		It means that
		\begin{align*}
		\bra{\beta}\tau^{(t+1)}_B\ket{\beta} &\leq 
		\sum_{w\in\mc W}\bra{w,\beta}\tau^{(t,a)}_{WB}\ket{w,\beta} + 
		2^{-k}\sum_{w\in\mc W}\bra{w,\beta-1}\tau^{(t,a)}_{WB}\ket{w,\beta-1} \\
		&= \bra{\beta}\tau^{(t,a)}_B\ket{\beta}+ 2^{-k}\cdot \bra{\beta-1}\tau^{(t,a)}_B\ket{\beta-1}.
		\end{align*}
		Notice that
		\[\tau^{(t,a)}_B
		    =\sum_{w\in\mc W}\Tr[\tau^{(t,a)}_{XV|w}]\cdot \Diag P^{\taut}_{B|w} 
		    \leq \sum_{w\in\mc W}\Tr[\taut_{XV|w}]\cdot \Diag P^{\taut}_{B|w}
		    =\taut_B,\]
		and thus we conclude that
		\begin{align*}
		\bra{\beta}\tau^{(t+1)}_B\ket{\beta} &\leq 
		\bra{\beta}\taut_B\ket{\beta}+ 2^{-k}\cdot \bra{\beta-1}\taut_B\ket{\beta-1} \\
		&\leq 2^{-k\beta}\binom{t}{\beta}+ 2^{-k}\cdot 2^{-k(\beta-1)}\binom{t}{\beta-1}
	    =  2^{-k\beta}\binom{t+1}{\beta}. \qedhere
		\end{align*}
	\end{proof}
	With the lemmas above in hand, we can finally prove \cref{lemma:main}.
	\begin{proof}[Proof for \cref{lemma:main}]
		For the target distribution $P=P^{\taut}_{X|v,w}$ we have $\langle P^{\taut}_{X|v,w}, P \rangle>2^{2\ell}\cdot 2^{-n}$, so by \cref{lemma:ipbound},
		\[\sum_{\beta=0}^T P^{\taut}_{B|w}(\beta)\cdot 2^\beta\cdot (1-2^{-r})^{-3t}>2^{2\ell}.\]
		Since $t\leq T\leq 2^{r-2}$, we have $(1-2^{-r})^{-3t}\leq 2$, and thus
		\[\sum_{\beta=\ell}^T P^{\taut}_{B|w}(\beta)\cdot 2^\beta
		>\frac{1}{2}\left(2^{2\ell}-2\cdot \sum_{\beta=0}^{\ell-1}2^\beta\right)>2^{\ell}.\]
		On the other hand, for every $\beta\geq\ell$, by \cref{lemma:badnessweight}, 
		\[\Tr[\taut_{B|w}]\cdot P^{\taut}_{B|w}(\beta)\leq \bra{\beta}\taut_B\ket{\beta}\leq (2^{-k}t)^\beta<2^{-(k-r)\beta},\]
		and thus by \cref{eq:m},
		\[\Tr[\taut_{X|v,w}]\leq\Tr[\taut_{B|w}]<2^{-\ell}\sum_{\beta=\ell}^T 2^{-(k-r)\beta}\cdot 2^\beta\leq 2\cdot2^{-(k-r)\ell}\leq2^{-2m}\cdot 2^{-4r}. \qedhere\]
	\end{proof}
	
	\section*{Acknowledgement}
	
	We are grateful to Uma Girish for many important discussions and suggestions on the draft of this paper, and to the anonymous reviewers for their helpful comments.
	
	\bibliography{main}
	\bibliographystyle{alpha}

	\appendix
	\section{\texorpdfstring{Bounding Parameters for \Cref{thm:main}}{Bounding Parameters}} \label{sec:parameter}
\begin{proof}
For \Cref{eq:qr}, since $r \leq r'/4$ by the definition of $r$ and $q \leq r - 7$, we have:
\begin{align*}
    q + r + 1 - r' \leq q + r + 1 - 4 r \leq - 2 r - 6 \leq - 2 r. 
\end{align*}
For \Cref{eq:lr}, we use the fact that $\ell = \frac{1}{5} (\ell' - 13 r - 2)$, therefore
\begin{align*}
    2 \ell + 9 r - n =& \frac{2}{5} (\ell' - 13 r - 2) + 9r - n \\
    = &  \frac{2}{5} \ell' + \frac{19}{5} r - \frac{4}{5} - n.
\end{align*}
Solving the inequality $\frac{2}{5} \ell' + \frac{19}{5} r - \frac{4}{5} - n \leq -r$ (together with the fact that $\ell' \leq n$) gives us $r \leq \frac{1}{8} \ell' + \frac{1}{6}$, which follows since $r \leq \frac{1}{26} \ell' + \frac{1}{6}$ by definition.

For \Cref{eq:m}, as $r \leq \frac{1}{26} \ell' + \frac{1}{6}$, $r \leq \frac{1}{2} (k'-1)$ and $k = k'-1$, 
\begin{align*}
    k-r & \geq k - (k'-1)/2 = (k'-1)/2, \\
    \ell & = \frac{1}{5} (\ell'- 13 r - 2) \geq \frac{1}{5} \left(  \ell' - \frac{\ell'}{2} - \frac{13}{6} - 2\right) > \frac{1}{10} \ell' - 1.
\end{align*}
When $\ell'$ is sufficiently large, we have $\frac{1}{10} \ell' - 1 > \frac{1}{11} \ell' + 5$. Thereby, using the fact that $m \leq (k' - 1) \ell' / 44$: 
\begin{align*}
    (k-r) \ell & \geq \frac{1}{2}(k'-1) \left( \frac{1}{11} \ell' + 5 \right) \\
            & = \frac{1}{22} (k' - 1) \ell' + \frac{5}{2} (k'-1) \\
            & \geq 2m + 5 r \\
             & \geq 2m + 4 r + 1. \qedhere
\end{align*}
\end{proof}

\section{\texorpdfstring{Proof for \cref{prop:distance_weight}}{Proof for Generalized FvdG}} \label{sec:FG-variant}
We first state a variant of Fuchs-van de Graaf inequality \cite{fuchs1999cryptographic} on fidelity, defined for two partial density operators $\rho$ and $\sigma$ as 
\[F(\rho, \sigma) = \Tr\left[ \sqrt{\sqrt{\rho} \sigma \sqrt{\rho}} \right].\]
\begin{lemma}\label{lem:variant-FG}
  Let $\rho, \sigma$ be two semi-definite operators. Assume $\Tr[\rho] \geq \Tr[\sigma]$. Then
  \begin{align*}
	\frac{1}{2} \lN{\rho - \sigma}_\Tr \leq \sqrt{ \frac{1}{4} (\Tr[\rho] + \Tr[\sigma])^2 - F(\rho, \sigma)} \leq \sqrt{ \Tr[\rho]^2 - F(\rho, \sigma)}. 
  \end{align*}
\end{lemma}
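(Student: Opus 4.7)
The plan is to reduce the inequality to a rank-two calculation via Uhlmann purification. The first step is to invoke the Uhlmann characterization in its form for subnormalized positive semi-definite operators: there exist vectors $\ket{\psi_\rho}, \ket{\psi_\sigma}$ in an enlarged Hilbert space $\mc H \otimes \mc H'$ with $\Tr\mt_{\mc H'}[\ket{\psi_\rho}\bra{\psi_\rho}]=\rho$, $\Tr\mt_{\mc H'}[\ket{\psi_\sigma}\bra{\psi_\sigma}]=\sigma$, $\lN{\ket{\psi_\rho}}_2^2=\Tr[\rho]$, $\lN{\ket{\psi_\sigma}}_2^2=\Tr[\sigma]$, and $|\braket{\psi_\rho}{\psi_\sigma}| = F(\rho,\sigma)$. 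This can be verified by taking $\ket{\psi_\rho}=(\sqrt{\rho}\otimes I)\sum_i\ket{i,i}$ and $\ket{\psi_\sigma}=(\sqrt{\sigma}\otimes U)\sum_i\ket{i,i}$, so that $|\braket{\psi_\rho}{\psi_\sigma}| = |\Tr[\sqrt{\sigma}\sqrt{\rho}U^T]|$, and then maximizing over unitaries $U$ via the polar decomposition of $\sqrt{\sigma}\sqrt{\rho}$.

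Next, monotonicity of the trace norm under partial trace (which carries over verbatim to subnormalized operators) gives $\lN{\rho-\sigma}_\Tr \leq \lN{\ket{\psi_\rho}\bra{\psi_\rho}-\ket{\psi_\sigma}\bra{\psi_\sigma}}_\Tr$. The operator on the right is Hermitian, of rank at most two, and supported on the subspace $\mathrm{span}(\ket{\psi_\rho},\ket{\psi_\sigma})$. Setting $a^2=\Tr[\rho]$, $b^2=\Tr[\sigma]$, $c=|\braket{\psi_\rho}{\psi_\sigma}|=F(\rho,\sigma)$, a direct $2\times 2$ computation in that span yields trace $a^2-b^2$ and determinant $c^2-a^2b^2$, which is non-positive since $F(\rho,\sigma)\leq\sqrt{\Tr[\rho]\Tr[\sigma]}$. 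The eigenvalues are therefore $\tfrac{1}{2}\bigl[(a^2-b^2)\pm\sqrt{(a^2+b^2)^2-4c^2}\bigr]$, of opposite signs, and the trace norm equals $\sqrt{(\Tr[\rho]+\Tr[\sigma])^2-4F(\rho,\sigma)^2}$. Dividing by two gives the first desired inequality (with $F(\rho,\sigma)^2$ under the square root, which is the form needed so that the application in \cref{prop:distance_weight} reproduces the bound $4\Tr[\rho]^2-4\Tr[\Pi\rho]^2$, noting that $F(\rho,\Pi\rho\Pi)=\Tr[\Pi\rho]$).

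The second inequality is immediate from the hypothesis $\Tr[\sigma]\leq \Tr[\rho]$: it gives $\tfrac{1}{2}(\Tr[\rho]+\Tr[\sigma])\leq \Tr[\rho]$, hence $\tfrac{1}{4}(\Tr[\rho]+\Tr[\sigma])^2\leq \Tr[\rho]^2$.

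I do not anticipate any genuine obstacle. The only non-routine step is the Uhlmann characterization for subnormalized operators, which is a direct adaptation of the standard argument for density operators. An entirely elementary alternative would be to pad $\rho$ and $\sigma$ with orthogonal one-dimensional scalar blocks to build normalized density operators $\rho', \sigma'$, then apply the standard Fuchs--van de Graaf inequality to $\rho', \sigma'$ and bookkeep the extra cross terms; but the direct purification route is conceptually cleaner and handles both inequalities in one stroke.
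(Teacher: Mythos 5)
Your proof is correct, and it follows the same high-level route as the paper — Uhlmann purification plus trace-norm monotonicity under partial trace — but replaces the paper's computational core with a cleaner one. Where the paper diagonalizes $uu\ct-vv\ct$, extracts the diagonal vectors $p,q$, and then pushes through a Cauchy--Schwarz estimate to bound $\lN{uu\ct-vv\ct}_\Tr$, you observe that the difference of two (unnormalized) rank-one projectors is rank-two, and compute its trace norm \emph{exactly} from the trace and determinant of the restriction to the two-dimensional span. Your closed-form
\[
\lN{\ket{\psi_\rho}\bra{\psi_\rho}-\ket{\psi_\sigma}\bra{\psi_\sigma}}_\Tr
= \sqrt{\big(\Tr[\rho]+\Tr[\sigma]\big)^2 - 4\,|\braket{\psi_\rho}{\psi_\sigma}|^2}
\]
is sharper in spirit than the paper's chain of inequalities, though of course both end at the same upper bound for $\lN{\rho-\sigma}_\Tr$. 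Your second-inequality step is identical to the paper's and is immediate from $\Tr[\sigma]\leq\Tr[\rho]$.

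One point worth making explicit, which you flagged correctly: with the paper's stated definition $F(\rho,\sigma)=\Tr\big[\sqrt{\sqrt\rho\,\sigma\sqrt\rho}\big]$ (square-root fidelity), Uhlmann gives $F(\rho,\sigma)=\max|\braket{\psi_\rho}{\psi_\sigma}|$, so the first inequality should read $\sqrt{\tfrac14(\Tr[\rho]+\Tr[\sigma])^2-F(\rho,\sigma)^2}$, exactly as you derive. The lemma as written has $F(\rho,\sigma)$ unsquared, and the paper's own proof invokes Uhlmann in the form $F(\rho,\sigma)=|\langle u,v\rangle|^2$ — which is the characterization of the \emph{squared} fidelity $(\Tr[\sqrt{\sqrt\rho\,\sigma\sqrt\rho}])^2$. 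So there is a notational mismatch in the paper between the displayed definition of $F$ and its use in this lemma and in \cref{prop:distance_weight}; your version with $F^2$ (under the square-root convention), and your check that $F(\rho,\Pi\rho\Pi)=\Tr[\Pi\rho]$ so that the resulting bound is $4\Tr[\rho]^2-4\Tr[\Pi\rho]^2$, is the correct reading and resolves the ambiguity in the way the downstream application requires.
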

Notice that when $\Tr[\rho] = \Tr[\sigma] = 1$, the above inequality is the original Fuchs-van de Graaf inequality. 
\begin{proof}
    Let $u$ and $v$ be purifications of $\rho$ and $\sigma$, that is, $u,v\in\mc H\otimes \mc H_A$ with $\rho=\Tr_A[uu\ct]$ and $\sigma=\Tr_A[vv\ct]$, where $\mc H$ is the ambient space of $\rho$ and $\sigma$, and $\mc H_A$ is some finite-dimensional Hilbert space.
    
    Let $U$ be a unitary that diagonalizes $uu\ct-vv\ct$, that is there is a diagonal matrix $\Lambda \in \mathbb{C}^{d \times d}$ such that $uu\ct-vv\ct = U \Lambda U^\dagger$.
    Let $p, q \in \mathbb{R}^d_{\geq 0}$ be the diagonal of $U^\dagger uu\ct U$ and $U^\dagger vv\ct U$ respectively. We have 
    \begin{align*}
        u\ct u =\Tr[U^\dagger uu\ct U]&= \sum_{x \in [d]} p(x), \\
        v\ct v =\Tr[U^\dagger vv\ct U]&= \sum_{x \in [d]} q(x), \\
        \lN{uu\ct-vv\ct}_\Tr =\lN{\Lambda}_\Tr&= \sum_{x \in [d]} |p(x) - q(x)|, \\
        |\langle u,v\rangle| = |\langle U\ct u,U\ct v\rangle| &\leq \sum_{x \in [d]} \sqrt{p(x)q(x)}.
    \end{align*}
    Therefore, by Cauchy-Schwarz inequality, 
    \begin{align*}
        \lN[2]{uu\ct-vv\ct}_\Tr &= \left(\sum_{x \in [d]} |p(x) - q(x)|\right)^2 \\
         &= \left(\sum_{x \in [d]} \left|\sqrt{p(x)} - \sqrt{q(x)}\right|\cdot\left|\sqrt{p(x)} + \sqrt{q(x)}\right|\right)^2  \\
         &\leq \left(\sum_{x \in [d]} \left|\sqrt{p(x)} - \sqrt{q(x)}\right|^2\right) \left(\sum_{x \in [d]} \left|\sqrt{p(x)} + \sqrt{q(x)}\right|^2\right)  \\
         &= \left(\sum_{x \in [d]} p(x)+\sum_{x \in [d]} q(x)\right)^2-4\left(\sum_{x \in [d]} \sqrt{p(x)q(x)}\right)^2 \\
         &\leq \left(u\ct u+v\ct v\right)^2-4|\langle u,v\rangle|^2.
    \end{align*}
    Notice that $\lN{\rho - \sigma}_\Tr \leq \lN{uu\ct-vv\ct}_\Tr$, $\Tr[\rho]=u\ct u$ and $\Tr[\sigma]=v\ct v$. By Uhlmann's theorem \cite{uhlmann1976transition}, we can also choose $u$ and $v$ such that $F(\rho,\sigma)=|\langle u,v\rangle|^2$. Plugging them into the above inequality concludes the proof.
\end{proof}

Now we are ready to prove \cref{prop:distance_weight}
	\begin{proof}[Proof for \cref{prop:distance_weight}]
		By Fuchs-van de Graaf inequality, it suffices to prove the following bound on fidelity:
		\[F(\rho,\Pi\rho\Pi)\geq  \Tr[\Pi\rho]^2.\]
		Let $u$ be a purification of $\rho\mt_{XV}$, that is,
		$\rho=\Tr_A[uu\ct]$ for some Hilbert space $\mc H_A$. Then $\big(\Pi\otimes\id_A\big)u$ is a purification of $\Pi\rho\Pi$. By Uhlmann's theorem we have
		\[
			F(\rho,\Pi\rho\Pi)
			\geq \left|u\ct\big(\Pi\otimes\id_A\big)u\right|^2
			=\Tr\left[\big(\Pi\otimes\id_A\big)uu\ct\right]^2
			=\Tr\left[\Pi\cdot\Tr_A[uu\ct]\right]^2 
		=\Tr[\Pi\rho]^2. \qedhere
		\]
	\end{proof}

    \section{Linear Quantum Memory Lower Bound}\label{appendix:C}
In this appendix, we prove Theorem~\ref{thm:C} that shows 
a simpler proof for a linear quantum-memory lower bound (without classical memory).
While Theorem~\ref{thm:C} is qualitatively weaker than our main results in most cases, as it only gives a lower bound for programs with a linear-size quantum memory but without a (possibly quadratic) classical memory, Theorem~\ref{thm:C} is technically incomparable to the main results, as it's stated in terms of quantum extractors and the bound on the quantum-memory size depends on a different parameter of the extractor.
Additionally, the proof of Theorem~\ref{thm:C} is significantly simpler than the proof of our main theorem.

We first define the quantum extractor property that we need, which is a simplified version of the ones considered in \cite{kasher2012two}. Given a matrix $M:\mc A\times\mc X\rightarrow\{-1,1\}$, consider two independent sources $A$ and $X$ uniformly distributed over $\mc A$ and $\mc X$ respectively. Suppose there is some quantum register $V$ whose state depends on $A$ and $X$, and they together form a classical-quantum system
\[\rho\mt_{AXV}=\bigoplus_{a\in\mc A,x\in\mc X}\rho\mt_{V|a,x},\]
where $\rho\mt_{V|a,x}$ is the state of $V$ when $A=a$ and $X=x$. For any function $f$ on $A\times X$, we say that $V$ depends only on $f(A,X)$ if for any $a,a'\in\mc A$ and $x,x'\in\mc X$, whenever $f(a,x)=f(a',x')$ we have $\rho\mt_{V|a,x}=\rho\mt_{V|a',x'}$. In particular, $V$ depending only on $A$ is equivalent to $V$ being independent of $X$, or $\rho\mt_{XV}=\rho\mt_X\otimes\rho\mt_V$.

We say that $M$ is an $X$-strong $(q,r)$-quantum extractor, if for every classical-quantum system $\rho\mt_{AXV}$, as above, with the $q$-qubit quantum subsystem $V$ that depends only on $A$, it holds that
\[\left\|\rho\mt_{M(A,X)XV}-U\otimes \rho\mt_{X}\otimes\rho\mt_V\right\|_\Tr\leq 2^{-r}.\]
Here $U=\begin{pmatrix}1/2&0\\0&1/2\end{pmatrix}$ is the uniform operator over one bit, and $\rho\mt_{M(A,X)XV}$ is the classical-quantum system constructed by adding a new classical register storing the value of $M(A,X)$ and tracing out~$A$. In other words, 
\[\rho\mt_{M(A,X)XV}=\bigoplus_{b\in\{-1,1\},x\in\mc X}\sum_{\substack{a\in\mc A\\M(a,x)=b}}\rho\mt_{V|a,x}.\]
Notice that if we choose $V$ to be trivial, the above inequality immediately implies that $\left|\E[M(A,X)]\right|\leq 2^{-r}$.

As an example, the results in \cite{kasher2012two} imply that the inner product function 
on $n$ bits, where $\mc A=\mc X=\mathbb{F}_2^n$ and
\[M(a,x)=(-1)^{a\cdot x},\]
is an $X$-strong $(k,n-k)$-quantum extractor for every $2\leq k\leq n$.

In this section we prove the following theorem:
\begin{theorem}\label{thm:C}
	Let $\mc X,\mc A$ be two finite sets with $n=\log_2|\mc X|$. Let $M:\mc A\times\mc X\rightarrow\{-1,1\}$ be a matrix which is a $X$-strong $(q,r)$-quantum extractor. Let $\rho$ be a branching program for the learning problem corresponding to $M$, described by classical-quantum systems $\rhot_{XV}$, with $q/2$-qubit quantum memory $V$ and length $T$, and without classical memory. Then the success probability of $\rho$ is at most
    \[2^{-n}+8T\sqrt{n+q}\cdot 2^{-r/4}.\]
\end{theorem}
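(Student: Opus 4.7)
The strategy I would follow mirrors the classical information-theoretic intuition: compare the actual branching program to a ``simulator'' that feeds it uniformly random bits instead of the real labels $b=M(a,x)$, and show the two stay close in trace distance. Concretely, let $\tau^{(t)}_V$ denote the state of the $q/2$-qubit memory after $t$ steps in the simulated process that runs the same operations $\Phi_{t,a,b}$ but with $b$ drawn uniformly from $\{-1,1\}$ instead of set to $M(a,x)$. The simulator never reads $x$, so its joint state is the product $\rho_X\otimes\tau^{(t)}_V$ with $X$ and $V$ independent; any predictor $\widetilde x(v)$ applied to this product state recovers the uniform $X\in\mc X$ correctly with probability exactly $2^{-n}$. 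Hence it suffices to bound $\epsilon_T := \lN{\rho^{(T)}_{XV}-\rho_X\otimes\tau^{(T)}_V}_\Tr$ by $8T\sqrt{n+q}\cdot 2^{-r/4}$, since the success probability of the branching program differs from that of the simulator by at most $\epsilon_T$ (the success POVM element has operator norm at most $1$).

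To control $\epsilon_T$ I would induct on $t$ through a one-step hybrid $\hat\rho^{(t+1)}_{XV}$ that applies the real sample step $\Phi_{t,A,M(A,X)}$, averaged over uniform $A\in\mc A$, to the simulator's product state $\rho_X\otimes\tau^{(t)}_V$. Contractivity of quantum channels under the trace norm gives $\lN{\rho^{(t+1)}_{XV}-\hat\rho^{(t+1)}_{XV}}_\Tr\leq\epsilon_t$, reducing the induction to a per-step bound of order $\sqrt{n+q}\cdot 2^{-r/4}$ on $\lN{\hat\rho^{(t+1)}_{XV}-\rho_X\otimes\tau^{(t+1)}_V}_\Tr$. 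To obtain this per-step bound I would invoke the $X$-strong $(q,r)$-quantum extractor property on the classical-quantum system in which $A,X$ are uniform independent sources and the quantum register is just $\tau^{(t)}_V$ itself, which trivially ``depends only on $A$'' since it is independent of both $A$ and $X$. The extractor then guarantees that, after tracing out $A$, the joint state of $(M(A,X),X,V)$ is $2^{-r}$-close in trace distance to $U\otimes\rho_X\otimes\tau^{(t)}_V$.

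The main obstacle is that the sample channels $\Phi_{t,a,b}$ require the label $a$, which the extractor has already traced out; the naive fix of absorbing $V'=A\otimes\tau^{(t)}_V$ into the extractor's $V$-register demands $\log|\mc A|+q/2\leq q$, which fails in the regimes of interest (e.g., parity learning where $|\mc A|=2^n$ while the memory is only $q/2\ll n$ qubits). I would handle this by a two-step conversion: turn the $2^{-r}$ trace-distance bound into an order $2^{-r/2}$ bound on a fidelity or Hilbert--Schmidt quantity via a Fuchs--van de Graaf type inequality, propagate this through the $\Phi_{t,a,b}$ operation coherently in $A$, and convert back to trace distance on the joint $X\otimes V$ register by paying a $\sqrt{\mathrm{rank}}$-style dimension factor of $\sqrt{n+q}$ together with a second square-root to reach $2^{-r/4}$. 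Accumulating these per-step errors across $T$ steps yields $\epsilon_T=O(T\sqrt{n+q}\cdot 2^{-r/4})$ and completes the proof. The hardest ingredient to pin down rigorously is precisely this conversion, which explains both why the bound is weaker than the $T\cdot 2^{-r}$ one obtains under the stronger hypothesis $\log|\mc A|\leq q/2$, and why Theorem~\ref{thm:C} is qualitatively weaker than the main theorem of the paper.
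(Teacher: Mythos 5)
Your high-level plan (compare $\rho^{(t)}_{XV}$ against a product state $\rho_X\otimes\tau^{(t)}_V$, induct using contractivity of channels, and invoke the extractor once per step) matches the structure of the paper's proof, and you correctly locate the central obstacle: the sample channel $\Phi_{t,a,b}$ needs $a$, which the extractor has already traced out, and folding $A$ into the quantum register blows the qubit budget. However, your proposed resolution is not a proof and, I believe, does not work as described. You suggest converting the $2^{-r}$ trace-distance guarantee into a fidelity/Hilbert--Schmidt bound via Fuchs--van de Graaf and then ``propagating coherently in $A$.'' But Fuchs--van de Graaf relates trace distance and fidelity between two \emph{fixed} states; it does nothing to recover the register $A$ that was traced out, and it gives no mechanism for reasoning about the dependence of $\Phi_{t,a,M(a,x)}(\tau)$ on $X$ via $a$. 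The per-step bound you need is a statement about a register that depends on \emph{both} $A$ and $M(A,X)$, and the extractor hypothesis only controls registers depending on $A$ alone.

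The paper resolves this with two ingredients you are missing. First, it works with quantum mutual information rather than fidelity: Lemma~\ref{lemma:C1} shows $\tfrac12\xi^\rho(X;V)^2\le\I[\rho]{X}{V}\le q\cdot\xi^\rho(X;V)+2\sqrt{\xi^\rho(X;V)}$ (Pinsker in one direction, Fannes--Audenaert in the other). This is what actually produces the $\sqrt{n+q}$ factor and the degradation from $2^{-r}$ to $2^{-r/4}$ --- not Fuchs--van de Graaf. Mutual information is the right quantity because it satisfies a clean data-processing inequality. Second, and crucially, is the \emph{doubling trick}: to bound $\xi^\rho(X;V')$ for the $q/2$-qubit register $V'=\Phi_{t,A,M(A,X)}(\tau)$, one forms the $q$-qubit register $W=\Phi_{t,a,1}(\tau)\otimes\Phi_{t,a,-1}(\tau)$, which depends \emph{only} on $A$, so the extractor property applies and bounds $\I[\rho]{X}{M(A,X),W}$. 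Since $V'$ is a function of $(M(A,X),W)$, data processing gives $\I[\rho]{X}{V'}\le\I[\rho]{X}{M(A,X),W}$, and converting back to $\xi$ via Lemma~\ref{lemma:C1} yields the per-step $4\sqrt{n+q}\cdot2^{-r/4}$. This is precisely why the theorem asks for a $q/2$-qubit memory against a $(q,r)$-extractor: the doubling uses up the extra factor of two. Without identifying mutual information as the bridge and without the doubling construction, your per-step bound does not go through, so the proposal has a genuine gap even though the overall skeleton is right.
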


To prove the theorem, we first need to define the following measure of dependency:
\begin{definition}
	Let $\rho\mt_{XV}$ be a classical-quantum system over classical $X$ and quantum $V$. The dependency of $V$ on $X$ in $\rho\mt_{XV}$ is defined as
	\[\xi^\rho(X;V)=\min_{\tau\mt_V} \lN{\rho\mt_{XV}-\rho\mt_X\otimes \tau\mt_V}_\Tr\]
	where $\tau\mt_V$ is taken over all density operators on $V$. Notice that in this definition taking $\tau\mt_V=\rho\mt_V$ is almost optimal as we have
    \begin{equation}\label{eq:tauv}
    \lN{\rho\mt_{XV}-\rho\mt_X\otimes \rho\mt_V}_\Tr
		\leq \lN{\rho\mt_{XV}-\rho\mt_X\otimes \tau\mt_V}_\Tr+\lN{\rho\mt_V-\tau\mt_V}_\Tr
		\leq 2\lN{\rho\mt_{XV}-\rho\mt_X\otimes \tau\mt_V}_\Tr.
    \end{equation}
\end{definition}

We also consider the quantum mutual information between $X$ and $V$ in $\rho\mt_{XV}$, defined as
\[\I[\rho]{X}{V} = \S(\rho\mt_X)+\S(\rho\mt_V)-\S(\rho\mt_{XV})=\S\left(\rho\mt_{XV}\midd \rho\mt_X\otimes\rho\mt_V\right),\]
where $\S(\cdot)$ denotes the von Neumann entropy, and $\S(\cdot\midd\cdot)$ denotes quantum relative entropy.
When $V$ consists of $q$ qubits, we have the following relationship between our dependency measure and quantum mutual information:
\begin{lemma}\label{lemma:C1}
	$\displaystyle \frac{1}{2}\xi^\rho(X;V)^2\leq\I[\rho]{X}{V}\leq q\cdot \xi^\rho(X;V)+2\sqrt{\xi^\rho(X;V)}$.
\end{lemma}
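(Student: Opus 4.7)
Since $\rho\mt_V$ is itself a particular choice of $\tau\mt_V$ in the minimization defining $\xi^\rho(X;V)$, the plan is to observe that $\xi^\rho(X;V) \leq \lN{\rho\mt_{XV} - \rho\mt_X \otimes \rho\mt_V}_\Tr$. Combining this with the standard identity $\I[\rho]{X}{V} = \S(\rho\mt_{XV} \midd \rho\mt_X \otimes \rho\mt_V)$ and the quantum Pinsker inequality $\S(\rho\midd\sigma) \geq \frac{1}{2}\lN{\rho - \sigma}_\Tr^2$ yields the lower bound $\frac{1}{2}\xi^\rho(X;V)^2 \leq \I[\rho]{X}{V}$ in one step.

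\textbf{Upper bound.} Since $X$ is classical, I would decompose $\rho\mt_{XV} = \sum_x p(x) \ket{x}\bra{x}\otimes\sigma_x$, where $p(x) = \Tr[\rho\mt_{V|x}]$ and $\sigma_x = \rho\mt_{V|x}/p(x)$. Then $\rho\mt_V = \sum_x p(x)\sigma_x$, and the mutual information reduces to the Holevo quantity
\[
\I[\rho]{X}{V} \;=\; \S(\rho\mt_V) - \sum_x p(x)\S(\sigma_x) \;=\; \sum_x p(x)\bigl[\S(\rho\mt_V) - \S(\sigma_x)\bigr].
\]
I would then apply Audenaert's sharp form of the Fannes inequality to each pair $(\rho\mt_V, \sigma_x)$: since $\dim\mc V = 2^q$, this gives $\S(\rho\mt_V) - \S(\sigma_x) \leq T_x \cdot q + h(T_x)$, with $T_x = \frac{1}{2}\lN{\rho\mt_V - \sigma_x}_\Tr$ and $h$ the binary entropy function. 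The key identity is that $\rho\mt_{XV} - \rho\mt_X\otimes\rho\mt_V$ is block-diagonal in the computational basis of $X$, so $\sum_x p(x)\lN{\sigma_x - \rho\mt_V}_\Tr = \lN{\rho\mt_{XV} - \rho\mt_X \otimes \rho\mt_V}_\Tr$, which is at most $2\xi^\rho(X;V)$ by \eqref{eq:tauv}. Hence $\sum_x p(x) T_x \leq \xi^\rho(X;V)$, and summing the Fannes bounds while invoking concavity of $h$ together with the standard estimate $h(t)\leq 2\sqrt{t}$ on $[0,1]$ delivers $\I[\rho]{X}{V} \leq q\cdot\xi^\rho(X;V) + 2\sqrt{\xi^\rho(X;V)}$.

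\textbf{Main obstacle.} The delicate point is controlling the dimension factor in Fannes. A naive application of Fannes directly to the global states $\rho\mt_{XV}$ and $\rho\mt_X \otimes \rho\mt_V$ would introduce $\log(|\mc X|\cdot 2^q) = n+q$ rather than $q$, contributing a spurious $n\cdot\xi$ term that would ruin the bound. The crucial step is to exploit classicality of $X$ — equivalently, the block-diagonal structure of $\rho\mt_{XV} - \rho\mt_X \otimes \rho\mt_V$ — to rewrite $\I[\rho]{X}{V}$ as a Holevo quantity and then apply Fannes term-by-term only on the $q$-qubit states $\sigma_x$, so that only $q$ (and not $n+q$) appears.
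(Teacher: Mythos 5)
Your proposal is correct and follows essentially the same route as the paper: the lower bound uses quantum Pinsker together with the observation that taking $\tau\mt_V=\rho\mt_V$ is within a factor of two of optimal, and the upper bound rewrites $\I[\rho]{X}{V}$ as a Holevo quantity, applies Fannes--Audenaert term-by-term on the $q$-qubit states, then uses block-diagonality in $X$ to convert $\E_x T_x$ into a trace distance, and concavity of $h$ together with $h(t)\le 2\sqrt{t}$ to finish. The only cosmetic difference is notation ($\sum_x p(x)$ versus $\E_{x\sim X}$ with $2^n\rho\mt_{V|x}$), and your explicit discussion of why the dimension factor is $q$ rather than $n+q$ is a helpful remark that the paper leaves implicit.
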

	\begin{proof}
		On one hand, using the inequality on quantum relative entropy and trace distance (see e.g. \cite[Theorem 1.15]{ohya2004quantum}), we have
		\[\I[\rho]{X}{V}=\S\left(\rho\mt_{XV}\midd \rho\mt_X\otimes\rho\mt_V\right)
		\geq \frac{1}{2}\lN[2]{\rho\mt_{XV}-\rho\mt_X\otimes\rho\mt_V}_\Tr
		\geq \frac{1}{2}\xi^\rho(X;V)^2.\]
		On the other hand, Fannes-Audenaert inequality \cite{audenaert2007sharp} tells us that for every $x\in\mc X$, the difference between the  von Neumann entropies of any two states $\rho$ and $\tau$ on $V$ is bounded by
		\[
		    |\S(\rho)-\S(\tau)|
		    \leq q\cdot \frac{1}{2}\lN{\rho-\tau}_\Tr
		    +h\left(\frac{1}{2}\lN{\rho-\tau}_\Tr\right) 
		\]
		where $h(\epsilon)=-\epsilon\log_2\epsilon-(1-\epsilon)\log_2(1-\epsilon)$ is the binary entropy function. Since the state of $V$ conditioned on $X=x$ is $\rho\mt_{V|x}/\Pr[X=x]=2^n\rho\mt_{V|x}$, we have
		\begin{align*}
			\I[\rho]{X}{V} & = \E_{x\sim X}\left[\S(\rho\mt_V)-\S(2^n\rho\mt_{V|x})\right] \\
			& \leq \frac{1}{2}q\cdot \E_{x\sim X}\lN{\rho\mt_V-2^n\rho\mt_{V|x}}_\Tr
			+\E_{x\sim X} h\left(\frac{1}{2}\lN{\rho\mt_V-2^n\rho\mt_{V|x}}_\Tr\right)  \\
			& \leq \frac{1}{2}q\cdot\lN{\rho\mt_{XV}-\rho\mt_X\otimes \rho\mt_V}_\Tr
			+ h\left(\frac{1}{2}\lN{\rho\mt_{XV}-\rho\mt_X\otimes \rho\mt_V}_\Tr\right)\\
			&  \leq \frac{1}{2}q\cdot\lN{\rho\mt_{XV}-\rho\mt_X\otimes \rho\mt_V}_\Tr
			+ \sqrt{2\lN{\rho\mt_{XV}-\rho\mt_X\otimes \rho\mt_V}_\Tr},
		\end{align*}
		as $h$ is concave and $h(\epsilon)\leq 2\sqrt{\epsilon}$. Now let $\tau_V$ be the optimal density operator in the definition of $\xi^\rho(X;V)$. 
		Plugging in \cref{eq:tauv}, we conclude that
		\[\I[\rho]{X}{V}\leq q\cdot \xi^\rho(X;V)+2\sqrt{\xi^\rho(X;V)}. \qedhere\]
	\end{proof}
	
	\begin{lemma}
		For every classical-quantum system $\rho\mt_{AXV}$ with the $q$-qubit quantum subsystem $V$ that depends only on $A$, we have
		\[\I[\rho]{X}{M(A,X),V}\leq 2(n+q)\cdot 2^{-r/2}.\]
	\end{lemma}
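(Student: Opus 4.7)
The plan is to view $(M(A,X), V)$ as a single $(q+1)$-qubit combined register and bound its $\xi$-dependency on $X$ using the quantum extractor hypothesis, then convert that dependency bound into a mutual-information bound via \cref{lemma:C1}.

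Concretely, I would first observe that $U \otimes \rho_V$ is a valid density operator on the combined register $(M(A,X), V)$, and that $\rho_X$ is the $X$-marginal of $\rho_{M(A,X)XV}$. Therefore, from the definition of $\xi$ as a minimum over all such $\tau$,
\[\xi^\rho(X; M(A,X), V) \leq \lN{\rho_{M(A,X)XV} - \rho_X \otimes (U \otimes \rho_V)}_\Tr \leq 2^{-r},\]
where the second inequality is precisely the defining property of an $X$-strong $(q, r)$-quantum extractor applied to $\rho_{AXV}$ (whose $q$-qubit register $V$ depends only on $A$ by hypothesis). Applying the upper bound of \cref{lemma:C1} to the $(q+1)$-qubit register $(M(A,X), V)$ then gives
\[\I[\rho]{X}{M(A,X), V} \leq (q+1) \cdot \xi^\rho(X; M(A,X), V) + 2\sqrt{\xi^\rho(X; M(A,X), V)} \leq (q+1) \cdot 2^{-r} + 2 \cdot 2^{-r/2}.\]
A short arithmetic check, using $2^{-r} \leq 2^{-r/2}$ together with $q + 3 \leq 2(n+q)$ in the non-trivial regime, bounds the right-hand side by $2(n+q) \cdot 2^{-r/2}$, as desired.

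There is no conceptual obstacle here: the extractor hypothesis is tailored exactly so that the combined register $(M(A,X), V)$ has small $\xi$-dependency on $X$, and \cref{lemma:C1} provides the clean conversion from $\xi$ to $\mathbf{I}$. The only subtle point to track is that the classical output bit $M(A,X)$ contributes one additional qubit of capacity, so one applies \cref{lemma:C1} with coefficient $q+1$ rather than $q$; the resulting slack between $(q+1) 2^{-r} + 2 \cdot 2^{-r/2}$ and the stated bound $2(n+q)\cdot 2^{-r/2}$ is comfortable and absorbs the edge cases.
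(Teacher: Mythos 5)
Your proof is correct and takes a genuinely different (and cleaner) route than the paper's. The paper first reduces $\I[\rho]{X}{M(A,X),V}=\I[\rho]{X}{M(A,X)\mid V}$ (using $\I[\rho]{X}{V}=0$) and then applies the chain rule to upper bound this by $\I[\rho]{M(A,X)}{X,V}$; it then invokes \cref{lemma:C1} with $M(A,X)$ in the role of the classical conditioning variable and $(X,V)$ as the $(n+q)$-qubit register, obtaining a factor $n+q$. You instead apply \cref{lemma:C1} directly, keeping $X$ as the conditioning variable and treating $(M(A,X),V)$ as the $(q+1)$-qubit register, using that the extractor definition is phrased precisely as an upper bound on $\lN{\rho\mt_{M(A,X)XV}-U\otimes\rho\mt_X\otimes\rho\mt_V}_\Tr$, hence immediately on $\xi^\rho(X;M(A,X),V)$. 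Your route skips the chain-rule detour and the side estimate $\lN{\rho\mt_{M(A,X)}-U}_\Tr=|\E[M(A,X)]|\leq 2^{-r}$ that the paper needs, and it produces the strictly tighter bound $(q+1)2^{-r}+2\cdot 2^{-r/2}\leq(q+3)2^{-r/2}$, whereas the paper's own chain of inequalities only yields $5(n+q)\cdot 2^{-r/2}$ (the stated constant $2$ is inconsistent with the paper's proof, and the following lemma's factor $10(n+q)$ confirms they actually use the $5(n+q)$ bound). In particular, your version proves the stated $2(n+q)\cdot 2^{-r/2}$ whenever $2n+q\geq 3$, which covers all non-degenerate parameter regimes. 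The only point worth making explicit when writing it up is that \cref{lemma:C1} requires only that the first register be classical; the second register may itself have a classical part (here the bit $M(A,X)$), and its qubit count is $q+1$.
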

	\begin{proof}
		Since $\I[\rho]{X}{V}=0$, it suffices to bound $\I[\rho]{X}{M(A,X)\mid V}\leq \I[\rho]{M(A,X)}{X,V}$. To bound the later, we first notice that since $M$ is a strong $(q,r)$-quantum extractor,
		\begin{align*}
		    \xi^\rho (M(A,X);X,V) 
		    &\leq \left\|\rho\mt_{M(A,X)XV}-\rho\mt_{M(A,X)}\otimes \rho\mt_X\otimes\rho\mt_V\right\|_\Tr \\
		    &\leq \left\|\rho\mt_{M(A,X)XV}-U\otimes \rho\mt_X\otimes\rho\mt_V\right\|_\Tr +
		    \left|\E[M(A,X)]\right| \\
		    &\leq 2\cdot 2^{-r}.
		\end{align*}
		As the total dimension of $X$ and $V$ is $2^{n+q}$, by \cref{lemma:C1} we have
		\begin{align*}
			\I[\rho]{X}{M(A,X),V} &\leq \I[\rho]{M(A,X)}{X,V} \\
            &\leq (n+q)\cdot \xi^\rho (M(A,X);X,V) +2\sqrt{\xi^\rho (M(A,X);X,V) } \\
            &\leq 5(n+q)\cdot 2^{-r/2}. \qedhere
		\end{align*}
	\end{proof}

	\begin{lemma}
		For every classical-quantum system $\rho\mt_{AXV}$ with $q/2$-qubit quantum subsystem $V$ that depends only on $A$ and $M(A,X)$, we have 
		\[\xi^\rho(X;V)\leq 4\sqrt{n+q}\cdot 2^{-r/4}.\]
	\end{lemma}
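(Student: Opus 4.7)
The plan is to reduce to the previous lemma by embedding $V$ inside a larger auxiliary quantum register that depends only on $A$. Since $V$ in $\rho$ depends only on $(A, M(A,X))$, for each $a \in \mc A$ and $b \in \{-1,1\}$ there is a density operator $\sigma_{a,b}$ on $q/2$ qubits such that $\rho\mt_{V|a,x} = \sigma_{a, M(a,x)}$. I would construct an auxiliary classical-quantum system $\rho'_{AXV'}$ on the same $\mc A, \mc X$, with $V' = V_1 \otimes V_{-1}$ a $q$-qubit register whose state conditioned on $A = a$ is $\sigma_{a,1} \otimes \sigma_{a,-1}$, independent of $X$. By construction, $V'$ in $\rho'$ depends only on $A$, so the previous lemma applies to $\rho'$ with dimension $q$ and gives
\[\I[\rho']{X}{M(A,X), V'} \leq 2(n+q) \cdot 2^{-r/2}.\]

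Next, I would define a quantum channel $\Phi$ on $(M(A,X), V')$ that, controlled by the classical value $M(A,X) = b$, outputs the $b$-th tensor factor of $V'$ and traces out the other. Conditioned on $X = x$ and $A = a$, the output of $\Phi$ is $\sigma_{a, M(a,x)} = \rho\mt_{V|a,x}$, so applying $\Phi$ to $\rho'_{XV'M(A,X)}$ and marginalizing $A$ reproduces exactly the joint state $\rho\mt_{XV}$. The data processing inequality for quantum mutual information then yields
\[\I[\rho]{X}{V} \leq \I[\rho']{X}{M(A,X), V'} \leq 2(n+q) \cdot 2^{-r/2}.\]

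Finally, the Pinsker-type direction of \cref{lemma:C1} (namely $\tfrac{1}{2}\xi^\rho(X;V)^2 \leq \I[\rho]{X}{V}$) gives
\[\xi^\rho(X; V) \leq \sqrt{4(n+q)\cdot 2^{-r/2}} = 2\sqrt{n+q}\cdot 2^{-r/4} \leq 4\sqrt{n+q}\cdot 2^{-r/4},\]
as claimed.

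The only substantive point to verify is that $\Phi$ is a legitimate operation on its inputs: the decision of which tensor factor of $V'$ to keep is governed solely by the classical register $M(A,X)$, without any access to $A$ or $X$, so $\Phi$ is a valid quantum channel and data processing applies. Beyond this bookkeeping, the reduction amounts to paying the $V' \mapsto V$ compression inside the bound already established in the previous lemma, so I do not expect any further obstacle.
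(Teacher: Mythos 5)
Your proposal is correct and takes essentially the same route as the paper: the paper also constructs the $q$-qubit register $W=\rho_{a,1}\otimes\rho_{a,-1}$ depending only on $A$, invokes the previous lemma, and then uses the fact that ``$V$ can be decided from $M(A,X)$ and $W$'' (your channel $\Phi$) together with data processing and the Pinsker-type half of \cref{lemma:C1}. Your write-up is a bit more explicit about why the map $(M(A,X),V')\mapsto V$ is a legitimate channel, but there is no substantive difference.
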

	\begin{proof}
		Let $W=\rho_{a,0}\otimes\rho_{a,1}$, where $\rho_{a,b}$ is the density matrix of $V$ when $A=a$ and $M(A,X)=b$. Then $W$ is a $q$-bit quantum system that depends only on $A$. Since $V$ can be decided from $M(A,X)$ and $W$, we have
		\[\xi^\rho(X;V)^2\leq2\I[\rho]{X}{V}\leq 2\I[\rho]{X}{M(A,X)),W}\leq 10(n+q)\cdot 2^{-r/2}. \qedhere\]
	\end{proof}
 
 We are now ready to prove \cref{thm:C}. Let $\Phi_{t,a,b}$ be the quantum channel applied on $V$ at stage $t$ with sample $(a,b)$, and recall that the evolution of the system $\rhot_{XV}$ can be expressed as
 \[\rho^{(t+1)}_{XV}=\E_{a\sim A}
	\left[\sum_{x\in\mc X}\ket{x}\bra{x}\otimes\Phi_{t,a,M(a,x)}\big(\rhot_{V|x}\big)\right].\]
 \begin{proof}[Proof of \cref{thm:C}]
     We are going to bound the increment of $\xi_t$, which is the shorthand for $\xi^{\rho^{(t)}}(X;V)$. For now let us focus on some stage $t$, and let $\tau$ be the density operator that minimizes $\xi_t=\lN{\rhot_{XV}-\rhot_X\otimes \tau}_\Tr$. Notice that $\rhot_X=\rho\mt_X=2^{-n}\id_X$ for every $t$.
     
     Since $\tau$ is a fixed quantum state, we can prepare $\tau$ and apply $\Phi_{A,M(A,X)}$ on $\tau$ to obtain a new quantum register $V'$, which depends only on $A$ and $M(A,X)$. Notice that
     \[\rho\mt_{XV'}=\E_{a\sim A}
	\left[\sum_{x\in\mc X}\ket{x}\bra{x}\otimes\Phi_{t,a,M(a,x)}(\tau)\right],\]
     and therefore similarly to \cref{lemma:evol} (that evolution does not increase trace distance), we can show that
     \begin{align*}
         \left\|\rho^{(t+1)}_{XV}-\rho\mt_{XV'}\right\|_\Tr
         &\leq \E_{a\sim A}\sum_{x\in\mc X}\left\|\Phi_{t,a,M(a,x)}\big(\rhot_{V|x}\big)-
         \Phi_{t,a,M(a,x)}(\tau)\right\|_\Tr \\
         &\leq \sum_{x\in\mc X}\lN{\rhot_{V|x}-\tau}_\Tr 
         \leq \left\|\rhot_{XV}-\rho\mt_X\otimes\tau\right\|_\Tr=\xi_t.
     \end{align*}
     Hence we have
     \begin{align*}
         \xi_{t+1}&\leq\left\|\rho^{(t+1)}_{XV}-\rho\mt_X\otimes \rho\mt_{V'}\right\|_\Tr \\
         &\leq \left\|\rho^{(t+1)}_{XV}-\rho\mt_{XV'}\right\|_\Tr+
         \lN{\rho\mt_{XV'}-\rho\mt_X\otimes \rho\mt_{V'}}_\Tr \\
         &\leq \xi_t+2\xi^\rho(X;V') \\
         &\leq \xi_t+8\sqrt{n+q}\cdot 2^{-r/4}.
     \end{align*}
     Since $\xi_0=0$, we conclude that
     \[\xi\mt_T\leq 8T\sqrt{n+q}\cdot 2^{-r/4}.\]
     This value bounds the difference of the success probability of $\rho$, and that of a quantum branching program whose memory is independent of $X$. The later is clearly at most $2^{-n}$, which finishes the proof.
 \end{proof}

\end{document}